\documentclass{article}

\usepackage{qraq}

\title{One Qubit Can Beat One Bit: Quantum Advantage for Post-Training Quantization}

\author{Yuma Ichikawa \\
Fujitsu Limited, RIKEN for AIP
\And
Moeto Mishima \\
Fujitsu Limited}

\begin{document}

\maketitle

\begin{abstract}
    One-bit post-training quantization represents each weight using only its sign, requiring all deployment contexts to share the same binary weight matrix even when their activation statistics favor different sign patterns. We study this shared-sign constraint and introduce Quantum Random Access Quantization (QRAQ). This framework encodes context-dependent signs in a quantum random-access code and retrieves them via context-matched Pauli measurements. Under an explicit fresh-copy logical readout model, QRAQ produces an unbiased, context-specific binary surrogate with a tractable shot-noise penalty. We prove a row-wise separation from shared-sign one-bit PTQ with signed per-row scales. When the optimal context-wise signs are incompatible, QRAQ achieves a strictly lower ideal reconstruction risk. We also derive finite-shot and calibrated-noise conditions under which this separation is retained. Fixed-readout quantum schemes are classically simulable, so the relevant resource in this model is measurement incompatibility rather than quantization alone. Finally, we characterize the role of scale granularity, provide finite-sample certificates, and evaluate the predicted ideal, finite-shot, noisy, and multi-context regimes in simulator experiments.
\end{abstract}

\section{Introduction}\label{sec:intro}

Post-training quantization (PTQ) compresses foundation models by replacing each linear layer with a low-bit surrogate that is calibrated to match the full-precision outputs of layers on representative activations \citep{touvron2023llama,grattafiori2024llama,gong2024survey,frantar2022gptq,lin2024awq,chee2023quip,tseng2024quip,ashkboos2024quarot}. From the perspective of machine learning, this is the standard layer-wise reconstruction problem: selecting a compact codebook such that $\hat{W} X$ remains close to $WX$. From the perspective of quantum information, it is a finite-dimensional quadratic projection problem under a stringent memory constraint. In the extreme one-bit-per-weight setting, each memory cell can store only a sign; the same sign matrix must be reused across all deployment contexts. This constraint is particularly limiting when a single layer is invoked under different attention heads, routing patterns, task heads, prompt clusters, or token regimes, each of which may induce activation covariances that favor different signs for the same row.

The research question is intentionally forward-looking: if future deployment hardware could allocate one logical qubit per weight and support calibrated repeated state preparation, could that qubit serve as a more effective context-dependent sign memory than a classical bit? Quantum random access codes (QRACs) provide a natural framework for studying this question. A QRAC encodes multiple classical bits into a single quantum state and retrieves a requested bit, with noise, by selecting the corresponding measurement \citep{ambainis1999dense,ambainis2009quantum,nayak1999optimal,farkas2025bounds}. Its role is not to reveal multiple bits simultaneously. Rather, incompatible measurements can reveal different classical functions of the same stored state \citep{carmeli2020quantum,heinosaari2016invitation,designolle2019incompatibility}.

Quantum Random Access Quantization (QRAQ) brings this mechanism to one-bit PTQ. Each weight slot stores a single logical QRAC register, and a context label selects the corresponding Pauli measurement. By averaging fresh readouts, QRAQ obtains an unbiased estimate of the context-specific binary surrogate at an explicit shot-noise cost. In this way, QRAQ preserves the algebraic structure of one-bit signed-scale quantization while replacing the globally shared classical sign with a measurement-selected sign. The construction should therefore be understood as a logical readout model, not as a claim of immediate hardware acceleration: it asks what a qubit could offer in terms of quantization if the required state-preparation and measurement primitives were available.

\begin{takeawaybox}[Main takeaway]
    \begin{tabularx}{\linewidth}{@{}YYY@{}}
        \textbf{Bottleneck} & \textbf{Resource} & \textbf{Certificate} \\
        \addlinespace[2pt]
        A shared-sign bit must serve all contexts. & Incompatible measurements read different signs from one logical QRAC state.
        & A row-wise gap appears exactly when the context-wise optimal sign sets have no common representative, and it survives when this gap exceeds shot noise.
    \end{tabularx}
\end{takeawaybox}

For signed per-row scales, ideal QRAQ attains the average of the context-wise binary optima, whereas a classical one-bit PTQ baseline must select a single sign vector shared across all contexts. The resulting row-wise gap is therefore nonnegative, and it is strictly positive exactly when the context-wise optimal sign sets admit no common representative. Finite-shot and noise-robust variants follow by comparing this ideal gap against the corresponding variance-inflation term. Thus, the theory isolates the obstruction, identifies the role of measurement incompatibility, and gives a computable certificate for when QRAQ can improve over a shared classical sign.

Our contributions are as follows:
\begin{itemize}
    \item We formulate context-aware one-bit PTQ and identify the shared-sign constraint addressed by QRAQ.
    \item We prove a strict row-wise separation from shared-sign one-bit PTQ under signed per-row scales and extend the result to finite-shot estimation, depolarizing noise, and admissible Pauli-noise thresholds.
    \item We characterize the scale granularities under which the separation is preserved, derive finite-sample certificates from calibration data, and empirically validate the predicted ideal, finite-shot, noisy, and multi-context regimes in simulation.
\end{itemize}
The main text focuses on the core mechanism and key formulas; Appendix~\ref{app:proofs} provides the formal assumptions and complete proofs.

\begin{figure}[tb]
    \centering
    \includegraphics[width=\linewidth]{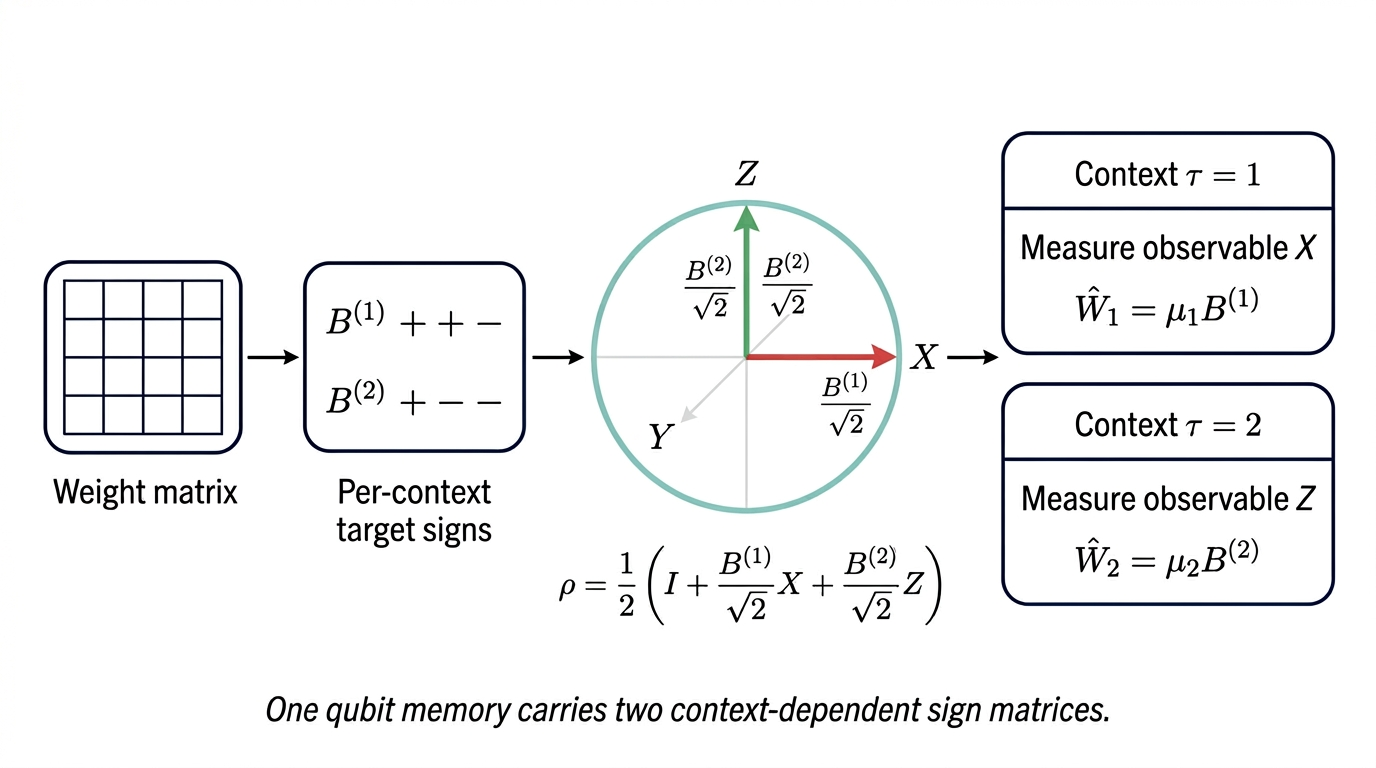}
    \caption{QRAQ workflow for two contexts. Each weight entry stores one qubit whose $X$ and $Z$ readouts encode two target sign matrices, and the context selects the matched Pauli measurement before applying a per-row scale.}
    \label{fig:workflow}
\end{figure}

\section{Related Work}\label{sec:related}

\paragraph{Layer-wise and one-bit PTQ.}
Modern PTQ methods compress trained models without retraining by solving local reconstruction problems for individual linear layers \citep{frantar2022gptq,lin2024awq,chee2023quip,tseng2024quip,ashkboos2024quarot}. Recent approaches further refine this principle using error propagation, activation-aware weighting, incoherence, rotations, or later-layer matching \citep{arai2025quantization,li2025gptaq,zhang2025qronos,lin2025loaq}. The extreme one-bit setting builds on the lineage of binary and XNOR-style networks \citep{hubara2016binarized,rastegari2016xnor}. It has recently been extended to large language models through BitNet-style architectures and post-training binarization methods \citep{wang2023bitnet,ma2024era,huang2024billm,xu2024onebit,li2024arb,shang2023pb}. These methods store a single classical sign per weight, together with shared or grouped scales. QRAQ preserves the same signed-scale reconstruction objective but changes the underlying memory primitive: the stored sign is no longer a single context-independent classical bit.

\paragraph{Context awareness in quantization.}
Activation-aware PTQ reflects the fact that the optimal low-bit approximation depends on the input distribution. In standard practice, however, the calibration set is typically compressed into a single covariance estimate per layer. We instead partition the calibration data into deployment contexts, such as attention heads, routing patterns, tasks, or prompt clusters. This is a targeted extension of PTQ rather than a new training objective: all baselines and risks remain layer-wise quadratic reconstruction problems. The key distinction is that a classical one-bit quantizer must use the same sign matrix across all contexts, whereas QRAQ can retrieve context-dependent signs from the same logical quantum state.

\paragraph{QRACs and quantum learning.}
QRACs were introduced in communication complexity \citep{ambainis1999dense}, bounded by Nayak-type information-theoretic limits \citep{nayak1999optimal,farkas2025bounds}, and linked to measurement incompatibility \citep{carmeli2020quantum,heinosaari2016invitation,designolle2019incompatibility,lin2025getting}. Related work in quantum learning establishes quantum-over-classical sample-complexity separations for learning properties of physical processes \citep{huang2021information,huang2022quantum}. Our results differ in both the object and the metric: they provide finite-dimensional reconstruction-risk separation for model quantization. The paper, therefore, lies at the interface of PTQ and QRACs. We use measurement incompatibility as a quantization resource and certify its value using the same calibration objective that underlies classical PTQ.

\section{Preliminaries}\label{sec:prelim}

Appendix~\ref{app:quantum-bg} provides a self-contained quantum-information primer for ML readers, while
Appendix~\ref{app:ptq-bg} provides a self-contained PTQ primer for quantum-information readers.

\paragraph{Notation.}
We write $\lVert A\rVert_F$, $\Tr(A)$, $A^\top$, $A_{i,:}$, and $A_{:,j}$ for the
Frobenius norm, trace, transpose, $i$-th row, and $j$-th column of a matrix $A$,
respectively. Let $I_d$ denote the $d\times d$ identity matrix, $J$ the all-ones
matrix, $\mathrm{sgn}(\cdot)$ the entry-wise sign function, and $\mathrm{diag}(v)$
the diagonal matrix with diagonal entries $v$. We use $A\succeq 0$ to denote positive
semidefiniteness. Unless stated otherwise, expectations are taken over both the
context prior to $\pi$ and the measurement randomness.

\paragraph{Layer-wise PTQ and the one-bit class.}
PTQ compresses a full-precision weight matrix $W\in\mab{R}^{N\times M}$ by replacing it with a surrogate $\hat{W}$ whose entries lie in a codebook
\citep{gong2024survey,frantar2022gptq,lin2024awq,chee2023quip,tseng2024quip,ashkboos2024quarot,dettmers2022gpt3,hubara2016binarized,jacob2018quantization,esser2019learned,wang2023bitnet,ma2024era}. For a calibration activation matrix $X$, the standard layer-wise reconstruction error is
\begin{equation}
    \risk(W,\hat{W};X) = \lVert WX-\hat{W} X\rVert_F^2 = \Tr\bigl((W-\hat{W})XX^\top(W-\hat{W})^\top\bigr).
    \label{eq:single-layer-err}
\end{equation}
This quadratic objective is the canonical local criterion used in PTQ
\citep{frantar2022gptq,lin2024awq,arai2025quantization,lin2025loaq,zhang2025qronos};
Appendix~\ref{app:ptq-bg} recalls the standard Lipschitz propagation argument that relates this layer-wise error to end-to-end model error.
A one-bit quantizer represents each weight by a sign in $\{\pm 1\}$, optionally
multiplied by a scale from a scale class $\mac{S}\subseteq\mab{R}$:
\begin{equation}
    \hat{W}^{\mathrm{C}}_{\tau,ij} = \alpha_{\tau,g(i,j)} C_{ij}, \qquad C\in\{\pm 1\}^{N\times M}, \alpha_{\tau,g}\in\mac{S}.
    \label{eq:classical-ptq}
\end{equation}
Here, $g:[N]\times[M]\to[G]$ specifies the scale granularity: for example, $G=1$
corresponds to per-tensor scaling, $G=N$ corresponds to per-row scaling, $G=M$ corresponds to per-column
scaling, $G=NM/b$ corresponds to per-group scaling, and $G=NM$ corresponds to per-entry scaling. Modern one-bit baselines
\citep{wang2023bitnet,ma2024era,huang2024billm,xu2024onebit,li2024arb,shang2023pb} typically use signed per-row scaling, denoted $\mac{S}_{\mathrm{row}\pm}$, or a coarser
scale class. The crucial constraint is that the sign matrix $C$ is shared across all contexts.

\paragraph{Context-aware PTQ.}
A context $\tau\in\{1,\dots,K\}$ indexes a deployment regime with a distinct activation distribution, for example an attention head, MoE route, task head, or prompt cluster. It has prior $\pi=(\pi_1,\dots,\pi_K)$ and per-context calibration covariance $\Sigma_\tau\coloneqq\mab{E}[X_\tau X_\tau^\top]\succeq 0$. This context-aware extension is not a standard PTQ baseline, but it is the minimal way to expose the failure mode that motivates the paper: the activation covariance entering the reconstruction metric may change with the deployment regime, so the best binary signs may also change. A context-aware quantizer returns a surrogate $\hat{W}_\tau$ and is evaluated by
\begin{equation}
    \risk\bigl(W,\{\hat{W}_\tau\};\{\Sigma_\tau,\pi_\tau\}\bigr) =\sum_{\tau=1}^K\pi_\tau \Tr\bigl((W-\hat{W}_\tau)\Sigma_\tau(W-\hat{W}_\tau)^\top\bigr),
    \label{eq:context-risk}
\end{equation}
which is our objective. Classical one-bit context awareness can adjust the scales $\alpha_\tau\in\mac{S}$, but still shares $C$ across $\tau$. QRAQ, introduced in Section~\ref{sec:method}, uses one logical QRAC register per weight and a context-matched measurement to produce a context-dependent sign. Assumption~\ref{ass:fresh-copy} formalizes the finite-shot readout model.

\paragraph{Qubits and Pauli measurements.}
A qubit state is a density matrix $\rho\in\mab{C}^{2\times 2}$ with $\rho=\rho^\dagger$,
$\rho\succeq 0$, and $\Tr(\rho)=1$~\citep{nielsen2010quantum}; equivalently, it admits the Bloch form
\begin{equation}
    \rho=\frac{1}{2}(I+r_xX+r_yY+r_z Z),~\lVert r\rVert_2\le 1,~
    X = \begin{pmatrix}0&1\\1&0\end{pmatrix},
    Y = \begin{pmatrix}0&-i\\i&0\end{pmatrix},
    Z = \begin{pmatrix}1&0\\0&-1\end{pmatrix},
    \label{eq:bloch}
\end{equation}
with Pauli matrices $X,Y,Z$ being Hermitian, traceless, pairwise anticommuting, and squaring to $I$. For a self-inverse observable $A$ and state $\rho$, a binary projective measurement
returns $m \in \{\pm 1\}$ with
\begin{equation}
    \mab{E}[m\mid\rho,A]=\Tr(\rho A),\qquad \Var[m\mid\rho,A]=1-\Tr(\rho A)^2, 
    \label{eq:expect-observable}
\end{equation}
the most general measurement is a POVM $\{E_y\}$ with $\Pr(y\mid\rho)=\Tr(\rho E_y)$.
The qubit depolarizing channel $\mac{N}_\eta(\rho)=\eta\rho+(1-\eta)I/2$ shrinks each Pauli
expectation by $\eta\in[0,1]$; on a $d$-dimensional logical register we use
$\mac{N}_\eta(\rho)=\eta\rho+(1-\eta)I_d/d$. This is our default noise model. Appendix~\ref{app:quantum-bg} expands tensor products, Jordan–Wigner strings~\citep{jordan1928pauli}, and asymmetric Pauli channels.

\section{Method: Quantum Random Access Quantization}\label{sec:method}

Our quantizer, Quantum Random Access Quantization (\textbf{QRAQ}), has three separable stages. First, calibration chooses per-context target sign matrices $\{B^{(\tau)}\}_{\tau=1}^K\subset\{\pm 1\}^{N\times M}$. Second, these signs are compiled into one logical QRAC register per weight entry. Third, inference applies a context-specific Pauli measurement and multiplies the readout by a classical scale, as illustrated in Figure~\ref{fig:workflow}.

\subsection{QRAC encoding and context-aware measurement}\label{subsec:qrac-encoding}

For the two-context case $K=2$, pick $B^{(1)},B^{(2)}\in\{\pm 1\}^{N\times M}$ and store,
per weight coordinate $(i,j)$, the single-qubit state
\begin{equation}
    \rho_{ij}=\frac{1}{2}\Bigl(I+\frac{B^{(1)}_{ij}}{\sqrt{2}} X+\frac{B^{(2)}_{ij}}{\sqrt{2}} Z\Bigr).
    \label{eq:two-context-qrac}
\end{equation}
Because $X$ and $Z$ anticommute, $\rho_{ij}\succeq 0$ for every sign pair, and the Bloch radius
$1/\sqrt{2}$ is the largest consistent with positivity (Lemma~\ref{lem:positivity}). Thus, the qubit encodes both $B^{(1)}_{ij}$ and $B^{(2)}_{ij}$ inside a single logical memory cell, while each context only queries one component.

At inference time, context $\tau\in\{1,2\}$ selects $P_1=X$ or $P_2=Z$. The matched measurement produces $m^{(\tau)}_{ij}\in\{\pm 1\}$ with $\mab{E}[m^{(\tau)}_{ij}]=B^{(\tau)}_{ij}/\sqrt{2}$ by Lemma~\ref{lem:noise-stats}. Scaling the outcome by a raw scale $\gamma_{\tau,g(i,j)}$ defines
\begin{equation}
    \hat{W}^{\mathrm{Q}}_{\tau,ij}=\gamma_{\tau,g(i,j)} m^{(\tau)}_{ij},\quad \mab{E}\bigl[\hat{W}^{\mathrm{Q}}_{\tau,ij}\bigr]=\mu_{\tau,g(i,j)} B^{(\tau)}_{ij},\quad \mu_{\tau,g}\coloneqq c \gamma_{\tau,g},\quad c=\frac{1}{\sqrt{2}}.
    \label{eq:what-qrac}
\end{equation}
QRAQ therefore has the same sign-scale algebraic form as a classical one-bit quantizer, except that $B^{(\tau)}$ is context-dependent rather than shared.

\paragraph{General $K$-context and multi-qubit construction.}
For any integer $K\ge 2$ and any pairwise anticommuting self-inverse observables
$\{A_\tau\}_{\tau=1}^K$ on $n$ qubits, store
\begin{equation}
    \rho_b^{(K)}=\frac{1}{2^n}\Bigl(I+c\sum_{\tau=1}^K b_\tau A_\tau\Bigr),\qquad b\in\{\pm 1\}^K,\qquad c=\frac{1}{\sqrt{K}},
    \label{eq:K-context-qrac}
\end{equation}
which is positive semidefinite if and only if $c\le 1/\sqrt{K}$ (Lemma~\ref{lem:positivity}).
The minimum number of qubits that host $K$ pairwise anticommuting self-inverse observables is
$n=\lceil(K-1)/2\rceil$ (Theorem~\ref{thm:multi-qubit}). The Jordan–Wigner family described in Appendix~\ref{subapp:tensor-jordan} realizes any such $K$~\citep{jordan1928pauli}. Single-qubit constructions with $\{X,Z\}$ ($K=2$) and $\{X,Y,Z\}$ ($K=3$) give $c=1/\sqrt{2}$ and $c=1/\sqrt{3}$, respectively.

\subsection{Finite-shot noise and QRAQ risk}\label{subsec:noise}

\begin{assumption}[Fresh-copy readout model]\label{ass:fresh-copy}
    All $S\ge 1$ measurement shots used to estimate a single weight entry are performed on
    independent identically prepared copies of the corresponding QRAC state, or equivalently on
    a device that can re-prepare the same QRAC state before each shot. Thus the finite-shot
    risk in~Eq.~\eqref{eq:qrac-risk} is a per-query risk under a fresh-copy or re-preparable-memory
    model. If physical storage rather than logical QRAC state size is counted, then $S$
    simultaneous shots require $S$ physical preparations per weight entry unless state re-preparation is available.
\end{assumption}

Each readout is averaged over $S$ independent shots in the sense of
Assumption~\ref{ass:fresh-copy}, and the logical QRAC register undergoes a depolarizing channel $\mac{N}_\eta$ (Section~\ref{sec:prelim}), which shrinks the matched Pauli expectation by $\eta\in(0,1]$. From this point on, the calibrated scale is reparameterized as $\mu_{\tau,g}\coloneqq \eta c_K\gamma_{\tau,g}$; in the ideal case $\eta=1$, this agrees with Eq.~\eqref{eq:what-qrac}. Lemma~\ref{lem:noise-stats}, proved in Appendix~\ref{subapp:noise-derivation}, shows that
\begin{equation}
    \mab{E}\bigl[\hat{W}^{\mathrm{Q}}_{\tau,ij}\bigr]=\mu_{\tau,g(i,j)} B^{(\tau)}_{ij},\qquad \Var\bigl(\hat{W}^{\mathrm{Q}}_{\tau,ij}\bigr)=\frac{\mu_{\tau,g(i,j)}^2}{S} \nu_K(\eta),\qquad \nu_K(\eta)\coloneqq\frac{K}{\eta^2}-1.
    \label{eq:noise-stats}
\end{equation}
Under this independent depolarizing model, the single coefficient $\nu_K(\eta)$ is the only channel through which hardware non-ideality enters the finite-shot thresholds. In particular, $\nu_2(1)=1$ and $\nu_3(1)=2$. Different weight slots carry independent qubits, so $\Cov(\hat{W}^{\mathrm{Q}}_{\tau})$ is diagonal (Lemma~\ref{lem:indep}); correlated cross-qubit noise is treated in Theorem~\ref{thm:correlated}.

Substituting~Eq.~\eqref{eq:what-qrac}--Eq.~\eqref{eq:noise-stats} into~Eq.~\eqref{eq:context-risk} and
applying the bias--variance decomposition from Lemma~\ref{lem:bv-decomp} yields the QRAQ risk
\begin{equation}
    \Eq\bigl(W,\{B^{(\tau)},\mu_\tau\}\bigr) =\sum_{\tau=1}^K\pi_\tau \Bigl[R_\tau(W,Q_\tau)+\frac{\nu_K(\eta)}{S}\sum_{i,j}\mu_{\tau,g(i,j)}^2 (\Sigma_\tau)_{jj}\Bigr],
    \label{eq:qrac-risk}
\end{equation}
with $Q_{\tau,ij}=\mu_{\tau,g(i,j)} B^{(\tau)}_{ij}$ and
$R_\tau(W,Q_\tau)=\Tr\bigl((W-Q_\tau) \Sigma_\tau (W-Q_\tau)^\top\bigr)$. The classical one-bit risk is $\Ec^{\mac{S}}(W,C,\{\alpha_\tau\})=\sum_\tau\pi_\tau R_\tau(W,\alpha_\tau\odot C)$ with $(\alpha_\tau\odot C)_{ij}=\alpha_{\tau,g(i,j)} C_{ij}$. The key structural difference is that $C$ is shared across $\tau$, while $B^{(\tau)}$ is not.

\paragraph{Resource-fair comparison.}
Under Assumption~\ref{ass:fresh-copy}, $S$ state preparations physically consume $S$ quantum resources per weight entry. For a fair resource comparison, the classical quantizer should be permitted $S$ bits of memory per weight entry. If $S\geq K$, the classical quantizer can store a separate sign for each context, so the shared-sign constraint disappears. Quantum advantage thus requires $S < K$; for $K=2$, the only non-trivial regime $S=1$. The corresponding QRAQ risk is obtained by substituting $S=1$ into Eq.~\eqref{eq:qrac-risk}.

\paragraph{Scale-granularity lattice.}

Our analysis uses the partial order on scale classes induced by feasible-set inclusion:
\begin{align}
    &\mac{S}_{\mathrm{tensor}\pm}\subseteq\mac{S}_{\mathrm{row}\pm}\subseteq\mac{S}_{\mathrm{row\times col}\pm}\subseteq\mac{S}_{\mathrm{entry}\pm},\\
    &\mac{S}_{\mathrm{tensor}\pm}\subseteq\mac{S}_{\mathrm{col}\pm}\subseteq\mac{S}_{\mathrm{row\times col}\pm}\subseteq\mac{S}_{\mathrm{entry}\pm},\\
    &\mac{S}_{\mathrm{tensor}\pm}\subseteq\mac{S}_{\mathrm{group}(g)\pm}\subseteq\mac{S}_{\mathrm{entry}\pm}.
\end{align}
Here $\mac{S}_{\mathrm{group}(g)\pm}$ denotes a fixed partition $g$; such a group class is comparable to row or column scaling only when the partition refines or coarsens the corresponding rows or columns. Thus $\mac{S}_{\mathrm{row}\pm}$ and
$\mac{S}_{\mathrm{col}\pm}$ are generally incomparable. The subscript $\pm$ marks signed vs. non-negative scales; Appendix~\ref{subapp:lattice} records the precise definitions. Modern one-bit quantizers~\citep{frantar2022gptq,lin2024awq,wang2023bitnet,huang2024billm}
exist in $\mac{S}_{\mathrm{row}\pm}$ or coarser.

\section{Theoretical Guarantees}\label{sec:theory}
This section states the guarantees and explains their meanings. Appendix~\ref{app:proofs} contains the formal hypotheses, exact constants, and proofs.

\paragraph{Fixed-readout no-go.}\label{par:no-go}
A quantum memory is not useful for QRAQ by itself. If every context reads the stored state with the same measurement, that measurement produces one classical outcome per weight slot, and the outcome can be sampled by a classical stochastic decoder.

\begin{theorem}[Informal: fixed-readout classical simulability]\label{thm:no-go}
    Any quantum quantizer that uses a fixed POVM in every context is exactly simulable by a classical stochastic quantizer with the same outcome alphabet and decoder family. With a sign-symmetric binary decoder, this simulator is a stochastic signed one-bit quantizer; with a non-symmetric decoder, it is an affine one-bit quantizer with a zero-point; with more outcomes, it is a larger classical codebook. Thus any QRAQ advantage must come from context-dependent, incompatible measurements. The formal statement is Theorem~\ref{thm:no-go-formal}.
\end{theorem}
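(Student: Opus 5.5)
The proof is a direct construction: the Born-rule distribution of the fixed POVM defines a classical random variable, and the decoder is applied to that variable instead of to the quantum outcome.

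\emph{Setup.} Fix the stored states $\{\rho_{ij}\}$, prepared by whatever compiler the quantizer uses from calibration data. Fix one POVM $\{E_y\}_{y\in\mathcal{Y}}$, applied in every context. The quantizer's output in context $\tau$ is $\hat W_{\tau,ij}=D_{\tau,g(i,j)}(y_{ij})$ for decoders $D_\tau$ from some family. The scale $\gamma_\tau$ may depend on $\tau$, but the measurement may not.

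\emph{Step 1: classical simulator.} Define the classical stochastic memory by the distribution $p_{ij}(y)\coloneqq\Tr(\rho_{ij}E_y)$ on $\mathcal{Y}$. It is a valid probability distribution because $E_y\succeq 0$, $\sum_y E_y=I$, and $\Tr\rho_{ij}=1$. The simulator samples $\tilde y_{ij}\sim p_{ij}$ and outputs $D_{\tau,g(i,j)}(\tilde y_{ij})$. Sampling is independent across slots, and across shots under Assumption~\ref{ass:fresh-copy}. The quantum and classical laws of the vector of outcomes then coincide entry by entry and jointly. This uses the product-state structure of Lemma~\ref{lem:indep}, or, for correlated registers, the joint Born distribution of the product POVM. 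Because the context enters only through $D_\tau$, applied after sampling, the joint law of $\{\hat W_\tau\}_\tau$ is identical. Therefore every risk functional, in particular Eq.~\eqref{eq:context-risk} and its finite-shot form Eq.~\eqref{eq:qrac-risk}, takes the same value for the quantum quantizer and its simulator. Depolarizing or Pauli noise is absorbed by replacing $\rho$ with $\mathcal{N}_\eta(\rho)$ before defining $p_{ij}$.

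\emph{Step 2: classify by decoder type.} For a binary alphabet $\{\pm1\}$, any decoder has the form $D(y)=a+by$.
\begin{itemize}
\item If $D$ is sign-symmetric ($a=0$), the output is $\gamma_{\tau,g}\,\tilde y_{ij}$ with a random sign $\tilde y_{ij}$ whose law does not depend on $\tau$. This is a stochastic signed one-bit quantizer. Its sign is drawn once and shared across contexts, or redrawn per query but from a context-independent law.
\item If $a\ne 0$, the output is an affine one-bit quantizer with a zero-point.
\item If $|\mathcal{Y}|>2$, the output takes values in a classical codebook $\{D_\tau(y)\}_{y}$ of size $|\mathcal{Y}|$.
\end{itemize}
The last sentence of the theorem follows by contraposition. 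A quantum quantizer that beats every classical stochastic quantizer of the same resource class must use measurements that differ across contexts. Moreover, these measurements must not be jointly measurable: if they admitted a parent POVM, Step 1 would apply to the parent with context-dependent classical post-processing.

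\emph{Main obstacle.} The delicate point is the resource bookkeeping, not the simulation itself. The simulator stores a distribution $p_{ij}$, i.e.\ real parameters, rather than a single bit. The statement must therefore be read at the level of the output law and decoder family: the claim is "the same outcome alphabet and decoder family" rather than "the same memory." The plan is to make this explicit in the formal version, Theorem~\ref{thm:no-go-formal}. There we compare against stochastic classical quantizers. For the risk comparison, we note that for the quadratic objective a stochastic signed quantizer is no better than the best deterministic one plus a nonnegative variance term, via Lemma~\ref{lem:bv-decomp}. Hence a fixed-readout quantizer cannot beat shared-sign one-bit PTQ with the same scale class. This last reduction needs care when signed scales absorb the bias $\Tr(\rho Z)$: the mean $\gamma\,\mathbb{E}[\tilde y]$ is itself a shared-sign times signed-scale object only at per-entry granularity. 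We will state the comparison against the matching stochastic class rather than the deterministic class, and avoid that gap.
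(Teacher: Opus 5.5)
Your proposal is correct and follows the paper's proof of Theorem~\ref{thm:no-go-formal} essentially step for step: a per-slot Born-rule sampler with law $\Tr(\rho_{ij}E_y)$, equality of joint laws (and hence of risks) via the product structure of Lemma~\ref{lem:indep}, the decomposition of a binary decoder into a zero-point plus a signed scale times a sign, and, as in the paper's Step~3, a comparison only against the matching stochastic class. On your side remark: for a single binary outcome the reduction to deterministic shared-sign PTQ follows directly from $\mab{E}[F(\tilde y)]\ge\min_{C}F(C)$, where $F(C)=\sum_\tau\pi_\tau R_\tau(W,\gamma_\tau\odot C)$ with the scales held fixed and $C$ ranging over $\{\pm1\}^{N\times M}$, so no bias--variance argument is needed; for $S>1$ averaged shots your caution is warranted, because the averaged outcome is no longer binary and a fixed readout can then approach per-entry expressiveness.
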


\paragraph{Advantage criterion.}\label{par:sufficient}
The comparison is between two finite-dimensional calibration objectives. QRAQ may choose one sign matrix per context but pays a variance penalty. The classical baseline chooses one shared sign matrix and noiseless scales.

\begin{theorem}[Informal: quantitative sufficient condition]\label{thm:sufficient}
    QRAQ strictly improves on a classical shared-sign scale class whenever the best context-dependent signed surrogate, along with its finite-shot variance term, has smaller calibration risk than the best shared-sign classical surrogate. This condition is directly computable from calibrated weights, covariances, signs, scales, shot budget, and noise coefficient. The formal statement is Theorem~\ref{thm:sufficient-formal}.
\end{theorem}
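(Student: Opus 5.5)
The statement amounts to comparing two minima of explicit finite-dimensional objectives, so I will reduce the informal claim to a chain of inequalities between optimal values. First I fix a scale class $\mac{S}$ for the classical baseline and define the classical optimum
\[
\Ec^{\star}\coloneqq \min_{C\in\{\pm1\}^{N\times M},\,\{\alpha_\tau\}\subset\mac{S}}\sum_{\tau}\pi_\tau R_\tau(W,\alpha_\tau\odot C).
\]
This minimum is attained: $C$ ranges over a finite set, and for fixed $C$ each $R_\tau$ is a convex quadratic in the scales. I will assume $\mac{S}$ is closed, or else replace the minimum by an infimum and argue with $\varepsilon$-slack. On the quantum side I fix any feasible configuration $\{B^{(\tau)},\mu_\tau\}$ together with the shot budget $S$ and the noise level $\eta$. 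Its risk is given exactly by Eq.~\eqref{eq:qrac-risk}. That formula is already justified by the statistics in Eq.~\eqref{eq:noise-stats} (Lemma~\ref{lem:noise-stats}), the diagonal covariance across slots (Lemma~\ref{lem:indep}), and the bias--variance decomposition (Lemma~\ref{lem:bv-decomp}).

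The formal theorem will then say the following. If there exist $\{B^{(\tau)},\mu_\tau\}$ with
\[
\sum_\tau \pi_\tau R_\tau(W,Q_\tau)+\frac{\nu_K(\eta)}{S}\sum_\tau\pi_\tau\sum_{i,j}\mu_{\tau,g(i,j)}^2(\Sigma_\tau)_{jj}<\Ec^\star,
\]
then $\min \Eq<\min\Ec$. The proof is essentially immediate: the left-hand side equals $\Eq(W,\{B^{(\tau)},\mu_\tau\})$ by Eq.~\eqref{eq:qrac-risk}, it is an upper bound on the QRAQ optimum, and the right-hand side is the classical optimum. Here I must check one point carefully, namely that the reparameterization $\mu=\eta c_K\gamma$ is a bijection for $\eta>0$. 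That guarantees every $\mu$ in the scale class is realizable by some raw scale $\gamma$, so the QRAQ feasible set is indexed by $(B,\mu)$ over the same class $\mac{S}$.

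To give the \emph{computable} part of the claim, I will rewrite the condition as a gap inequality,
\[
\Delta\coloneqq \Ec^\star-\sum_\tau\pi_\tau R_\tau(W,Q_\tau) \;>\; \frac{\nu_K(\eta)}{S}V, \qquad V=\sum_\tau\pi_\tau\sum_{i,j}\mu_{\tau,g(i,j)}^2(\Sigma_\tau)_{jj}.
\]
Every quantity in it is a deterministic function of $W$, $\Sigma_\tau$, $\pi$, the signs, the scales, $S$ and $\eta$. When the scales are per-row, both sides decompose over rows, because $R_\tau$ is a sum of row quadratics $ (w_i-q_i)^\top\Sigma_\tau(w_i-q_i)$. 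This row-wise decomposition sets up the later separation result.

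The main obstacle is making the computability honest. $\Ec^\star$ involves a minimization over $2^{NM}$ sign matrices, which is combinatorial. I will handle this in two ways. First, for fixed $C$ the optimal scales have a closed form: a least-squares solution per group, for example $\alpha_{\tau,i}=c_i^\top\Sigma_\tau w_i/c_i^\top\Sigma_\tau c_i$ in the per-row case. Second, the comparison needs only a \emph{lower} bound on $\Ec^\star$. Any such lower bound, for instance the exact row-wise minimum when $M$ is small, or a relaxation, still certifies the strict inequality. I will state the theorem with $\Ec^\star$ exactly and remark that replacing it by a lower bound preserves sufficiency.
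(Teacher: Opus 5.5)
Your proposal is correct and takes the same route as the paper. Plug any feasible $\{B^{(\tau)},\mu_\tau\}$ into Eq.~\eqref{eq:qrac-risk} (justified by Lemma~\ref{lem:bv-decomp}) to upper-bound the QRAQ optimum, and note that the right-hand side is the classical optimum by definition. Your extra remarks on lower-bound relaxations and the bijection $\mu=\eta c_K\gamma$ are sound, while the attainment and $\varepsilon$-slack discussion is unnecessary, since a strict inequality against the classical infimum already gives the conclusion.
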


\paragraph{Main separation: signed per-row scales.}\label{par:main-thm}
Signed per-row scales are the central one-bit PTQ regime: each output channel has its own scale, but all contexts still share the same binary signs. QRAQ only removes that shared-sign constraint.

\begin{theorem}[Informal: row-wise separation]\label{thm:main}
    For signed per-row scales, the ideal QRAQ row risk is never worse than the best shared-sign classical row risk. It is strictly better exactly when the context-wise optimal sign sets for that row have no common representative, modulo the global sign symmetry. At finite shots, the same row maintains a strict advantage whenever its ideal gap exceeds the explicit shot-noise inflation. If the sign-disagreement condition fails, no finite-shot advantage is possible for that row. The formal statement is Theorem~\ref{thm:main-formal}.
\end{theorem}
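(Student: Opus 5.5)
Under signed per-row scales the risk splits into independent per-row terms, and I would argue row by row. Because the risk in Eq.~\eqref{eq:context-risk} is a trace of $(W-\hat W_\tau)\Sigma_\tau(W-\hat W_\tau)^\top$, it equals a sum over rows $i$ of $\sum_\tau \pi_\tau (w_i-q_{\tau,i})^\top\Sigma_\tau(w_i-q_{\tau,i})$. Per-row scales couple nothing across rows. For a sign vector $b\in\{\pm1\}^M$ and context $\tau$, set
\[
f_\tau(b)\coloneqq \min_{\mu\in\mab{R}} (w_i-\mu b)^\top\Sigma_\tau(w_i-\mu b).
\]
This equals $w_i^\top\Sigma_\tau w_i-(b^\top\Sigma_\tau w_i)^2/(b^\top\Sigma_\tau b)$ when $b^\top\Sigma_\tau b>0$, and $w_i^\top\Sigma_\tau w_i$ otherwise. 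Since the scale is signed, $f_\tau(b)=f_\tau(-b)$, which is the global sign symmetry mentioned in the statement.

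The ideal classical row risk is $\min_{c}\sum_\tau\pi_\tau f_\tau(c)$, with $c$ shared across contexts. Ideal QRAQ (with $\nu_K(\eta)/S$ term absent, i.e.\ the bias part of Eq.~\eqref{eq:qrac-risk}) can instead pick $b^{(\tau)}$ per context, so its risk is $\sum_\tau\pi_\tau\min_b f_\tau(b)$. The ideal gap is therefore $\Delta_i=\min_c\sum_\tau\pi_\tau f_\tau(c)-\sum_\tau\pi_\tau\min_b f_\tau(b)\ge 0$, since a minimum of a sum dominates the sum of minima.

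Next comes the characterization of when the gap is strictly positive. Let $\mac{B}_\tau=\arg\min_b f_\tau(b)$; it is closed under $b\mapsto -b$. I assume $\pi_\tau>0$ for every $\tau$, as the formal version presumably does.
\begin{itemize}
\item If some $c\in\bigcap_\tau\mac{B}_\tau$ exists (modulo sign, which is automatic by closure), that $c$ achieves every context-wise minimum at once, so $\Delta_i=0$.
\item Conversely, suppose the intersection is empty. Because $\{\pm1\}^M$ is finite, every $c$ misses some $\mac{B}_{\tau'}$, and then $f_{\tau'}(c)-\min f_{\tau'}\ge\delta>0$, where $\delta$ is the minimum positive suboptimality gap over the finitely many pairs $(\tau,b)$. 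Hence $\Delta_i\ge\min_\tau\pi_\tau\,\delta>0$.
\end{itemize}

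For finite shots, Eq.~\eqref{eq:qrac-risk} adds to row $i$ the penalty $\frac{\nu_K(\eta)}{S}\sum_\tau\pi_\tau\mu_{\tau,i}^2\Tr(\Sigma_\tau)$, where the diagonal sum $\sum_j(\Sigma_\tau)_{jj}$ is the trace. Plugging in the ideal QRAQ optimizers $(b^{(\tau)},\mu^\star_{\tau,i})$ as feasible, not necessarily optimal, choices shows that QRAQ's row risk is at most the ideal QRAQ risk plus $V_i\coloneqq\frac{\nu_K(\eta)}{S}\sum_\tau\pi_\tau(\mu^\star_{\tau,i})^2\Tr(\Sigma_\tau)$. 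A strict advantage follows whenever $\Delta_i>V_i$; alternatively one can state the sharper condition by minimizing the penalized objective exactly.

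For the no-go direction, suppose the sign sets do intersect. Then the finite-shot QRAQ risk is at least its ideal risk, because the penalty is nonnegative. The ideal risk is at least $\sum_\tau\pi_\tau\min f_\tau$, which equals the classical risk. So no strict advantage is possible.

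The main technical obstacle I expect is handling degenerate cases cleanly. These are singular $\Sigma_\tau$ with $b^\top\Sigma_\tau b=0$ and non-unique minimizing scales, so "modulo sign symmetry" has to be phrased carefully. My first approach is to define $\mac{B}_\tau$ directly through $f_\tau$ as above, with the convention that $f_\tau$ is always a minimum over $\mu$. The closure under negation and the finiteness argument do not depend on nondegeneracy, so this should avoid needing $\Sigma_\tau\succ0$.
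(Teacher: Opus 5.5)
Your proposal is correct and follows essentially the same route as the paper's proof: row-wise decomposition, the closed-form per-context scale, ``minimum of the sum versus sum of the minima'' for $\Delta_i^\infty\ge 0$, finiteness of $\{\pm1\}^M$ for the strict case, plugging the ideal optimizers into the penalized objective for the finite-shot margin, and nonnegativity of the shot-noise penalty for the no-go direction. Your convention $f_\tau(b)=w_i^\top\Sigma_\tau w_i$ when $b^\top\Sigma_\tau b=0$ is a harmless slight generalization of the paper's standing assumption $b\Sigma_\tau b^\top>0$.
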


We refer to the empty-intersection sign condition in Theorem~\ref{thm:main} as (D1) and the finite-shot margin inequality as (D2). Condition (D1) has a concrete interpretation: the classical row must commit to one sign vector for all contexts, whereas QRAQ can encode the context-wise optima into one logical random-access state and query the relevant sign by choosing the measurement axis. Condition (D2) is the shot budget required for the variance penalty not to erase the ideal gap.

\begin{takeawaybox}[What the main theorem certifies]
    For row $w=W_{i,:}$ and sign vector $b$, define the context-wise best row error
    \begin{equation}
        J_\tau^\star(b) =w\Sigma_\tau w^\top -\frac{(b\Sigma_\tau w^\top)^2}{b\Sigma_\tau b^\top}.
    \end{equation}
    The ideal row advantage is the gap between a shared sign and context-specific signs:
    \begin{equation}
        \Delta_i^\infty =\min_c\sum_\tau \pi_\tau J_\tau^\star(c) -\sum_\tau\pi_\tau\min_b J_\tau^\star(b).
    \end{equation}
    Hence QRAQ wins exactly when ``min after summing'' is larger than ``sum of mins.'' A finite-shot sufficient condition is
    \begin{equation}
        S> \frac{\nu_K(\eta)\sum_\tau\pi_\tau T_\tau(\mu_\tau^\star)^2}{\Delta_i^\infty}, \qquad T_\tau=\sum_j(\Sigma_\tau)_{jj}.
    \end{equation}
\end{takeawaybox}

\vspace{10pt}

\begin{corollary}[Informal: closed-form two-dimensional gap]\label{cor:closedform}
    In the two-context, two-input symmetric covariance family, the row-wise gap is zero below an explicit anisotropy threshold and increases linearly above it. For example, the row $w=(1,3)$ at anisotropy $r=0.8$ has ideal classical risk $2$ and ideal QRAQ risk $1$, resulting in a $50\%$ reduction. The formal statement and formula are Corollary~\ref{cor:closedform-formal}.
\end{corollary}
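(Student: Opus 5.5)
The plan is to compute everything explicitly. The idea is to diagonalize both context covariances in a common eigenbasis, which turns the sign search into a comparison of four numbers. I take the family to be $K=2$, $\pi_1=\pi_2=\tfrac12$, and
\[
\Sigma_\epsilon=\begin{pmatrix}1&\epsilon r\\ \epsilon r&1\end{pmatrix},\qquad \epsilon\in\{+1,-1\},\quad r\in[0,1).
\]
I treat $r\ge 0$; the case $r<0$ is the same after relabelling the contexts, so in general $r$ should read $|r|$. This choice is my guess at the family. I check it against the stated example below, and if the formal family differs, only the eigenvalue bookkeeping would change. I would start from the row-wise characterization summarized after Theorem~\ref{thm:main}: the ideal classical row risk is $\min_c\sum_\tau\pi_\tau J_\tau^\star(c)$, and the ideal QRAQ row risk is $\sum_\tau\pi_\tau\min_b J_\tau^\star(b)$. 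Both quantities are invariant under $b\mapsto -b$, so only $b=(1,s)$ with $s=\pm1$ needs to be considered.

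\textbf{Step 1: closed form for $J_\epsilon^\star$.} For $b=(1,s)$, the vectors $(1,\pm1)$ are eigenvectors of $\Sigma_\epsilon$, which gives
\[
b\Sigma_\epsilon b^\top=2(1+s\epsilon r),\qquad b\Sigma_\epsilon w^\top=(w_1+sw_2)(1+s\epsilon r).
\]
Set $a=(w_1+w_2)^2/2$ and $d=(w_1-w_2)^2/2$. Then $w\Sigma_\epsilon w^\top=a(1+\epsilon r)+d(1-\epsilon r)$, and substituting gives
\[
J_\epsilon^\star(1,1)=d(1-\epsilon r),\qquad J_\epsilon^\star(1,-1)=a(1+\epsilon r).
\]
The nondegenerate case $r<1$ keeps the denominators positive. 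When $r=1$ one of the $b\Sigma b^\top$ vanishes; I would treat that endpoint either by a limit or by reading the scale-optimized error directly.

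\textbf{Step 2: the two risks.} For a shared sign, the $\epsilon$-dependence averages out: $s=+1$ gives risk $d$ and $s=-1$ gives risk $a$, so the classical risk is $\min(a,d)$. Assume without loss of generality that $a\ge d$, i.e.\ $w_1w_2\ge0$; the other case follows by swapping the roles of $a$ and $d$. Context $+$ then picks $d(1-r)$, which is at most $a(1+r)$. Context $-$ picks $\min\{d(1+r),\,a(1-r)\}$. Writing the classical risk as $d=\tfrac12[d(1-r)+d(1+r)]$, the gap is
\[
\Delta^\infty=\tfrac12\max\{0,\;d(1+r)-a(1-r)\}=\tfrac12\max\{0,\;\lVert w\rVert^2 r-2|w_1w_2|\}.
\]
This is zero for $r\le r^\ast:=2|w_1w_2|/\lVert w\rVert_2^2$ and increases linearly with slope $\lVert w\rVert^2/2$ above the threshold, which is the claimed shape. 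As a consistency check with Theorem~\ref{thm:main}, the condition $r>r^\ast$ is exactly the condition that the two context-wise optimal sign sets, $\{(1,1)\}$ and $\{(1,-1)\}$, have no common representative, which is (D1).

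\textbf{Step 3: the example.} For $w=(1,3)$ we get $a=8$, $d=2$, and $r^\ast=0.6$. At $r=0.8$ the classical risk is $\min(8,2)=2$, and the QRAQ risk is $\tfrac12(0.4+1.6)=1$. The two risks differ by $1$, matching $\tfrac12(10\cdot0.8-6)=1$, so the reduction is $50\%$ as stated.

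The main obstacle is not computational. It is pinning down exactly which family and prior the formal Corollary~\ref{cor:closedform-formal} uses. The other point needing care is the degenerate endpoint $r=1$ (and $w_1w_2=0$), where one of the denominators $b\Sigma b^\top$ vanishes and the closed form has to be justified separately.
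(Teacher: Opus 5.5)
Your proposal is correct and follows essentially the same route as the paper: the paper also enumerates the two sign representatives $(+,+)$ and $(+,-)$ modulo the global sign, uses that $(1,\pm1)$ are common eigenvectors of $\Sigma_\pm=I\pm r(J-I)$ to obtain the same four closed forms for $J_\pm^\star$, reduces to $w_1w_2\ge 0$, and arrives at $\Delta_i^\infty=\tfrac12\bigl(r(w_1^2+w_2^2)-2|w_1w_2|\bigr)\mathbf{1}[r>r_0]$ together with the same $w=(1,3)$, $r=0.8$ check. One correction to your list of obstacles: the formal statement already restricts to $r\in[0,1)$, and the denominators $b\Sigma_\pm b^\top=2(1\pm r)$ do not depend on $w$, so neither $r=1$ nor $w_1w_2=0$ needs separate treatment; only $w=0$, where $r^\ast$ is undefined, is set aside in the paper.
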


\begin{corollary}[Informal: resource-fair closed-form threshold]\label{cor:faircomparison}
    Even under the resource-fair comparison with $S=1$, a closed-form necessary and sufficient condition for quantum advantage is obtained in the two-context, two-input symmetric covariance family, as in Corollary~\ref{cor:closedform}. The threshold value of the anisotropy is stricter than in the ideal $S=\infty$ setting, but there still exists a region of quantum advantage. For example, for $w=(1,0)$, $ (\sqrt{5} -1)/2\simeq 0.618<r<1$ is the necessary and sufficient condition for quantum advantage in the resource-fair comparison, whereas any $0<r<1$ yields advantage in the ideal setting. The formal statement is Corollary~\ref{cor:faircomparison_formal}.
\end{corollary}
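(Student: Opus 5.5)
We would prove the formal version by reducing both sides to per-row, per-context scalar problems with the scale re-optimized \emph{in the presence of} the shot-noise term, and then specializing to the symmetric family. We take the family as in Corollary~\ref{cor:closedform}: $K=2$, $\pi=(1/2,1/2)$, and
\[
\Sigma_1=\begin{pmatrix}1&r\\ r&1\end{pmatrix},\qquad \Sigma_2=\begin{pmatrix}1&-r\\ -r&1\end{pmatrix},\qquad 0<r<1.
\]
We use the resource-fair setting of Section~\ref{subsec:noise}: $S=1$ and $\eta=1$, so $\nu_2(1)=1$ and $T_\tau=\Tr\Sigma_\tau=2$.

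\textbf{Step 1: per-context QRAQ risk with noise-aware scales.} By Eq.~\eqref{eq:qrac-risk}, for a row $w$, sign vector $b$ and scale $\mu$, the QRAQ contribution of context $\tau$ is
\[
w\Sigma_\tau w^\top-2\mu\, b\Sigma_\tau w^\top+\mu^2\bigl(b\Sigma_\tau b^\top+\nu T_\tau/S\bigr).
\]
Minimizing over the signed scale $\mu$ gives the shrunk optimum $\mu^\star=b\Sigma_\tau w^\top/(b\Sigma_\tau b^\top+\nu T_\tau/S)$. The resulting risk is
\[
J^{S}_\tau(b)=w\Sigma_\tau w^\top-\frac{(b\Sigma_\tau w^\top)^2}{b\Sigma_\tau b^\top+\nu T_\tau/S}.
\]
This is sharper than plugging the ideal $\mu^\star$ into the sufficient condition of Theorem~\ref{thm:main}, and it is what makes the criterion necessary and sufficient. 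The classical side is $\min_c\sum_\tau\pi_\tau J^\star_\tau(c)$ with noiseless scales. Modulo global sign, in two dimensions $c$ ranges only over $(1,1)$ and $(1,-1)$, so both sides reduce to finite minimizations.

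\textbf{Step 2: evaluate on the family.} For $w=(1,0)$ a direct computation gives
\[
J^\star_1(1,1)=\frac{1-r}{2},\qquad J^\star_1(1,-1)=\frac{1+r}{2},
\]
and the mirror values for $\Sigma_2$. Hence the classical risk is $1/2$ for either shared sign, while the ideal QRAQ risk is $(1-r)/2$; this confirms an ideal gap $r/2>0$ for every $0<r<1$. At $S=1$, the matched signs give $b\Sigma_\tau w^\top=1+r$ and $b\Sigma_\tau b^\top=2+2r$, so
\[
J^{1}_\tau=1-\frac{(1+r)^2}{4+2r}.
\]
Mismatched signs are never better, which we check by comparison. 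Advantage holds iff $1-(1+r)^2/(4+2r)<1/2$, that is, iff $r^2+r-1>0$, that is, iff $r>(\sqrt5-1)/2$.

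\textbf{Step 3: general row and the main obstacle.} For general $w=(w_1,w_2)$ we repeat Step 2 symbolically. The quantities $b\Sigma_\tau w^\top=(w_1\pm w_2)(1\pm r)$ and $b\Sigma_\tau b^\top=2(1\pm r)$ give rational expressions in $r$. The advantage condition becomes a polynomial inequality in $r$, whose root supplies the closed-form threshold, analogous to the linear-above-threshold structure of Corollary~\ref{cor:closedform}. The delicate part is the QRAQ side. Because of the $+T_\tau$ denominator, QRAQ's best sign per context need not be the ideal one, so one must take the minimum over both sign choices in each context. We would handle this by splitting into cases according to the sign of $w_1w_2$ and comparing the four candidate expressions directly. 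One must also confirm that the noise-aware optimum is the true minimum of the finite-shot risk, so that the condition is necessary and not merely sufficient.
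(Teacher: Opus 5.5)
Your proposal is correct and follows essentially the paper's route. You re-optimize the signed scale jointly with the shot-noise term to get $J^\star_\tau(b;S)=w\Sigma_\tau w^\top-(b\Sigma_\tau w^\top)^2/(b\Sigma_\tau b^\top+\nu_K(\eta)T_\tau/S)$, enumerate signs modulo $b\sim-b$, and compare against the noiseless shared-sign baseline, which gives exactly $r^2+r-1>0$ for $w=(1,0)$.

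Your general-$w$ step is only sketched, but it closes as in the paper. Take $w_1w_2\ge0$, $b_+=(1,1)$, $b_-=(1,-1)$. The first context always prefers $b_+$, and any common sign in both contexts cannot beat the classical optimum because the noise term only inflates risk. So only the pair $(b_+,b_-)$ matters, and comparing its risk $A(3-r^2)/(2(2+r))$, with $A=w_1^2+w_2^2$, against $(A-2|w_1w_2|)/2$ yields $r^2+(1-r_0)r-(1+2r_0)>0$.
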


\begin{corollary}[Informal: row additivity and finite-shot total gap]\label{cor:additive}
    For signed per-row scales, both the classical and QRAQ objectives decompose over rows. The ideal total gap is therefore the sum of nonnegative row gaps and is positive if at least one row satisfies the sign-disagreement condition. At finite shots, the total advantage is the row-summed finite shot gap. A positive ideal row can be offset by shot noise on other rows; thus, the correct finite-shot criterion is a global margin comparison. The formal statement is Corollary~\ref{cor:additive-formal}.
\end{corollary}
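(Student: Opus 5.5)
The plan is to show that, under signed per-row scales, both objectives split into independent per-row terms, and then to sum those terms.

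\emph{Step 1: row decomposition of the classical objective.} Write $R_\tau(W,Q_\tau)=\Tr\bigl((W-Q_\tau)\Sigma_\tau(W-Q_\tau)^\top\bigr)$ as $\sum_i (W_{i,:}-Q_{\tau,i,:})\Sigma_\tau(W_{i,:}-Q_{\tau,i,:})^\top$, because the trace of a product with a diagonal-block structure sums over rows. Under $\mac{S}_{\mathrm{row}\pm}$ the scale $\alpha_{\tau,i}$ and the sign row $C_{i,:}$ appear only in the $i$-th summand. The minimization over $(C,\{\alpha_\tau\})$ therefore separates into independent per-row problems. Minimizing each $\alpha_{\tau,i}$ in closed form gives the least-squares value $\alpha^\star=(c\Sigma_\tau w^\top)/(c\Sigma_\tau c^\top)$, where $c=C_{i,:}$ and $w=W_{i,:}$. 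Substituting back yields exactly $\sum_\tau\pi_\tau J^\star_\tau(c)$, so the classical optimum is $\sum_i\min_c\sum_\tau\pi_\tau J^\star_\tau(c)$. Degenerate rows with $c\Sigma_\tau c^\top=0$ are handled by the convention that the ratio is $0$.

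\emph{Step 2: row decomposition of the QRAQ objective.} In Eq.~\eqref{eq:qrac-risk}, the bias term decomposes over rows as in Step 1. The variance term $\frac{\nu_K(\eta)}{S}\sum_{i,j}\mu_{\tau,i}^2(\Sigma_\tau)_{jj}$ equals $\frac{\nu_K(\eta)}{S}\sum_i\mu_{\tau,i}^2T_\tau$, which is also a sum of per-row terms. The parameters $B^{(\tau)}_{i,:}$ and $\mu_{\tau,i}$ enter only row $i$, and the contexts decouple within each row because $B^{(\tau)}$ may differ across $\tau$. The QRAQ optimum is therefore $\sum_i E^{\mathrm{Q}}_i(S)$, where $E^{\mathrm{Q}}_i(S)$ is the per-row QRAQ minimum. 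As $S\to\infty$ this becomes $E^{\mathrm{Q}}_i(\infty)=\sum_\tau\pi_\tau\min_b J^\star_\tau(b)$, which I take from Theorem~\ref{thm:main-formal}.

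\emph{Step 3: assembling the gaps.} The total ideal gap is $\sum_i\Delta_i^\infty$. Each $\Delta_i^\infty\ge0$, since a minimum of a sum is at least the sum of minima; Theorem~\ref{thm:main-formal} gives this nonnegativity and also strictness under (D1). Hence the total gap is positive whenever some row satisfies (D1).

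\emph{Step 4: finite shots.} Define the finite-shot row gap $\Delta_i(S)=E^{\mathrm{C}}_i-E^{\mathrm{Q}}_i(S)$, which may be negative. The total advantage is $\sum_i\Delta_i(S)$, and QRAQ wins if and only if this sum is positive. To justify the remark that shot noise on other rows can offset a positive ideal row, I will use the bound $E^{\mathrm{Q}}_i(S)\le E^{\mathrm{Q}}_i(\infty)+\frac{\nu_K(\eta)}{S}\sum_\tau\pi_\tau T_\tau(\mu^\star_{\tau,i})^2$, which comes from plugging in the ideal minimizers. On rows where (D1) fails, $\Delta_i(S)\le0$ by Theorem~\ref{thm:main-formal}; these are exactly the negative contributions.

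I expect no real obstacle. The only point needing care is checking that the per-row scale variables are truly unshared across rows in both objectives, including the $\mu$ reparameterization with $\eta c_K$, so that the joint minimum equals the sum of the row minima.
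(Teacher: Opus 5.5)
Your proposal is correct and follows essentially the same route as the paper's proof of Corollary~\ref{cor:additive-formal}. Both arguments split the classical and QRAQ objectives row by row, sum the row gaps, and invoke Theorem~\ref{thm:main-formal} for $\Delta_i^\infty\ge 0$, for strictness under (D1), and for $\Delta_i(S)\le 0$ on rows where (D1) fails. The only difference is that the paper justifies row-additivity of the variance term via the diagonal covariance of Lemma~\ref{lem:indep}, while you read it off directly from Eq.~\eqref{eq:qrac-risk}.

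One small correction: rows failing (D1) are not \emph{exactly} the negative contributions. A row satisfying (D1) can also have $\Delta_i(S)<0$ below its shot threshold, for example $w=(1,0)$ at $S=1$ with $0<r<(\sqrt5-1)/2$ in Corollary~\ref{cor:faircomparison_formal}. This is another reason the finite-shot criterion must be the global sum $\sum_i\Delta_i(S)>0$.
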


\begin{corollary}[Informal: finite-shot threshold]\label{cor:shots}
    For any row with a positive ideal gap, a sufficient shot budget is obtained by dividing the row's variance inflation at the ideal QRAQ optimizer by that ideal gap. Larger noise, more contexts, or larger effective scales require more shots. The formal statement is Corollary~\ref{cor:shots-formal}.
\end{corollary}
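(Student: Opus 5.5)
The plan is to work row by row. By Corollary~\ref{cor:additive}, the total finite-shot gap is a sum of row gaps, so it is enough to fix one row $w=W_{i,:}$ with a positive ideal gap $\Delta_i^\infty>0$ and show that this row's finite-shot gap is positive once $S$ exceeds the stated threshold. For signed per-row scales, the classical row risk is $\min_{c}\sum_\tau\pi_\tau\min_{\alpha_\tau}\bigl(w-\alpha_\tau c\bigr)\Sigma_\tau\bigl(w-\alpha_\tau c\bigr)^\top$. Minimizing each quadratic in $\alpha_\tau$ separately gives $\alpha_\tau=b\Sigma_\tau w^\top/(b\Sigma_\tau b^\top)$, and substituting back gives $J_\tau^\star$. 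So the classical row optimum equals $\min_c\sum_\tau\pi_\tau J^\star_\tau(c)$, and this value does not depend on $S$.

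On the QRAQ side, I bound the finite-shot optimum from above by one particular feasible choice: the ideal optimizer. For each $\tau$, take $b_\tau^\star\in\arg\min_b J_\tau^\star(b)$ and set $\mu_\tau^\star=b_\tau^\star\Sigma_\tau w^\top/(b_\tau^\star\Sigma_\tau b_\tau^{\star\top})$. Because $B^{(\tau)}$ may differ across contexts, this choice is admissible; the QRAC state in Eq.~\eqref{eq:K-context-qrac} is valid for any sign tuple, and the raw scale is recovered as $\gamma=\mu/(\eta c_K)$. Plugging it into the row restriction of Eq.~\eqref{eq:qrac-risk} gives
\[
\text{QRAQ row risk}\le \sum_\tau\pi_\tau J_\tau^\star(b_\tau^\star)+\frac{\nu_K(\eta)}{S}\sum_\tau\pi_\tau(\mu_\tau^\star)^2T_\tau .
\]
Here the variance term uses the fact that under per-row scaling $\mu_{\tau,g(i,j)}=\mu_\tau$ for every $j$ in the row, so $\sum_j\mu_\tau^2(\Sigma_\tau)_{jj}=\mu_\tau^2T_\tau$. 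The diagonal-covariance structure from Lemma~\ref{lem:indep} and the variance formula in Eq.~\eqref{eq:noise-stats} justify that only the diagonal of $\Sigma_\tau$ appears.

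Subtracting the two expressions, the finite-shot row gap is at least $\Delta_i^\infty-\frac{\nu_K(\eta)}{S}\sum_\tau\pi_\tau T_\tau(\mu_\tau^\star)^2$. This lower bound is strictly positive exactly when $S>\nu_K(\eta)\sum_\tau\pi_\tau T_\tau(\mu_\tau^\star)^2/\Delta_i^\infty$, which is the sufficient condition claimed. The monotonicity remarks then follow from the form of the threshold. It increases with $\nu_K(\eta)=K/\eta^2-1$, so it grows as $\eta$ decreases and as $K$ increases; it is also quadratic in the effective scales $\mu^\star_\tau$.

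The one step that needs care is the degenerate case. If some context has $b\Sigma_\tau b^\top=0$ for a sign vector, the closed form for $\mu_\tau^\star$ is undefined. In that case I would use the convention from the formal Theorem~\ref{thm:main-formal}: set $J_\tau^\star(b)=w\Sigma_\tau w^\top$ and take $\mu_\tau^\star=0$ (or any minimizer). This keeps the threshold finite. Otherwise the argument is a direct feasible-point comparison and needs no new estimates.
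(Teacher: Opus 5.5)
Your argument is correct and matches the paper's proof (Steps 7--8 of the proof of Theorem~\ref{thm:main-formal}). Like the paper, you plug the ideal row optimizer $\{B_\tau^\star,\mu_\tau^\star\}$ into the finite-shot objective as a feasible point, obtain $\Delta_i(S)\ge\Delta_i^\infty-\frac{\nu_K(\eta)}{S}\sum_\tau\pi_\tau T_\tau(\mu_\tau^\star)^2$, and read off the threshold. One small correction concerns your degenerate-case remark. The paper has no such convention, because Theorem~\ref{thm:main-formal} simply assumes $b\Sigma_\tau b^\top>0$ for every $b$, so that case never arises. Your fix is nonetheless harmless: since $\Sigma_\tau\succeq 0$, $b\Sigma_\tau b^\top=0$ forces $\Sigma_\tau b^\top=0$, so $J_\tau(b,\cdot)$ is constant and $\mu=0$ is an ideal minimizer with zero variance cost.
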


\paragraph{Scale-granularity boundary.}\label{par:granularity}
The separation is universal only against classical scale classes contained in signed per-row scaling. More expressive scale classes can sometimes absorb part of the context disagreement, so they require direct margin certification.

\begin{theorem}[Informal: scale-granularity boundary]\label{thm:granularity}
    Whenever the signed-row ideal gap is positive, the same ideal separation automatically holds against shared-sign one-bit PTQ classes whose feasible set is contained in signed per-row scaling, including per-tensor and nonnegative row-scale variants. For scale classes that are incomparable to or finer than signed per-row scaling, such as per-column, row-times-column, and group scaling, no universal ordering exists; QRAQ wins exactly on instances where the directly computed margin over that class is positive and exceeds the finite-shot inflation. Per-entry scaling can reproduce the full-precision weights and, therefore, admits no positive reconstruction-risk advantage. The formal statement is Theorem~\ref{thm:granularity-formal}.
\end{theorem}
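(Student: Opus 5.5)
The plan is to prove the three parts of Theorem~\ref{thm:granularity} separately, using only feasible-set inclusion and the row decomposition of Theorem~\ref{thm:main}.

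\textbf{Classes below signed per-row scaling.} Suppose the feasible set of a classical class $\mac{S}'$ is contained in that of $\mac{S}_{\mathrm{row}\pm}$. Examples are per-tensor signed scales, where all rows share one $\alpha_\tau$, and nonnegative row scales. In either case every surrogate $\alpha_\tau\odot C$ realizable under $\mac{S}'$ is also realizable under $\mac{S}_{\mathrm{row}\pm}$. Minimizing over a smaller set can only raise the value, so
\begin{equation*}
\inf_{C,\alpha\in\mac{S}'}\Ec^{\mac{S}'}\ge\inf_{C,\alpha\in\mac{S}_{\mathrm{row}\pm}}\Ec^{\mac{S}_{\mathrm{row}\pm}}.
\end{equation*}
On the QRAQ side we keep signed per-row scales. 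By Corollary~\ref{cor:additive} and Theorem~\ref{thm:main}, the ideal QRAQ risk equals the signed-row classical optimum minus $\sum_i\Delta_i^\infty$. Chaining the two gives a gap against $\mac{S}'$ of at least $\sum_i\Delta_i^\infty>0$. The finite-shot statement follows the same way: the margin against $\mac{S}'$ dominates the signed-row finite-shot margin, so condition (D2) transfers unchanged.

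\textbf{Classes finer than or incomparable to signed rows.} This covers per-column, row$\times$column, and group scaling. Here I would give a two-sided instance argument rather than an ordering. For the direction where QRAQ loses or ties, I would take a per-column instance: $K=2$, $N=1$, diagonal covariances $\Sigma_1=\mathrm{diag}(1,\epsilon)$ and $\Sigma_2=\mathrm{diag}(\epsilon,1)$, and a row $w$ whose context-wise optimal signs differ under signed row scaling. I would then check that context-dependent column scales $\alpha_{\tau,j}$ with a shared $C=\mathrm{sgn}(w)$ reproduce $w$ exactly. This gives zero classical risk, so the margin is $\le 0$. For the direction where QRAQ wins, I would take an instance with $M=1$ (or $N=1$ with a single group) where column or group scaling collapses to per-tensor scaling. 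The first part then applies, so the margin is positive. Together the two instances show that no universal ordering exists. The ``exactly when'' clause is then just Theorem~\ref{thm:sufficient-formal} applied with the classical class set to $\mac{S}$: QRAQ wins iff the directly computed margin, minus the variance term of Eq.~\eqref{eq:qrac-risk}, is positive.

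\textbf{Per-entry scales.} Choose $C_{ij}=\mathrm{sgn}(W_{ij})$ (with either sign at zero) and $\alpha_{\tau,(i,j)}=|W_{ij}|$ for every $\tau$. This makes $\hat W_\tau=W$, so the classical risk is $0$. Since the QRAQ risk is a sum of traces against $\Sigma_\tau\succeq0$ plus a nonnegative variance term, it is $\ge0$, and no positive advantage is possible.

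\textbf{Main obstacle.} The hardest step is constructing the explicit counterexamples in the second part so that they genuinely satisfy (D1) under signed row scaling while column or group scales absorb the disagreement. I would verify the example directly with the closed form of Corollary~\ref{cor:closedform-formal}.
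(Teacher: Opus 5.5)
Your first and third parts are the paper's proof. Feasible-set inclusion gives $\mathcal{E}^{\mathcal{S}}_{\mathrm{C}}\ge\mathcal{E}^{\mathrm{row}\pm}_{\mathrm{C}}$, and per-entry scales reproduce $W$ exactly. One caution on the finite-shot transfer: per-row (D2) does not by itself transfer to the total, because rows failing (D1) have $\Delta_i(S)\le 0$. What transfers is the row-summed margin of Corollary~\ref{cor:additive-formal}, or per-row (D2) when every row has a positive ideal gap.

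For the middle part, the paper builds no instances. It only notes that the inclusion argument no longer applies, and that $\delta_{\mathcal{S}}>0$ (respectively $\delta_{\mathcal{S}}>\mathcal{E}^{\mathrm{row}}_{\mathrm{Q}}(S)-\mathcal{E}^{\mathrm{row}}_{\mathrm{Q}}(\infty)$) is by definition equivalent to QRAQ winning. You try to go further and exhibit both directions, but both of your instances fail as described.

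\emph{The diagonal instance cannot satisfy (D1).} If $\Sigma_\tau=\mathrm{diag}(s_{\tau1},\dots,s_{\tau M})$ with $s_{\tau j}>0$, then $b\Sigma_\tau b^\top=\mathrm{Tr}(\Sigma_\tau)$ does not depend on $b$. Minimizing $J_\tau^\star(b)$ therefore means maximizing $\bigl|\sum_j b_j s_{\tau j}w_j\bigr|$, which is attained at $b=\pm\mathrm{sgn}(w)$ for every $\tau$. So $\bigcap_\tau\mathcal{M}_\tau\ne\emptyset$ and $\Delta^\infty=0$. Your example shows only that column scaling can beat QRAQ, not that it can do so when the signed-row gap is positive. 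The repair is easy. With $N=1$, per-column scaling is per-entry scaling, so the column risk is $0$ for every $w$. Take $\Sigma_\pm=I\pm r(J-I)$, $w=(1,0)$, $0<r<1$: Corollary~\ref{cor:closedform-formal} gives $\Delta^\infty=r/2>0$, while the ideal QRAQ risk is $(1-r)/2>0$.

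\emph{The $M=1$ instance does not follow from your first part.} With $M=1$ there is only one sign vector modulo $b\sim-b$, so (D1) cannot hold. Your first part then yields only $\delta_{\mathcal{S}}\ge\sum_i\Delta_i^\infty=0$. The margin can still be positive, but only by direct computation. Signed per-row QRAQ fits each scalar row exactly, whereas a single per-context scale cannot when the $|W_{i1}|$ differ. That win comes from scale granularity, not from sign disagreement. Your $N=1$ single-group variant is valid, but only because that class then coincides with $\mathcal{S}_{\mathrm{row}\pm}$.

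\emph{``$\delta_{\mathcal{S}}$ minus the variance term'' is not an exact criterion.} With the variance taken at the ideal optimizer, it is merely sufficient; the proof of Corollary~\ref{cor:faircomparison_formal} shows this bound misses a genuine advantage at $S=1$. With the variance taken at the finite-shot optimizer, it is merely necessary, since re-optimizing also raises the bias. The exact condition is the tautological $\delta_{\mathcal{S}}>\mathcal{E}^{\mathrm{row}}_{\mathrm{Q}}(S)-\mathcal{E}^{\mathrm{row}}_{\mathrm{Q}}(\infty)$, which is what the paper uses. Theorem~\ref{thm:sufficient-formal} is an iff only in its existential form over configurations.
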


\paragraph{Extensions and finite-sample certificate.}\label{par:extensions}
The same mechanism extends beyond the single-qubit two-context case. Multi-qubit QRACs host up to $2n+1$ contexts on $n$ qubits; arbitrary priors reduce to the active contexts, Gram-bounded non-anticommuting observables replace the noise coefficient with a spectral analog, and correlated or Pauli-diagonal noise changes only the variance threshold (Theorems~\ref{thm:multi-qubit}, \ref{thm:K-general}, \ref{thm:gram}, \ref{thm:correlated}, and~\ref{thm:pauli-noise}). A complementary quantum-activation result shows that whenever the activation states are valid density matrices and are decoded through a fixed POVM, a classical receiver has an incompatibility floor, while context-matched quantum readout converges at rate $O(1/S)$ (Theorem~\ref{thm:qact}).

\begin{theorem}[Informal: finite-sample certificate]\label{thm:finite-sample}
    With bounded calibration activations $\lVert x_\tau^{(s)}\rVert_2\le B_{\mathrm{x}}$ and uniformly nondegenerate sign denominators, the empirical ideal gap uniformly approximates the population gap at a square-root sample rate. In particular, when the sample size is large enough that $M\varepsilon_\Sigma\le\lambda_0/2$, the confidence radius
    \begin{equation}
        \mathrm{rad}(\delta) =2K\Bigl(1+\frac{2M B_{\mathrm{x}}^2}{\lambda_0}\Bigr)^2\lVert W\rVert_F^2\, M B_{\mathrm{x}}^2\sqrt{\frac{2\log(2KM^2/\delta)}{N_{\min}}},
    \end{equation}
    $\hat\Delta^\infty>\mathrm{rad}(\delta)$ certifies a positive population ideal gap with a probability of at least $1-\delta$. The finite-shot certificate subtracts the same variance inflation used in the main theorem. The formal statement is Theorem~\ref{thm:finite-sample-formal}.
\end{theorem}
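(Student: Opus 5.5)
The plan is a standard uniform-deviation argument in three layers: concentration of the empirical covariances, Lipschitz continuity of the row error $J_\tau^\star(b)$ in $\Sigma_\tau$, and stability of the min-after-sum and sum-of-mins operations. For each context $\tau$ the empirical covariance $\hat\Sigma_\tau=\frac{1}{N_\tau}\sum_s x_\tau^{(s)}x_\tau^{(s)\top}$ is an average of i.i.d.\ matrices. Each entry is bounded in absolute value by $B_{\mathrm{x}}^2$, so Hoeffding's inequality applies entrywise. A union bound over the $KM^2$ entries (more precisely the $K M(M+1)/2$ entries, with the $M^2$ count absorbing the constant) gives, with probability at least $1-\delta$,
\[
\max_\tau\max_{j,k}\bigl|(\hat\Sigma_\tau-\Sigma_\tau)_{jk}\bigr|\le \varepsilon_\Sigma\coloneqq B_{\mathrm{x}}^2\sqrt{\frac{2\log(2KM^2/\delta)}{N_{\min}}}.
\]
Passing to the operator or Frobenius norm costs a factor $M$, so $\lVert\hat\Sigma_\tau-\Sigma_\tau\rVert\le M\varepsilon_\Sigma$. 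This explains the factor $MB_{\mathrm{x}}^2$ in $\mathrm{rad}(\delta)$.

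Next, for a fixed row $w$ and fixed sign vector $b$, I would write $J_\tau^\star(b)=w\Sigma w^\top-N(\Sigma)^2/D(\Sigma)$ with $N=b\Sigma w^\top$ and $D=b\Sigma b^\top$. The nondegeneracy assumption, read as $D\ge\lambda_0$ at the population level, together with $M\varepsilon_\Sigma\le\lambda_0/2$ gives $\hat D\ge\lambda_0/2$ uniformly over all $b$. Differences of each piece are controlled as follows:
\[
|w\Delta w^\top|\le \lVert w\rVert^2 M\varepsilon_\Sigma,\qquad |b\Delta w^\top|\le \sqrt{M}\,\lVert w\rVert M\varepsilon_\Sigma,\qquad |b\Delta b^\top|\le M\cdot M\varepsilon_\Sigma .
\]
The magnitudes satisfy $|N|\le \sqrt{M}\lVert w\rVert MB_{\mathrm{x}}^2$ and $D\le M^2B_{\mathrm{x}}^2$. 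To bound the ratio perturbation I would expand
\[
\frac{\hat N^2}{\hat D}-\frac{N^2}{D}=\frac{(\hat N-N)(\hat N+N)}{\hat D}+N^2\,\frac{D-\hat D}{\hat D D}.
\]
Collecting terms should give a bound of the form $|\hat J_\tau(b)-J_\tau(b)|\le(1+2MB_{\mathrm{x}}^2/\lambda_0)^2\lVert w\rVert^2 M\varepsilon_\Sigma$, uniformly in $b$. This matches the squared factor in the statement. The main obstacle is this bookkeeping: making the constants collapse exactly into $(1+2MB_{\mathrm{x}}^2/\lambda_0)^2$. I would first try bounding the ratio $|N|/D\le \lVert w\rVert\sqrt{M}\,MB_{\mathrm{x}}^2/\lambda_0$-type quantities and then grouping terms as $(1+a)^2=1+2a+a^2$.

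Because the bound holds uniformly over $b$ on the same event, the usual fact $|\min f-\min g|\le\sup|f-g|$ applies. The shared term $\min_c\sum_\tau\pi_\tau J_\tau^\star(c)$ therefore moves by at most the uniform bound, since $\sum_\tau\pi_\tau=1$. The sum-of-mins term moves by at most the same amount. The row gap thus deviates by at most twice the uniform per-$J$ bound; the factor $K$ in $\mathrm{rad}$ is a safe overcount coming from summing contexts without using $\sum_\tau\pi_\tau=1$. Summing over rows by Corollary~\ref{cor:additive} replaces $\lVert w\rVert^2$ with $\lVert W\rVert_F^2$, which gives $|\hat\Delta^\infty-\Delta^\infty|\le\mathrm{rad}(\delta)$.

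On this event, $\hat\Delta^\infty>\mathrm{rad}(\delta)$ implies $\Delta^\infty>0$, which is the claimed certificate. For the finite-shot version, the variance inflation $\frac{\nu_K(\eta)}{S}\sum_\tau\pi_\tau T_\tau\mu_\tau^2$ from Theorem~\ref{thm:main} is subtracted. One checks that its empirical counterpart is covered by the same argument, because $T_\tau$ is a trace and hence also $M\varepsilon_\Sigma$-Lipschitz.
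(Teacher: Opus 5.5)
Your plan has the same skeleton as the paper's proof: entrywise Hoeffding with a union bound over $KM^2$ entries, a factor $M$ to pass to the Frobenius norm, the floor $b\hat\Sigma_\tau b^\top\ge\lambda_0/2$, a per-$b$ Lipschitz bound on $J_\tau^\star$ with constant $(1+2MB_{\mathrm{x}}^2/\lambda_0)^2\lVert w\rVert_2^2$, stability of both minima, a factor $2$ for the difference, and a sum over rows. The paper's factor $K$ arises exactly as you guess, from $\sum_\tau\pi_\tau\lVert\hat\Sigma_\tau-\Sigma_\tau\rVert_F\le\sum_\tau\lVert\hat\Sigma_\tau-\Sigma_\tau\rVert_F\le K\varepsilon_\Sigma$.

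The real difference is the Lipschitz step. The paper computes $\partial J_\tau^\star/\partial\Sigma$ and bounds its Frobenius norm along the segment from $\Sigma_\tau$ to $\hat\Sigma_\tau$, using $\lVert\Sigma\rVert_{\mathrm{op}}\le B_{\mathrm{x}}^2$ and $b\Sigma b^\top\ge\lambda_0/2$ there. It then applies the mean-value theorem. Your direct expansion of $\hat N^2/\hat D-N^2/D$ avoids calculus and needs the denominator bounds only at the two endpoints. That is a legitimate and slightly more elementary route.

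As written, though, your bookkeeping cannot reach the stated radius, for two concrete reasons.

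First, your bounds $|N|\le\sqrt M\lVert w\rVert_2 MB_{\mathrm{x}}^2$ and $D\le M^2B_{\mathrm{x}}^2$ amount to $\lVert\Sigma\rVert\le MB_{\mathrm{x}}^2$, which is too loose by a factor $M$. Plugging them in, the first term of your expansion contributes $4M^2B_{\mathrm{x}}^2\lVert w\rVert_2^2\epsilon/\lambda_0$, so you get a constant of order $(1+2M^2B_{\mathrm{x}}^2/\lambda_0)^2$, not $(1+2MB_{\mathrm{x}}^2/\lambda_0)^2$. The missing observation is that $\lVert\Sigma_\tau\rVert_{\mathrm{op}}\le B_{\mathrm{x}}^2$ and $\lVert\hat\Sigma_\tau\rVert_{\mathrm{op}}\le B_{\mathrm{x}}^2$. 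Each is an average of matrices $xx^\top$ with $\lVert xx^\top\rVert_{\mathrm{op}}=\lVert x\rVert_2^2\le B_{\mathrm{x}}^2$, so $|N|,|\hat N|\le\sqrt M B_{\mathrm{x}}^2\lVert w\rVert_2$. No upper bound on $D$ is needed at all. With $\epsilon=\lVert\hat\Sigma_\tau-\Sigma_\tau\rVert_F$ and $a=MB_{\mathrm{x}}^2/\lambda_0$, your three pieces then give $(1+4a+2a^2)\lVert w\rVert_2^2\epsilon\le(1+2a)^2\lVert w\rVert_2^2\epsilon$, marginally better than the paper's constant.

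Second, you define $\varepsilon_\Sigma$ as the entrywise deviation, whereas the formal statement uses the Frobenius-level quantity $\varepsilon_\Sigma=MB_{\mathrm{x}}^2\sqrt{2\log(2KM^2/\delta)/N_{\min}}$. With your normalization, your own bound $|b(\hat\Sigma_\tau-\Sigma_\tau)b^\top|\le M^2\varepsilon_\Sigma$ combined with $M\varepsilon_\Sigma\le\lambda_0/2$ yields only $\hat D\ge\lambda_0(1-M/2)$, which is vacuous for $M\ge2$. The hypothesis must be read with the Frobenius-level $\varepsilon_\Sigma$, i.e.\ $M^2$ times your entrywise bound is at most $\lambda_0/2$. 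That same normalization is what makes the factor $MB_{\mathrm{x}}^2\sqrt{\cdot}$ in $\mathrm{rad}(\delta)$ come out right.

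A minor point on the finite-shot remark: the inflation term contains $(\mu_\tau^\star)^2$, which depends on $\Sigma_\tau$ through an argmin and is not Lipschitz. Noting that $T_\tau$ is a trace does not cover it. A simple fix is to subtract a deterministic bound instead, using $T_\tau\le B_{\mathrm{x}}^2$ and $|\mu_\tau^\star|\le\sqrt M B_{\mathrm{x}}^2\lVert w\rVert_2/\lambda_0$. The paper itself only asserts this part.
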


\section{Experiments}\label{sec:experiments}

\begin{figure}[tb]
    \centering
    \includegraphics[width=\linewidth]{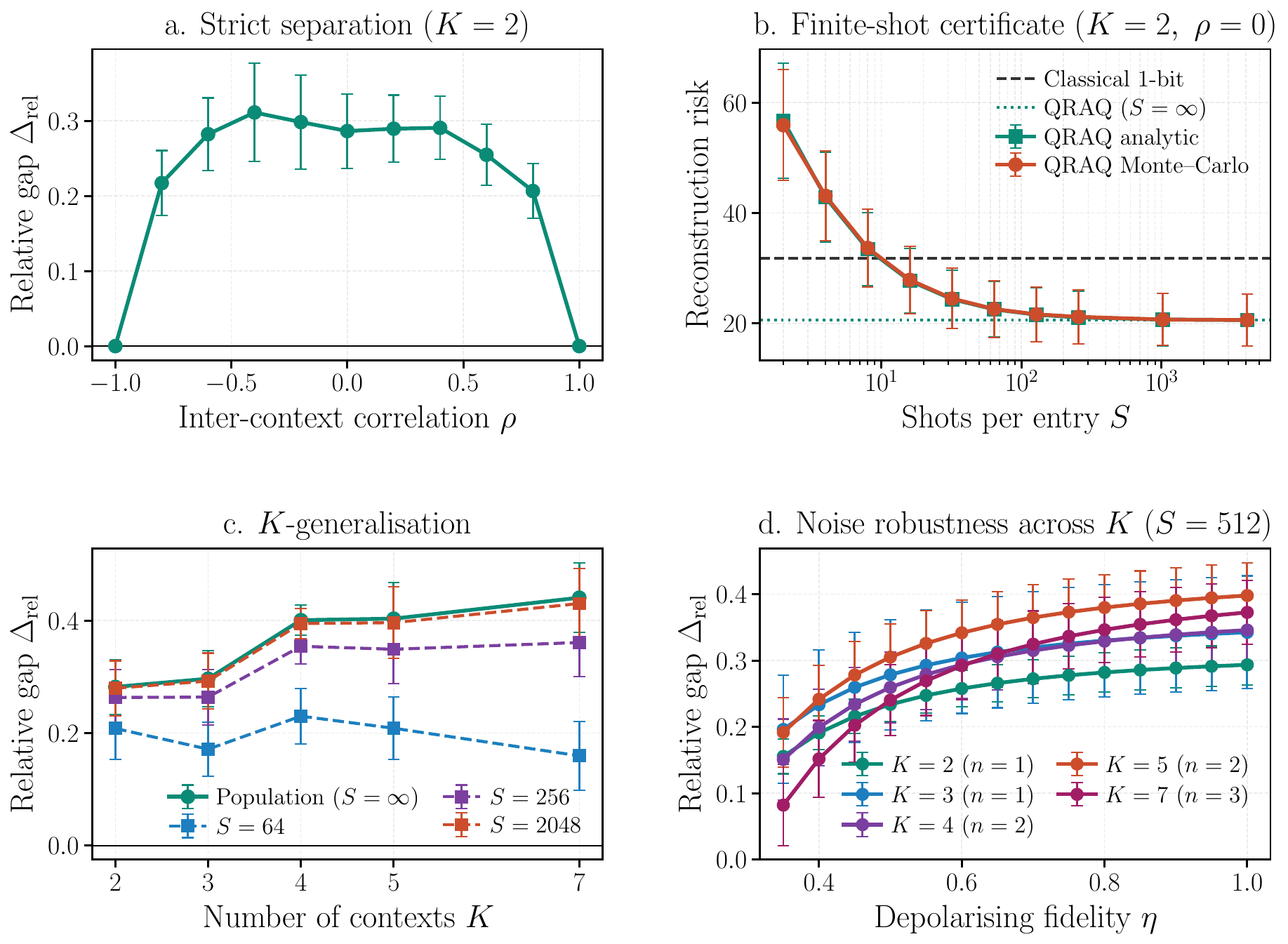}
    \caption{Simulator validation of QRAQ. Panel a shows the ideal relative gap versus inter-context correlation for $K=2$; panel b shows finite-shot reconstruction risk versus shots per entry at $\rho=0$; panel c shows relative gap versus the number of contexts at several shot budgets; and panel d shows relative gap versus depolarizing fidelity across $K$.}
    \label{fig:experiments}
\end{figure}

The experiments test whether the computable margins in Section~\ref{sec:theory} predict the behavior of QRAQ in the regimes used by the proofs: ideal reconstruction, finite-shot readout, multiple contexts, and calibrated noise. Figure~\ref{fig:experiments} summarizes these four checks. The goal is not to benchmark hardware throughput but to verify the reconstruction-risk separation and its finite-shot degradation.
Weights are i.i.d. Gaussian. Per-context covariances follow the shared-factor Wishart model given in Eq.~\eqref{eq:exp-cov-model} of Appendix~\ref{app:experiments}. We use the optimal signed per-row classical baseline, solved by exhaustive enumeration for the tested widths. The primary metric is the relative population gap $(\mathcal{E}^{\mathrm{row}\pm}_{\mathrm{C}}-\mathcal{E}^{\mathrm{row}}_{\mathrm{Q}})/\mathcal{E}^{\mathrm{row}\pm}_{\mathrm{C}}$. Appendix~\ref{app:experiments} provides the simulator details, unit tests, and additional scaling sweeps.

In panel a of Figure~\ref{fig:experiments}, the ideal gap vanishes when the contexts collapse to the same covariance at $|\rho|=1$ and peaks at $31.1\%\pm6.6\%$ near $\rho=-0.4$. The closed-form two-dimensional corollary is unit-tested to numerical agreement $10^{-12}$. In panel b, analytic and Monte-Carlo finite-shot curves cross the classical baseline between $S=8$ and $S=16$, consistent with the predicted threshold $S_0\approx12$, and approach the population limit by $S\simeq256$. In panel c, the ideal gap increases from $28.2\%$ to $44.1\%$ for $K\in\{2,3,4,5,7\}$, with positive finite-shot gaps for $S\ge256$. In panel d, the gap decreases monotonically as the depolarizing parameter $\eta$ decreases, ordered by the coefficient $\nu_K(\eta)$, and remains positive down to $\eta\ge0.40$ in the tested setting.
The empirical pattern is consistent with the theoretical predictions. QRAQ gains when contexts prefer incompatible signs, loses margin through the explicit shot-noise term, incurs a larger variance cost as $K$ or noise increases, and approaches the ideal value when the shot budget is large enough. Additional experiments in Appendix~\ref{subapp:additional-experiments} examine the same qualitative behavior across weight distributions, asymmetric Pauli-noise coefficients, and larger $K$.

\begin{figure}[tb]
    \centering
    \includegraphics[width=\linewidth]{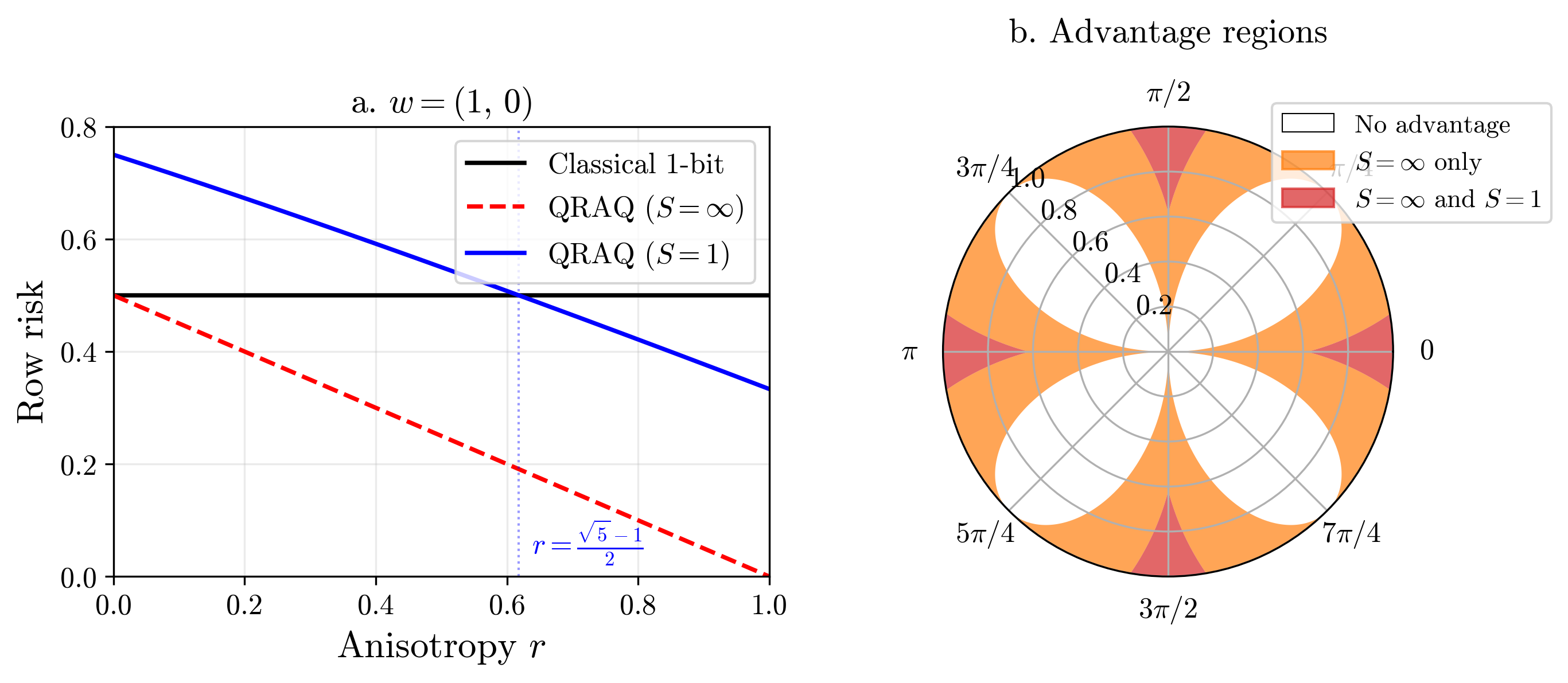}
    \caption{Resource-fair comparison on the two-context symmetric covariance family $\Sigma_\pm=I\pm r(J-I)$. Panel a shows row risk versus anisotropy for $w=(1,0)$, and panel b shows the regions in which QRAQ has lower risk in polar coordinates $(r,\theta)$ with $w=(\cos\theta,\sin\theta)$.}
    \label{fig:fair}
\end{figure}

\paragraph{Resource-fair comparison at $S=1$.}
As discussed in Section~\ref{sec:method}, a fair resource comparison grants the classical quantizer $S$~bits per weight entry. For $K=2$ the only non-trivial regime is $S=1$, where one qubit is compared against one classical bit. Figure~\ref{fig:fair} illustrates this regime on the two-context symmetric covariance family of Corollary~\ref{cor:closedform}. Panel a of Figure~\ref{fig:fair} plots the row risk of the classical baseline, ideal QRAQ ($S=\infty$), and resource-fair QRAQ ($S=1$) as a function of the anisotropy~$r$ for $w=(1,0)$. While the ideal QRAQ risk falls below the classical baseline for any $r>0$, the $S=1$ curve crosses the classical baseline at $r=(\sqrt{5}-1)/2\simeq 0.618$, matching the closed-form threshold of Corollary~\ref{cor:faircomparison}. Panel b shows, in polar coordinates $(r,\theta)$ with $w=(\cos\theta,\sin\theta)$, the region where QRAQ has lower risk than the classical baseline. The resource-fair region is a strict subset of the ideal region, as expected from the additional shot-noise cost, but it remains non-empty in this example. This illustrates that the resource-fair comparison can retain a reconstruction-risk advantage when the classical quantizer is granted the same number of memory cells per weight entry.

\section{Deployment path, limitations, and applications}\label{sec:deployment}

\paragraph{How a device would realize the advantage.}
Calibration computes context-wise signs $B^{(\tau)}$ and signed row scales $\mu_{\tau,i}$ from representative activations. A future device would then prepare one logical QRAC state $\rho_{ij}$ per weight coordinate. At inference time, a classical router supplies the context label $\tau$, the device measures each logical slot along the matched observable $A_\tau$ for $S$ independently prepared shots, and the digital backend applies $(\eta c_K)^{-1}\mu_{\tau,i}$ to the averaged outcomes. These shots are independent preparations of a known calibrated state, not clones of an unknown quantum state. The operational certificate is the same as the theoretical one: the ideal context-disagreement margin must exceed the finite-shot and noise inflation.

\paragraph{Limitations.}
The result is a reconstruction-risk separation, not an immediate wall-clock acceleration claim for current GPU-only inference stacks. It assumes accurate context labels, representative calibration covariances, and logical QRAC states that can be freshly prepared or instantiated in parallel for shot averaging, in addition to calibrated readout noise. QRAQ has no certified advantage when the context-wise sign optima already agree, when per-entry classical scales are allowed, or when finite-shot and hardware-noise costs exceed the ideal margin. Correlated hardware noise and scale classes outside signed per-row scaling are covered only by direct margin tests in Theorems~\ref{thm:correlated} and~\ref{thm:granularity-formal}; the universal row-wise separation does not extend to them automatically.

\paragraph{Applications.}
Potential targets are layers that already expose context labels and context-specific activation statistics: attention-head or route-specific blocks, task or adapter heads, and prompt-cluster caches. In these settings, the certificate can be evaluated from ordinary PTQ calibration data before any hardware deployment. Longer term, QRAQ suggests hybrid memory designs in which a classical accelerator performs dense arithmetic while a small quantum readout layer supplies context-dependent binary signs. It also motivates the co-design of sign codebooks, measurement axes, shot allocation, and hardware noise calibration.

\section{Conclusion}\label{sec:conclusion}

One-bit PTQ requires all contexts to share one binary weight matrix, even when their activation statistics prefer different signs. QRAQ replaces this shared-sign constraint with a logical QRAC readout: one stored quantum state is queried by context-matched incompatible measurements to produce a calibrated binary surrogate. Under the fresh-copy readout model, this provides an unbiased, context-specific estimator with an explicit shot-noise penalty. The resulting reconstruction-risk separation can be evaluated from the encoded state and the calibration data: signed per-row QRAQ strictly beats shared-sign one-bit PTQ when context-wise optimal signs disagree and the ideal gap exceeds finite-shot and hardware-noise costs. These results indicate that measurement incompatibility, rather than quantization alone, is the relevant quantum resource in this model.


\clearpage
\bibliographystyle{plainnat}
\bibliography{ref}

\appendix
\newpage

\section{Quantum background for a machine learning audience}\label{app:quantum-bg}

This appendix is written for readers who understand machine-learning quantization but do not assume knowledge of quantum mechanics. The main paper relies on only three quantum facts. First, a qubit is a two-dimensional state whose observable statistics are described by a density matrix. Second, measuring a Pauli observable returns a random sign in $\{\pm1\}$, whose expectation is a controlled linear functional of that state. Third, incompatible Pauli measurements cannot be replaced by one joint classical readout; this is the resource used by QRAQ. The rest of the appendix elaborates on these facts using the notation of the paper, following standard references~\citep{nielsen2010quantum,watrous2018theory,heinosaari2012mathematical}.

\paragraph{Mapping to PTQ intuition.} In classical one-bit PTQ, a memory cell stores a sign, and the decoder multiplies it by a scale. In QRAQ, the memory cell stores a density matrix, the decoder chooses a measurement axis from the context, and the measured sign is averaged over fresh copies before scaling. The formulas in Section~\ref{sec:method} are, therefore, ordinary bias–variance calculations once the measurement mean and variance are known.

\subsection{Hilbert spaces, qubits, and density matrices}\label{subapp:hilbert}

A quantum system is mathematically described by a complex Hilbert space $\mac{H}$. The basic such space is $\mab{C}^2$, and a system described by $\mab{C}^2$ is called a qubit. Vectors in $\mac{H}$ are written as $|\psi\rangle$, and their Hermitian adjoints are written as $\langle\psi|$, so $\langle\psi|\phi\rangle$ represents the inner product, and $|\psi\rangle\langle\phi|$ is a rank-one operator. A pure state is a unit vector in $\mac{H}$, or equivalently, a rank-one projector $\rho=|\psi\rangle\langle\psi|$. A mixed state is a convex combination of pure states, which represents the most general state compatible with unit trace and positive semidefiniteness:
\begin{equation}
    \rho=\sum_k p_k|\psi_k\rangle\langle\psi_k|,\qquad p_k\ge 0,\qquad \sum_k p_k=1.
\end{equation}
Every mixed state is a \emph{density matrix}, meaning it satisfies $\rho\succeq 0$, $\rho=\rho^\dagger$, and $\Tr(\rho)=1$. The set of all qubit density matrices is called the \emph{Bloch ball} and is the image, under the Bloch representation~Eq.~\eqref{eq:bloch}, of the unit three-ball in $\mab{R}^3$.

\subsection{Pauli matrices and the qubit algebra}\label{subapp:pauli}

The Pauli matrices $\{X,Y,Z\}$ displayed in~Eq.~\eqref{eq:bloch}, together with the identity $I$, form a real orthogonal basis of the space of Hermitian $2\times 2$ matrices under the Hilbert--Schmidt inner product $\langle A,B\rangle_{\mathrm{HS}}=\Tr(A^\dagger B)$, with common norm $\Tr(I^2)=\Tr(X^2)=\Tr(Y^2)=\Tr(Z^2)=2$ (so $\{I/\sqrt{2},X/\sqrt{2},Y/\sqrt{2},Z/\sqrt{2}\}$ is the associated orthonormal basis). Every $2\times 2$ Hermitian matrix $H$ can therefore be expanded as $H=h_0 I+h_xX+h_yY+h_zZ$ with $h_0,h_x,h_y,h_z\in\mab{R}$, and density matrices correspond to the special case $h_0=1/2$, $h_x^2+h_y^2+h_z^2\le 1/4$. The Pauli matrices satisfy
\begin{equation}
    \begin{aligned}
        X^2=Y^2=Z^2=I,\quad &XY=iZ,\quad YZ=iX,\quad ZX=iY,\\
        \{X,Y\}&=\{Y,Z\}=\{Z,X\}=0,
    \end{aligned}
\end{equation}
where $\{A,B\}=AB+BA$ is the anticommutator. The pairwise-anticommuting property is the key identity through which our QRAC state~Eq.~\eqref{eq:two-context-qrac} and~Eq.~\eqref{eq:K-context-qrac} are positive semidefinite at the maximal Bloch radius $c=1/\sqrt{K}$.

\subsection{Projective measurements and POVMs}\label{subapp:povm}

A \emph{projective measurement} of a self-adjoint observable $A$ with spectrum $\{a_y\}_{y\in\mac{Y}}$ and spectral projectors $\{\Pi_y\}$ returns outcome $y$ with probability $\Tr(\rho\Pi_y)$ when the system is in state $\rho$. For the Pauli observable $P\in\{X,Y,Z\}$, the spectrum is $\{+1,-1\}$ and the spectral projectors are $\Pi_\pm=(I\pm P)/2$. The outcome $m\in\{\pm 1\}$ has expectation $\mab{E}[m]=\Tr(\rho P)$ and variance $\Var(m)=1-(\Tr(\rho P))^2$.

A \emph{positive operator-valued measure}, or POVM, generalises projective measurement to non-projective operators. A POVM with classical outcomes indexed by a finite set $\mac{Y}$ is a collection $\{E_y\}_{y\in\mac{Y}}$ of positive semidefinite operators satisfying $\sum_y E_y=I$, and the outcome $y$ on state $\rho$ occurs with probability $\Tr(\rho E_y)$. POVMs model every physically realisable measurement on a quantum system (Naimark's theorem~\citep{watrous2018theory}), so our no-go and lower-bound results are expressed against the most general POVM.

\subsection{Tensor products, many qubits, and the Jordan--Wigner family}\label{subapp:tensor-jordan}

    A system of $n$ qubits has Hilbert space $\mac{H}=(\mab{C}^2)^{\otimes n}=\mab{C}^{2^n}$. Operators on $\mac{H}$ include tensor products of single-qubit operators, and in particular the \emph{Pauli strings} of the form $P_{\vec{\sigma}}=\sigma_1\otimes\sigma_2\otimes\cdots\otimes\sigma_n$ for $\sigma_i\in\{I,X,Y,Z\}$. Pauli strings are self-adjoint and square to $I$, and any two Pauli strings either commute or anticommute. The \emph{Jordan--Wigner construction} of $2n+1$ pairwise anticommuting self-inverse observables on $n$ qubits is
\begin{equation}
    \begin{aligned}
        A_{2k-1} &= Z^{\otimes(k-1)}\otimes X\otimes I^{\otimes(n-k)},\\
        A_{2k} &= Z^{\otimes(k-1)}\otimes Y\otimes I^{\otimes(n-k)},\\
        A_{2n+1} &= Z^{\otimes n}.
    \end{aligned}
    \qquad k=1,\dots,n,
\end{equation}
which one verifies by direct computation. This construction realises every multi-qubit QRAC used in Theorem~\ref{thm:multi-qubit}.

\subsection{Noise channels: depolarizing and Pauli-diagonal}\label{subapp:noise}

A \emph{quantum channel} is a completely positive trace-preserving linear map on density matrices. For a qubit, the \emph{depolarizing channel} with parameter $\eta\in[0,1]$ is
\begin{equation}
    \mac{N}_\eta(\rho)=\eta\rho+(1-\eta)I/2,
\end{equation}
which shrinks the Bloch vector by $\eta$ and leaves the trace invariant. For a logical
$d$-dimensional register in the appendix proofs, the same notation denotes
$\mac{N}_\eta(\rho)=\eta\rho+(1-\eta)I_d/d$.

A \emph{Pauli-diagonal qubit channel} is a map $\mac{N}_{\boldsymbol p}(\rho)=\sum_{P\in\{I,X,Y,Z\}}p_P P\rho P$ with $p_P\ge 0$ and $\sum_P p_P=1$. It acts on the Bloch coordinates as
\begin{equation}
    (r_x,r_y,r_z)\mapsto(\eta_x r_x,\eta_y r_y,\eta_z r_z),\qquad
    \begin{aligned}
        \eta_x&=p_I+p_X-p_Y-p_Z,\\
        \eta_y&=p_I-p_X+p_Y-p_Z,\\
        \eta_z&=p_I-p_X-p_Y+p_Z.
    \end{aligned}
    \label{eq:pauli-diag-bloch}
\end{equation}
Conversely, a triple $(\eta_x,\eta_y,\eta_z)\in[-1,1]^3$ defines a completely positive Pauli-diagonal channel \emph{if and only if} the four numbers
\begin{equation}
    \tfrac{1\pm\eta_x\pm\eta_y\pm\eta_z}{4}\quad\text{(any choice of signs with even number of minus signs)},
    \label{eq:pauli-diag-CP}
\end{equation}
namely
$
\frac{1+\eta_x+\eta_y+\eta_z}{4},\ \frac{1+\eta_x-\eta_y-\eta_z}{4},\ \frac{1-\eta_x+\eta_y-\eta_z}{4},\ \frac{1-\eta_x-\eta_y+\eta_z}{4},
$
are all non-negative; these are the corresponding probabilities $(p_I,p_X,p_Y,p_Z)$. In particular, the cubic prescription $(\eta_x,\eta_y,\eta_z)\in[0,1]^3$ is \emph{not} sufficient for complete positivity: e.g.\ $(\eta_x,\eta_y,\eta_z)=(0,1,1)$ yields $p_X=(1+0-1-1)/4=-1/4<0$ and is not CP, whereas the depolarizing line $\eta_x=\eta_y=\eta_z=\eta\in[-1/3,1]$ is CP. When the conditions of~Eq.~\eqref{eq:pauli-diag-CP} are met we call $(\eta_x,\eta_y,\eta_z)$ \emph{admissible}; this is the model under which Theorem~\ref{thm:pauli-noise} is stated.

\begin{figure}[t]
    \centering
    \includegraphics[width=0.88\linewidth]{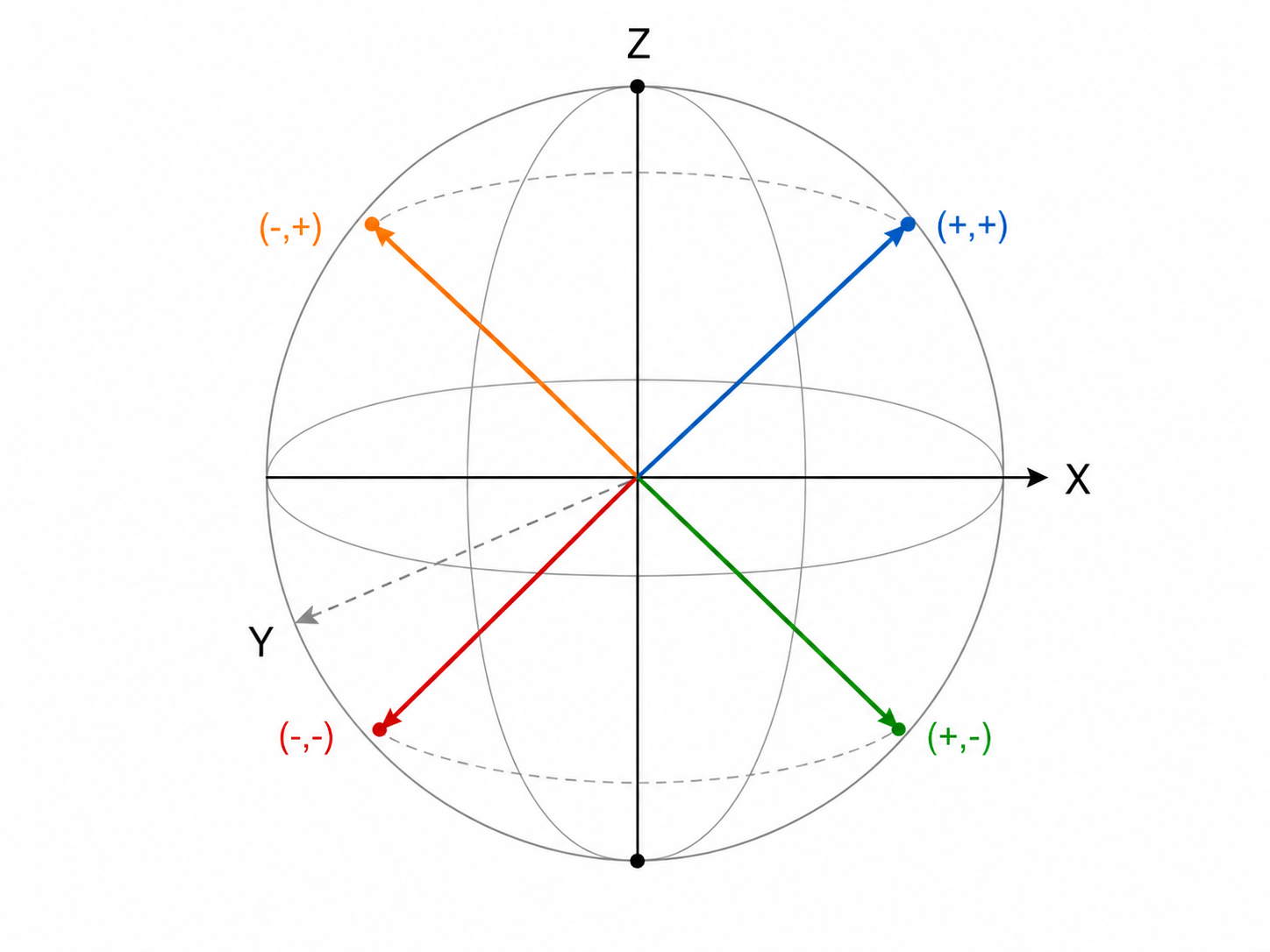}
    \caption{Bloch-sphere representation of the two-context QRAC state for a single qubit. Each state corresponds to a sign pair $(B^{(1)}_{ij}, B^{(2)}_{ij}) \in \{\pm 1\}^2$ and is represented by a Bloch vector $(B^{(1)}_{ij}/\sqrt{2},\,0,\,B^{(2)}_{ij}/\sqrt{2})$ in the $X$--$Z$ plane. Thus, the four QRAC states lie on the great circle at polar angles corresponding to the sign pairs $(+,+)$, $(+,-)$, $(-,+)$, and $(-,-)$. A context-matched Pauli measurement recovers the requested sign in expectation: measuring $X$ reads out the first sign, while measuring $Z$ reads out the second, each with signal strength $1/\sqrt{2}$.}
    \label{fig:concept}
\end{figure}

\section{Post-training quantization primer for readers from quantum information}\label{app:ptq-bg}

This appendix provides a self-contained introduction to post-training quantization (PTQ) for readers with a background in quantum information. We use only linear algebra, elementary probability, and the same notation as the main body.

\subsection{What is a neural network layer and what does quantizing it mean?}\label{subapp:layer}

A feed-forward neural network of depth $L$ is a composition $f=f_L\circ f_{L-1}\circ\dots\circ f_1$, where each layer $f_\ell$ consists of an affine map $x\mapsto W_\ell x+b_\ell$ followed by a pointwise non-linearity $\phi$ (for example the ReLU $\phi(z)=\max(0,z)$ or a variant thereof). The matrices $W_\ell\in\mab{R}^{N_\ell\times M_\ell}$ and the biases $b_\ell\in\mab{R}^{N_\ell}$ are the \emph{parameters} of the network. A trained network has very many parameters (billions for modern foundation models), and storing them in full precision (e.g., 16 or 32 bits per entry) can be a major deployment cost. \emph{Quantization} replaces each $W_\ell$ by a surrogate $\hat{W}_\ell$ whose entries are drawn from a small codebook, e.g., the two-point codebook $\{\pm 1\}$ times a small set of real scales; the biases are typically kept in higher precision because they are few in number. The goal is to retain the network's input--output behaviour while reducing the number of bits per weight entry.

\subsection{Layer-wise linear reconstruction and why it is the right objective}\label{subapp:reconstruction}

Layer-wise PTQ uses the observation that quantization error can be analysed locally for each layer. Formally, let $X_\ell$ denote the activations entering layer $\ell$, let the original layer apply $f_\ell(x)=\phi(W_\ell x+b_\ell)$, and let its quantized counterpart apply $\hat f_\ell(x)=\phi(\hat{W}_\ell x+b_\ell)$. Under a Lipschitz assumption $\lVert\phi(u)-\phi(v)\rVert\le L_\phi\lVert u-v\rVert$ (valid for ReLU with $L_\phi=1$ and for other bounded-slope activations with an appropriate constant), we have
\begin{equation}
    \lVert f_\ell(X_\ell)-\hat f_\ell(X_\ell)\rVert_F\le L_\phi\lVert (W_\ell-\hat{W}_\ell)X_\ell\rVert_F,
\end{equation}
and composing along the $L$ layers gives a telescoping bound $\lVert f(X_1)-\hat f(X_1)\rVert_F\le\sum_{\ell=1}^L\left(\prod_{\ell'>\ell}L_{\phi,\ell'}\lVert\hat{W}_{\ell'}\rVert\right)\lVert(W_\ell-\hat{W}_\ell)X_\ell\rVert_F$. Thus, to control the final output error it suffices to control the \emph{layer-wise linear reconstruction error} $\lVert(W_\ell-\hat{W}_\ell)X_\ell\rVert_F^2$, which equals~Eq.~\eqref{eq:single-layer-err} up to the label $\ell$. Collecting activations into a second-moment covariance $\Sigma=\mab{E}[X_\ell X_\ell^\top]$, the objective becomes
\begin{equation}
    \mac{L}(W,\hat{W};\Sigma)\coloneqq\Tr \left((W-\hat{W})\Sigma(W-\hat{W})^\top\right),
    \label{eq:local-layer-err}
\end{equation}
which is a simple \emph{generalised least-squares} problem on the weight matrix: the classical solution is a projection of $W$ onto the codebook under the Mahalanobis inner product induced by $\Sigma$. This is why making the \emph{linear} reconstruction error small is the central quantitative goal of PTQ: it decouples the non-linear parts of the network and reduces quantization, for every linear layer, to a well-understood quadratic optimisation on the weight matrix.

\subsection{Context-aware PTQ and why classical sign sharing hurts}\label{subapp:contextpt}

A classical one-bit quantizer~Eq.~\eqref{eq:classical-ptq} stores a binary sign matrix $C$ and per-context scales $\{\alpha_\tau\}$. Its context-$\tau$ reconstruction risk is
\begin{equation}
    \mac{L}(W,\alpha_\tau C;\Sigma_\tau)=\Tr \left((W-\alpha_\tau C)\Sigma_\tau(W-\alpha_\tau C)^\top\right).
\end{equation}
Differentiating in $\alpha_\tau$ and setting to zero yields the per-context optimal scale $\alpha_\tau^\star(C)=\Tr(C^\top W\Sigma_\tau)/\Tr(C^\top C\Sigma_\tau)$ and the minimum $\mac{L}^\star(W,C;\Sigma_\tau)=\Tr(W^\top W\Sigma_\tau)-\alpha_\tau^\star(C)^2\Tr(C^\top C\Sigma_\tau)$. Appendix~\ref{subapp:main-thm} gives the row-wise derivation. The sign matrix $C$ is optimised by solving a combinatorial problem over $\{\pm 1\}^{N\times M}$, usually approximately via ADMM~\citep{boyd2011admm,leng2018extremely}. The key observation is that $C$ has to be \emph{shared across} $\tau$: in the sum $\sum_\tau\pi_\tau\mac{L}^\star(W,C;\Sigma_\tau)$, a single $C$ must trade off against all contexts. A strict row-wise gap appears precisely when the per-context optimal sign sets have no common representative, modulo the global sign symmetry $C\sim -C$. QRAQ removes this limitation because it stores a logical QRAC register whose incompatible readouts can carry one queried sign \emph{per context}.

\subsection{The scale-granularity lattice}\label{subapp:lattice}

The classical scale class $\mac{S}$ of~Eq.~\eqref{eq:classical-ptq} is parameterised by the grouping map $g:[N]\times[M]\to[G]$. Writing $\mac{S}_1\subseteq\mac{S}_2$ whenever every matrix realisable in $\mac{S}_1$ is realisable in $\mac{S}_2$, the resulting set of classes is a \emph{partial} order (not a total one): per-row and per-column groupings are incomparable because a per-row scale $\alpha_{\tau,i}$ is only expressible as a per-column scale $\alpha_{\tau,j}$ when it is constant in $i$ (i.e., per-tensor). The classes $\mac{S}_{\mathrm{tensor}\pm},\mac{S}_{\mathrm{row}\pm},\mac{S}_{\mathrm{col}\pm},\mac{S}_{\mathrm{group}(g)\pm},\mac{S}_{\mathrm{entry}\pm}$ are defined as in~Eq.~\eqref{eq:classical-ptq} with $G$ groups corresponding respectively to $1$, $N$, $M$, a fixed partition $g$, and $NM$; we additionally define the \emph{joint row-and-column} class
\begin{equation}
    \mac{S}_{\mathrm{row\times col}\pm}\coloneqq\bigl\{\hat{W}_\tau:\hat{W}_{\tau,ij}=\alpha_{\tau,i}\beta_{\tau,j} C_{ij},\ \alpha_{\tau,i},\beta_{\tau,j}\in\mab{R}\bigr\},
    \label{eq:row-times-col-def}
\end{equation}
i.e., a separable rank-$(\alpha_\tau,\beta_\tau)$ scale acting multiplicatively on the sign matrix; this is the natural lattice element strictly above both per-row and per-column. The relevant unconditional inclusions are
\begin{multline}
    \mac{S}_{\mathrm{tensor}\pm}\subseteq\mac{S}_{\mathrm{row}\pm}\subseteq\mac{S}_{\mathrm{row\times col}\pm}\subseteq\mac{S}_{\mathrm{entry}\pm}, \\
    \mac{S}_{\mathrm{tensor}\pm}\subseteq\mac{S}_{\mathrm{col}\pm}\subseteq\mac{S}_{\mathrm{row\times col}\pm}\subseteq\mac{S}_{\mathrm{entry}\pm}, \\
    \mac{S}_{\mathrm{tensor}\pm}\subseteq\mac{S}_{\mathrm{group}(g)\pm}\subseteq\mac{S}_{\mathrm{entry}\pm}.
    \label{eq:scale-lattice}
\end{multline}
A fixed group partition is comparable with row or column scaling only when that partition refines or coarsens the corresponding row or column partition. The classical optimum is monotone non-increasing along each valid inclusion chain of this partial order. Modern BitNet-style PTQ methods operate in $\mac{S}_{\mathrm{row}\pm}$ (one signed scale per output channel) or coarser, because per-entry scales would be equivalent to storing the full-precision weight. Our main separation theorem is stated in $\mac{S}_{\mathrm{row}\pm}$; the granularity boundary of Theorem~\ref{thm:granularity} characterises which cells of~Eq.~\eqref{eq:scale-lattice} preserve the advantage.

\subsection{ADMM codebook alignment (classical optimiser we reuse as a baseline)}\label{subapp:admm}

The classical baseline against which we measure our quantum advantage is obtained by solving
\begin{equation}
    \min_{C\in\{\pm 1\}^{N\times M},\alpha_\tau\in\mac{S}}\sum_{\tau=1}^K\pi_\tau\Tr \left((W-\alpha_\tau C)\Sigma_\tau(W-\alpha_\tau C)^\top\right).
    \label{eq:classical-admm}
\end{equation}
The standard tool for~Eq.~\eqref{eq:classical-admm} is ADMM~\citep{boyd2011admm,leng2018extremely,arai2025quantization}: introduce a continuous surrogate $Z$ with $Z=C$, split the quadratic in $Z$ from the constraint on $C$, and alternate between the closed-form quadratic minimisation of $Z$ and a projection of $Z$ onto $\{\pm 1\}^{N\times M}$. We use the same solver for both the per-context classical baselines of~Eq.~\eqref{eq:classical-admm} and, inside QRAQ, for finding the per-context sign matrices $B^{(\tau)}$ row by row in Appendix~\ref{app:pseudocode}. The algorithmic structure is identical; only the number of sign matrices changes from one (classical) to $K$ (QRAQ).

\section{Full proofs}\label{app:proofs}

This appendix collects the proofs of every statement in Sections~\ref{sec:method}
and~\ref{sec:theory}. Throughout we write $c_K=1/\sqrt{K}$, $\nu_K(\eta)=K/\eta^2-1$, and let
$d=2^n$ denote the ambient Hilbert-space dimension of the (multi-)qubit register that stores
one QRAC state per weight entry. All proofs are presented as self-contained
\texttt{\textbackslash begin\{proof\}} blocks; intermediate results that are reused later
are isolated as numbered Lemmas with full proofs.

\subsection{Positivity and trace of the QRAC state}\label{subapp:positivity}

\begin{lemma}[QRAC positivity and tightness]\label{lem:positivity}
    Let $K\ge 2$ and let $\{A_\tau\}_{\tau=1}^K$ be self-adjoint operators on $\mab{C}^d$
    satisfying
    \begin{equation}
        A_\tau^2=I\quad\text{for every }\tau\in[K],\qquad A_\tau A_{\tau'}+A_{\tau'}A_\tau=0\quad\text{for every }\tau\ne\tau'.
        \label{eq:lem-positivity-axioms}
    \end{equation}
    For every $b\in\{\pm 1\}^K$ and every $c\ge 0$ define
    \begin{equation}
        S_b\coloneqq\sum_{\tau=1}^K b_\tau A_\tau,\qquad \rho_b^{(c)}\coloneqq\frac{1}{d}\bigl(I+c S_b\bigr).
        \label{eq:lem-positivity-state}
    \end{equation}
    Then:
    \begin{enumerate}
        \item[\emph{(i)}] $\Tr(A_\tau)=0$ for every $\tau$, and $\rho_b^{(c)}$ is Hermitian with
        $\Tr(\rho_b^{(c)})=1$.
        \item[\emph{(ii)}] $S_b^2=K I$; hence the spectrum of $S_b$ equals $\{-\sqrt{K},+\sqrt{K}\}$
        with each eigenvalue of multiplicity $d/2$.
        \item[\emph{(iii)}] $\rho_b^{(c)}\succeq 0$ for every $b\in\{\pm 1\}^K$ if and only if
        $c\le 1/\sqrt{K}$, and the bound is tight.
    \end{enumerate}
\end{lemma}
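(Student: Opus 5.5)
The plan is to prove the three parts in order, using only the algebraic axioms in Eq.~\eqref{eq:lem-positivity-axioms}.

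For (i), the tracelessness of each $A_\tau$ comes from the anticommutation trick. Since $K\ge 2$, pick any $\tau'\ne\tau$. Then $A_{\tau'}^2=I$ and $A_\tau A_{\tau'}=-A_{\tau'}A_\tau$, so
\[
\Tr(A_\tau)=\Tr(A_\tau A_{\tau'}A_{\tau'})=-\Tr(A_{\tau'}A_\tau A_{\tau'})=-\Tr(A_\tau A_{\tau'}^2)=-\Tr(A_\tau),
\]
where the last two steps use cyclicity of the trace. Hence $\Tr(A_\tau)=0$. Hermiticity of $\rho_b^{(c)}$ is immediate because the $b_\tau$ and $c$ are real and each $A_\tau$ is self-adjoint. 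The unit trace then follows from $\Tr(I)/d=1$ together with linearity.

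For (ii), expand the square:
\[
S_b^2=\sum_\tau b_\tau^2A_\tau^2+\sum_{\tau<\tau'}b_\tau b_{\tau'}\{A_\tau,A_{\tau'}\}=KI.
\]
Since $S_b$ is self-adjoint with $S_b^2=KI$, its eigenvalues lie in $\{\pm\sqrt K\}$. To get multiplicities, note that $\Tr(S_b)=0$ by (i). If $p$ and $q$ denote the multiplicities of $+\sqrt K$ and $-\sqrt K$, then $p+q=d$ and $\sqrt K(p-q)=0$, so $p=q=d/2$. In particular both eigenvalues actually occur, and $d$ is even.

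For (iii), the eigenvalues of $\rho_b^{(c)}$ are $(1\pm c\sqrt K)/d$, each attained because both signs occur by (ii). So $\rho_b^{(c)}\succeq0$ if and only if $1-c\sqrt K\ge0$, i.e.\ $c\le1/\sqrt K$. This holds uniformly in $b$, which gives the equivalence.

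For tightness, at $c=1/\sqrt K$ the state has a zero eigenvalue: it is $2/d$ times the projector onto the $+\sqrt K$ eigenspace, so it lies on the boundary of the PSD cone. For any $c>1/\sqrt K$ the eigenvalue $(1-c\sqrt K)/d$ is strictly negative for every $b$, so the bound cannot be improved.

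The only step that needs care is the multiplicity claim in (ii). Knowing $S_b^2=KI$ alone does not rule out a spectrum consisting entirely of $+\sqrt K$, which would make the ``only if'' direction of (iii) fail. The trace-zero argument from (i), which relies on $K\ge2$, is exactly what guarantees that $-\sqrt K$ is an eigenvalue, so I would state that dependence explicitly.
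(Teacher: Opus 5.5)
Your proposal is correct and follows the paper's proof essentially step for step: the same anticommutation-plus-cyclicity trick for $\Tr(A_\tau)=0$, the same expansion giving $S_b^2=KI$ with multiplicities fixed by $\Tr(S_b)=0$, and the same smallest-eigenvalue argument on the $-\sqrt{K}$-eigenspace for (iii). Your extra remarks are correct refinements that the statement does not need: that $d$ must be even, and that at $c=1/\sqrt{K}$ the state equals $2/d$ times the projector onto the $+\sqrt{K}$-eigenspace.
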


\begin{proof}
    \emph{(i) Tracelessness and unit trace.}
    Fix $\tau\in[K]$ and any $\tau'\ne\tau$ (this exists because $K\ge 2$). Using
    $A_{\tau'}^2=I$, cyclicity of the trace, and the anticommutation
    $A_{\tau'}A_\tau=-A_\tau A_{\tau'}$,
    \begin{equation}
        \Tr(A_\tau) =\Tr(A_{\tau'}^2 A_\tau) =\Tr(A_{\tau'}A_\tau A_{\tau'}) =-\Tr(A_\tau A_{\tau'}^2) =-\Tr(A_\tau),
    \end{equation}
    hence $2\Tr(A_\tau)=0$ and so $\Tr(A_\tau)=0$. Each $A_\tau$ is self-adjoint, so $S_b$ and
    $\rho_b^{(c)}$ are Hermitian, and
    \begin{equation}
        \Tr\bigl(\rho_b^{(c)}\bigr) =\frac{1}{d}\Tr(I)+\frac{c}{d}\sum_{\tau=1}^K b_\tau \Tr(A_\tau) =1.
    \end{equation}

    \emph{(ii) Spectrum of $S_b$.} Expand
    \begin{equation}
        S_b^2 =\sum_{\tau=1}^K b_\tau^2 A_\tau^2 +\sum_{\tau\ne\tau'} b_\tau b_{\tau'} A_\tau A_{\tau'}.
    \end{equation}
    The diagonal sum equals $\sum_\tau 1\cdot I=K I$ since $b_\tau^2=1$ and $A_\tau^2=I$. The
    off-diagonal sum vanishes term by term: pairing $(\tau,\tau')$ with $(\tau',\tau)$ gives
    \begin{equation}
        b_\tau b_{\tau'} A_\tau A_{\tau'}+b_{\tau'}b_\tau A_{\tau'}A_\tau =b_\tau b_{\tau'} (A_\tau A_{\tau'}+A_{\tau'}A_\tau)=0
    \end{equation}
    by anticommutation. Hence $S_b^2=K I$. Since $S_b$ is self-adjoint and
    $(S_b/\sqrt{K})^2=I$, every eigenvalue of $S_b$ lies in $\{-\sqrt{K},+\sqrt{K}\}$. Let
    $m_\pm$ denote the multiplicities of $\pm\sqrt{K}$. Then $m_++m_-=d$ and, by part~(i),
    $\Tr(S_b)=\sum_\tau b_\tau \Tr(A_\tau)=0$, so $\sqrt{K} (m_+-m_-)=0$ and $m_+=m_-=d/2$.

    \emph{(iii) Tight positivity bound.} By (ii), the smallest eigenvalue of $I+c S_b$ equals
    $1-c\sqrt{K}$, attained on the $-\sqrt{K}$-eigenspace of $S_b$. Therefore
    $\rho_b^{(c)}\succeq 0$ if and only if $1-c\sqrt{K}\ge 0$, i.e., $c\le 1/\sqrt{K}$.
    Conversely, for $c>1/\sqrt{K}$ any unit vector $|\psi\rangle$ in the $-\sqrt{K}$-eigenspace
    of $S_b$ satisfies $\langle\psi|\rho_b^{(c)}|\psi\rangle=(1-c\sqrt{K})/d<0$, certifying both
    non-positivity and tightness.
\end{proof}

\subsection{Independence of qubit-wise measurements}\label{subapp:indep}

\begin{lemma}[Factored product measurement]\label{lem:indep}
    Let $\rho=\bigotimes_{(i,j)}\rho_{ij}$ be a product state on
    $\bigotimes_{(i,j)}\mac{H}_{ij}$, and let the global measurement be a product POVM
    $\{\bigotimes_{(i,j)}E_{ij,y_{ij}}\}_y$ with each $\{E_{ij,y_{ij}}\}_{y_{ij}}$ a POVM on
    $\mac{H}_{ij}$. Then:
    \begin{enumerate}
        \item[\emph{(i)}] the joint outcome $(y_{ij})_{i,j}$ has a product distribution, i.e., the
        per-slot outcomes are mutually independent;
        \item[\emph{(ii)}] the per-slot signs $m^{(\tau)}_{ij}$ obtained by the matched Pauli
        measurement satisfy $\Cov\bigl(m^{(\tau)}_{ij},m^{(\tau)}_{i'j'}\bigr)=0$ for every
        $(i,j)\ne(i',j')$;
        \item[\emph{(iii)}] $\Cov(\hat{W}^{\mathrm{Q}}_\tau)$ (with $\hat{W}^{\mathrm{Q}}_\tau$
        vectorised) is diagonal.
    \end{enumerate}
\end{lemma}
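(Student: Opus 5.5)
The plan is to compute the joint outcome law directly from the Born rule and the tensor-product trace identity, and then get (ii) and (iii) as consequences of independence.

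\emph{Part (i).} Fix a joint outcome $y=(y_{ij})_{i,j}$. By the Born rule the probability is $\Pr(y)=\Tr\bigl(\rho\,\bigotimes_{(i,j)}E_{ij,y_{ij}}\bigr)$. Since $\rho=\bigotimes_{(i,j)}\rho_{ij}$, the mixed-product property gives $\rho\bigotimes E=\bigotimes_{(i,j)}(\rho_{ij}E_{ij,y_{ij}})$. The trace of a tensor product is the product of the traces, so
\begin{equation}
  \Pr(y)=\prod_{(i,j)}\Tr\bigl(\rho_{ij}E_{ij,y_{ij}}\bigr).
\end{equation}
Each factor $p_{ij}(y_{ij})\coloneqq\Tr(\rho_{ij}E_{ij,y_{ij}})$ is a genuine probability distribution. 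It is nonnegative because $\rho_{ij}\succeq 0$ and $E_{ij,y}\succeq 0$. It sums to one because $\sum_y E_{ij,y}=I$ and $\Tr(\rho_{ij})=1$. Summing the joint law over all coordinates except one shows that $p_{ij}$ is exactly the marginal of slot $(i,j)$. Since the joint law equals the product of its marginals, the outcomes are mutually independent.

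\emph{Part (ii).} Apply (i) with the matched Pauli POVM in every slot: $E_{ij,\pm}=(I\pm A_\tau)/2$, acting on the (multi-)qubit register of slot $(i,j)$. The sign $m^{(\tau)}_{ij}$ is a function of $y_{ij}$ alone. Functions of independent random variables are independent, and the signs are bounded, so $\mathbb{E}[m_{ij}m_{i'j'}]=\mathbb{E}[m_{ij}]\,\mathbb{E}[m_{i'j'}]$ for $(i,j)\neq(i',j')$. Hence the covariance vanishes.

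If several fresh-copy shots are averaged (Assumption~\ref{ass:fresh-copy}), the same argument applies to the larger product state over copies and slots. Each averaged slot statistic is then a function of a disjoint block of independent outcomes, so the slot statistics remain independent.

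\emph{Part (iii).} The entries $\hat W^{\mathrm{Q}}_{\tau,ij}=\gamma_{\tau,g(i,j)}m^{(\tau)}_{ij}$ are deterministic rescalings of the signs. Bilinearity of covariance gives
\begin{equation}
  \Cov\bigl(\hat W^{\mathrm{Q}}_{\tau,ij},\hat W^{\mathrm{Q}}_{\tau,i'j'}\bigr)=\gamma_{\tau,g(i,j)}\,\gamma_{\tau,g(i',j')}\,\Cov\bigl(m^{(\tau)}_{ij},m^{(\tau)}_{i'j'}\bigr)=0
\end{equation}
for $(i,j)\neq(i',j')$. Therefore the vectorised covariance matrix is diagonal.

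\emph{Main obstacle.} The only substantive step is the factorisation in (i). The care needed is purely bookkeeping: the tensor ordering of $\rho$ and of the POVM must match, and independence must be derived from the full joint law, not only from pairwise statements.
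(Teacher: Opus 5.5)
Your proposal is correct and follows essentially the same route as the paper. The paper also factorises $\Pr(y)$ via the Born rule and $\Tr(A\otimes B)=\Tr(A)\Tr(B)$, and then gets (ii)--(iii) because deterministic per-slot post-processing preserves independence, so covariances vanish. Your explicit marginal check and your remark about averaging over fresh-copy shots are harmless extra bookkeeping.
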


\begin{proof}
    \emph{(i)} By the Born rule and the tensor-product trace identity
    $\Tr(A\otimes B)=\Tr(A) \Tr(B)$,
    \begin{equation}
        \Pr(y\mid\rho) =\Tr \left[\Bigl(\bigotimes_{(i,j)}\rho_{ij}\Bigr) \Bigl(\bigotimes_{(i,j)}E_{ij,y_{ij}}\Bigr)\right] =\prod_{(i,j)}\Tr\bigl(\rho_{ij} E_{ij,y_{ij}}\bigr),
    \end{equation}
    which is a product over $(i,j)$ of per-slot marginals.

    \emph{(ii)--(iii)} Independence is preserved under deterministic post-processing of each
    slot, in particular under $y_{ij}\mapsto m^{(\tau)}_{ij}$ and under
    $y_{ij}\mapsto\hat{W}^{\mathrm{Q}}_{\tau,ij}$. Independence of two scalar random variables
    implies vanishing covariance, so all off-diagonal entries of $\Cov(\hat{W}^{\mathrm{Q}}_\tau)$
    vanish.
\end{proof}

\subsection{Noise-statistic derivation}\label{subapp:noise-derivation}

\begin{lemma}[Per-slot noise statistics]\label{lem:noise-stats}
    Fix a slot $(i,j)$ and let $\rho_{ij}$ be the multi-qubit QRAC state
    of~Eq.~\eqref{eq:K-context-qrac} with Bloch radius $c_K=1/\sqrt{K}$ and pairwise anticommuting
    self-inverse observables $\{A_\tau\}$. Suppose the matched observable $A_\tau$ is measured
    after applying a depolarizing channel $\mac{N}_\eta$ with $\eta\in(0,1]$, on each of
    $S\ge 1$ independent identically prepared copies (Assumption~\ref{ass:fresh-copy}). Let
    $\bar m^{(\tau)}_{ij}$ denote the empirical average of the $S$ outcomes, and define the
    quantized weight $\hat{W}^{\mathrm{Q}}_{\tau,ij}\coloneqq\gamma_{\tau,g(i,j)}\bar m^{(\tau)}_{ij}$
    together with the calibrated scale $\mu_{\tau,g}\coloneqq\eta c_K \gamma_{\tau,g}$. Then
    \begin{align}
        \mab{E}\bigl[\hat{W}^{\mathrm{Q}}_{\tau,ij}\bigr]
        &=\mu_{\tau,g(i,j)} B^{(\tau)}_{ij},\label{eq:noise-stats-mean}\\
        \Var\bigl(\hat{W}^{\mathrm{Q}}_{\tau,ij}\bigr)
        &=\frac{\mu_{\tau,g(i,j)}^2}{S} \nu_K(\eta),\qquad
        \nu_K(\eta)=\frac{K}{\eta^2}-1.\label{eq:noise-stats-var}
    \end{align}
\end{lemma}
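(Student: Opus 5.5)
The proof is a direct computation in three steps. First we find the single-shot expectation of the matched observable after noise. Then we use the fresh-copy independence to handle the $S$-shot average. Finally we rescale by $\gamma_{\tau,g}$ and re-express everything in terms of $\mu_{\tau,g}$.

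\emph{Single-shot mean.} Write $b=(B^{(1)}_{ij},\dots,B^{(K)}_{ij})$, so that $\rho_{ij}=\rho_b^{(c_K)}=\frac1d(I+c_K S_b)$ as in Lemma~\ref{lem:positivity}. Applying the logical depolarizing channel gives
\[
\mac{N}_\eta(\rho_{ij})=\eta\rho_{ij}+(1-\eta)I_d/d=\tfrac1d\bigl(I+\eta c_K S_b\bigr).
\]
We then evaluate $\Tr(\mac{N}_\eta(\rho_{ij})A_\tau)$ term by term. The identity term contributes $\Tr(A_\tau)/d=0$ by Lemma~\ref{lem:positivity}(i). In the sum $\eta c_K\sum_{\tau'} b_{\tau'}\Tr(A_{\tau'}A_\tau)/d$, the diagonal term $\tau'=\tau$ gives $\Tr(A_\tau^2)=\Tr(I)=d$. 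For the off-diagonal terms $\tau'\neq\tau$, the product $A_{\tau'}A_\tau$ is traceless. To see this, use cyclicity and anticommutation: $\Tr(A_{\tau'}A_\tau)=\Tr(A_\tau A_{\tau'})=-\Tr(A_{\tau'}A_\tau)$. Hence a single shot $m\in\{\pm1\}$ of the projective measurement of $A_\tau$ satisfies, by Eq.~\eqref{eq:expect-observable},
\[
\mab{E}[m]=\eta c_K B^{(\tau)}_{ij},\qquad \Var(m)=1-\eta^2c_K^2=1-\eta^2/K.
\]
The variance uses $(B^{(\tau)}_{ij})^2=1$.

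\emph{Averaging and rescaling.} By Assumption~\ref{ass:fresh-copy}, the $S$ shots are i.i.d. with this law. The empirical mean $\bar m^{(\tau)}_{ij}$ therefore has mean $\eta c_K B^{(\tau)}_{ij}$ and variance $(1-\eta^2/K)/S$. Multiplying by the deterministic scale $\gamma=\gamma_{\tau,g(i,j)}$ gives mean $\gamma\eta c_K B^{(\tau)}_{ij}=\mu_{\tau,g(i,j)}B^{(\tau)}_{ij}$, which is Eq.~\eqref{eq:noise-stats-mean}. The variance becomes $\gamma^2(1-\eta^2/K)/S$. Substituting $\gamma=\mu/(\eta c_K)$, i.e. $\gamma^2=K\mu^2/\eta^2$, yields
\[
\frac{\mu^2}{S}\cdot\frac{K}{\eta^2}\Bigl(1-\frac{\eta^2}{K}\Bigr)=\frac{\mu^2}{S}\Bigl(\frac{K}{\eta^2}-1\Bigr)=\frac{\mu^2}{S}\nu_K(\eta),
\]
which is Eq.~\eqref{eq:noise-stats-var}. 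The hypothesis $\eta>0$ is exactly what makes this reparameterization invertible.

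There is no real obstacle here; the argument is bookkeeping. The only points needing care are the trace-orthogonality $\Tr(A_{\tau'}A_\tau)=d\,\delta_{\tau\tau'}$, which comes from anticommutation, and keeping the normalization $1/d$ consistent between the state and the depolarizing channel on the $d$-dimensional register. One should also note that the variance formula in Eq.~\eqref{eq:expect-observable} applies because $A_\tau$ is self-inverse with spectrum $\{\pm1\}$. This spectrum follows from Lemma~\ref{lem:positivity}(ii) with $K=1$, or directly from $A_\tau^2=I$ together with tracelessness.
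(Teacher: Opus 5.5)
Your proof is correct and follows the paper's argument essentially step for step: tracelessness plus the trace orthogonality that follows from anticommutation give the single-shot mean, the $\pm1$ outcome gives the variance, fresh-copy independence handles the $S$-shot average, and substituting $\gamma=\mu/(\eta c_K)$ yields $\nu_K(\eta)$; the only cosmetic difference is that you apply the depolarizing channel to the state before taking the trace, rather than rescaling the noiseless expectation by $\eta$. One small nit: Lemma~\ref{lem:positivity} is stated for $K\ge 2$, so instead of invoking it ``with $K=1$'' just note that self-adjointness and $A_\tau^2=I$ already force the spectrum into $\{\pm1\}$, which is all the variance formula needs.
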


\begin{proof}
    \emph{Step 1: noiseless single-shot mean.}
    By Lemma~\ref{lem:positivity}(i) and the observable expectation
    formula~Eq.~\eqref{eq:expect-observable},
    \begin{equation}
        \mab{E}\bigl[m^{(\tau)}_{ij}\bigr] =\Tr(\rho_{ij} A_\tau) =\frac{1}{d} \Tr(A_\tau)+\frac{c_K}{d}\sum_{\tau'=1}^K B^{(\tau')}_{ij} \Tr(A_{\tau'}A_\tau).
        \label{eq:noise-stats-step1}
    \end{equation}
    The first term vanishes by Lemma~\ref{lem:positivity}(i). For the second sum: $\tau=\tau'$
    contributes $B^{(\tau)}_{ij} \Tr(A_\tau^2)=B^{(\tau)}_{ij} \Tr(I)=B^{(\tau)}_{ij} d$;
    for $\tau\ne\tau'$, cyclicity of the trace gives
    $\Tr(A_{\tau'}A_\tau)=\Tr(A_\tau A_{\tau'})$, while anticommutation gives
    $\Tr(A_{\tau'}A_\tau)=-\Tr(A_\tau A_{\tau'})$, so
    $\Tr(A_{\tau'}A_\tau)=0$. Substituting back into~Eq.~\eqref{eq:noise-stats-step1},
    \begin{equation}
        \mab{E}\bigl[m^{(\tau)}_{ij}\bigr]=c_K B^{(\tau)}_{ij}.
        \label{eq:mean-one-shot}
    \end{equation}

    \emph{Step 2: noiseless single-shot variance.}
    Since $m^{(\tau)}_{ij}\in\{\pm 1\}$, $\mab{E}\bigl[(m^{(\tau)}_{ij})^2\bigr]=1$, and using
    $(B^{(\tau)}_{ij})^2=1$,
    \begin{equation}
        \Var\bigl(m^{(\tau)}_{ij}\bigr) =1-\bigl(c_K B^{(\tau)}_{ij}\bigr)^2 =1-c_K^2.
        \label{eq:var-one-shot}
    \end{equation}

    \emph{Step 3: depolarizing channel.}
    For the depolarizing channel
    $\mac{N}_\eta(\rho)=\eta\rho+(1-\eta) I/d$ applied before measurement, linearity of the
    trace and $\Tr(A_\tau)=0$ yield
    \begin{equation}
        \Tr\bigl(\mac{N}_\eta(\rho) A_\tau\bigr) =\eta \Tr(\rho A_\tau)+\frac{1-\eta}{d} \Tr(A_\tau) =\eta \Tr(\rho A_\tau).
    \end{equation}
    Hence the per-shot mean becomes $\eta c_K B^{(\tau)}_{ij}$, and since the outcome remains
    in $\{\pm 1\}$ the per-shot variance becomes $1-(\eta c_K B^{(\tau)}_{ij})^2=1-\eta^2 c_K^2$.

    \emph{Step 4: $S$-shot averaging.}
    Under Assumption~\ref{ass:fresh-copy} the $S$ shots are independent, so the empirical mean
    $\bar m^{(\tau)}_{ij}$ satisfies
    \begin{equation}
        \mab{E}\bigl[\bar m^{(\tau)}_{ij}\bigr]=\eta c_K B^{(\tau)}_{ij},\qquad \Var\bigl(\bar m^{(\tau)}_{ij}\bigr)=\frac{1-\eta^2 c_K^2}{S}.
        \label{eq:S-shot-mean-var}
    \end{equation}

    \emph{Step 5: rescaling to $\mu_{\tau,g}$.}
    Substituting $\gamma_{\tau,g}=\mu_{\tau,g}/(\eta c_K)$ into
    $\hat{W}^{\mathrm{Q}}_{\tau,ij}=\gamma_{\tau,g(i,j)}\bar m^{(\tau)}_{ij}$, by linearity of
    expectation and homogeneity of variance,
    \begin{align}
        \mab{E}\bigl[\hat{W}^{\mathrm{Q}}_{\tau,ij}\bigr]
        &=\gamma_{\tau,g}\cdot\eta c_K B^{(\tau)}_{ij}=\mu_{\tau,g} B^{(\tau)}_{ij},\\
        \Var\bigl(\hat{W}^{\mathrm{Q}}_{\tau,ij}\bigr)
        &=\gamma_{\tau,g}^2\cdot\frac{1-\eta^2 c_K^2}{S}
        =\frac{\mu_{\tau,g}^2}{\eta^2 c_K^2}\cdot\frac{1-\eta^2 c_K^2}{S}
        =\frac{\mu_{\tau,g}^2}{S}\Bigl(\frac{1}{\eta^2 c_K^2}-1\Bigr).
    \end{align}
    Substituting $c_K^2=1/K$ gives
    $1/(\eta^2 c_K^2)-1=K/\eta^2-1=\nu_K(\eta)$, which proves~Eq.~\eqref{eq:noise-stats-var}.
\end{proof}

\subsection{Bias-variance decomposition}\label{subapp:bv-decomp}

\begin{lemma}[Bias-variance decomposition for the QRAQ risk]\label{lem:bv-decomp}
    Let $\hat{W}_\tau$ be a random estimator of $W$ with finite second moment, write
    $\bar W_\tau\coloneqq\mab{E}[\hat{W}_\tau]$ and $U_\tau\coloneqq\hat{W}_\tau-\bar W_\tau$.
    Then for every $\Sigma_\tau\succeq 0$,
    \begin{equation}
        \mab{E} \Tr \bigl[(W-\hat{W}_\tau) \Sigma_\tau (W-\hat{W}_\tau)^\top\bigr] =\Tr \bigl[(W-\bar W_\tau) \Sigma_\tau (W-\bar W_\tau)^\top\bigr] +\sum_{i=1}^N\Tr\bigl(\Cov(U_{\tau,i,:}) \Sigma_\tau\bigr).
        \label{eq:bv-decomp-id}
    \end{equation}
    In particular, when $\hat{W}_\tau$ is the QRAQ estimator with diagonal within-row covariance
    (Lemma~\ref{lem:indep}) and per-slot variance given by Lemma~\ref{lem:noise-stats},
    Eq.~\eqref{eq:bv-decomp-id} reduces to the noise-augmented QRAQ
    risk~Eq.~\eqref{eq:qrac-risk}.
\end{lemma}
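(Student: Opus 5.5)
The proof is a standard bias--variance computation carried out one row at a time. Fix a context $\tau$ and write $E_\tau\coloneqq W-\hat{W}_\tau$. Since $W$ is deterministic, $E_\tau=(W-\bar W_\tau)-U_\tau$ with $\mab{E}[U_\tau]=0$. Because the trace is additive over rows, we have $\Tr(E_\tau\Sigma_\tau E_\tau^\top)=\sum_{i=1}^N e_i\Sigma_\tau e_i^\top$, where $e_i$ is the $i$-th row of $E_\tau$. It therefore suffices to prove the identity for a single row and sum over $i$.

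For one row, write $e_i=d_i-u_i$ with $d_i\coloneqq(W-\bar W_\tau)_{i,:}$ deterministic and $u_i\coloneqq U_{\tau,i,:}$ of mean zero. Expanding the quadratic form gives
\[
e_i\Sigma_\tau e_i^\top=d_i\Sigma_\tau d_i^\top-2\,d_i\Sigma_\tau u_i^\top+u_i\Sigma_\tau u_i^\top .
\]
Taking expectations, the cross term vanishes by linearity, since $\mab{E}[u_i]=0$. The finite-second-moment hypothesis makes every expectation here finite. For the last term, use $u_i\Sigma_\tau u_i^\top=\Tr(\Sigma_\tau u_i^\top u_i)$ and interchange expectation with trace. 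Because $u_i$ is centered, $\mab{E}[u_i^\top u_i]=\Cov(u_i)$, so this term equals $\Tr(\Cov(U_{\tau,i,:})\Sigma_\tau)$. Summing over $i$ yields Eq.~\eqref{eq:bv-decomp-id}.

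For the specialization, take $\hat{W}_\tau=\hat{W}^{\mathrm{Q}}_\tau$. By Lemma~\ref{lem:noise-stats}, $\bar W_\tau=Q_\tau$ with $Q_{\tau,ij}=\mu_{\tau,g(i,j)}B^{(\tau)}_{ij}$, so the first term is $R_\tau(W,Q_\tau)$. By Lemma~\ref{lem:indep}, $\Cov(U_{\tau,i,:})$ is diagonal. Its diagonal entries are $\mu_{\tau,g(i,j)}^2\nu_K(\eta)/S$, again by Lemma~\ref{lem:noise-stats}. The trace against $\Sigma_\tau$ therefore only sees the diagonal of $\Sigma_\tau$:
\[
\Tr\bigl(\Cov(U_{\tau,i,:})\Sigma_\tau\bigr)=\frac{\nu_K(\eta)}{S}\sum_j\mu_{\tau,g(i,j)}^2(\Sigma_\tau)_{jj}.
\]
Summing over $i$, weighting by $\pi_\tau$, and summing over $\tau$ gives exactly Eq.~\eqref{eq:qrac-risk}. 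Here the expectation is taken over the measurement randomness inside each context.

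No step is a real obstacle. The one point needing care is bookkeeping: the row vectors must be oriented consistently, so that the covariance is $M\times M$ and pairs correctly with $\Sigma_\tau$. The other is checking that Lemma~\ref{lem:indep} still applies after $S$-shot averaging. It does, because the averaging acts on independent fresh copies slot by slot, and deterministic per-slot post-processing preserves independence.
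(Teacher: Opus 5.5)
Your proposal is correct and follows essentially the same route as the paper: both write the error as the bias minus the centered noise, kill the cross term using $\mab{E}[U_\tau]=0$, and evaluate the variance term row by row as $\Tr(\Cov(U_{\tau,i,:})\Sigma_\tau)$. The QRAQ specialization then uses the diagonal covariance from Lemma~\ref{lem:indep} and the per-slot variances from Lemma~\ref{lem:noise-stats}, as in the paper. The only difference is cosmetic: you split into rows before expanding the quadratic, whereas the paper expands the matrix trace first. Your remark that independence across slots survives $S$-shot averaging fills a small point the paper leaves implicit.
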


\begin{proof}
    Write $W-\hat{W}_\tau=(W-\bar W_\tau)-U_\tau$ with $\mab{E}[U_\tau]=0$, and expand
    \begin{align}
        &\mab{E} \Tr \bigl[(W-\hat{W}_\tau) \Sigma_\tau (W-\hat{W}_\tau)^\top\bigr]\nonumber\\
        &\qquad=\Tr \bigl[(W-\bar W_\tau) \Sigma_\tau (W-\bar W_\tau)^\top\bigr]
        -2 \mab{E} \Tr \bigl[(W-\bar W_\tau) \Sigma_\tau U_\tau^\top\bigr]
        +\mab{E} \Tr \bigl[U_\tau \Sigma_\tau U_\tau^\top\bigr],
        \label{eq:bv-expand}
    \end{align}
    by linearity of the trace and of expectation. The middle term vanishes:
    $(W-\bar W_\tau) \Sigma_\tau$ is deterministic and $\mab{E}[U_\tau^\top]=0$, so
    \begin{equation}
        \mab{E} \Tr \bigl[(W-\bar W_\tau) \Sigma_\tau U_\tau^\top\bigr] =\Tr \bigl[(W-\bar W_\tau) \Sigma_\tau \mab{E}[U_\tau^\top]\bigr] =0.
    \end{equation}
    For the last term, decompose row-by-row:
    \begin{equation}
        \mab{E} \Tr \bigl[U_\tau \Sigma_\tau U_\tau^\top\bigr] =\sum_{i=1}^N\mab{E}\bigl[U_{\tau,i,:} \Sigma_\tau U_{\tau,i,:}^\top\bigr] =\sum_{i=1}^N\Tr\bigl(\Cov(U_{\tau,i,:}) \Sigma_\tau\bigr),
        \label{eq:row-cov-trace}
    \end{equation}
    where the second equality uses, for any zero-mean row vector $u\in\mab{R}^M$,
    \begin{equation}
        \mab{E}[u \Sigma u^\top] =\sum_{j,k}\Sigma_{jk} \mab{E}[u_j u_k] =\sum_{j,k}\Sigma_{jk} \Cov(u)_{jk} =\Tr\bigl(\Cov(u) \Sigma\bigr).
        \label{eq:zero-mean-quad}
    \end{equation}
    Substituting~Eq.~\eqref{eq:row-cov-trace} into~Eq.~\eqref{eq:bv-expand}
    yields~Eq.~\eqref{eq:bv-decomp-id}.

    \emph{QRAQ specialisation.} Lemma~\ref{lem:indep} makes $\Cov(U_{\tau,i,:})$ diagonal with
    diagonal entries $\Var(\hat{W}^{\mathrm{Q}}_{\tau,ij})=\mu_{\tau,g(i,j)}^2\nu_K(\eta)/S$ from
    Lemma~\ref{lem:noise-stats}. Hence
    $\Tr(\Cov(U_{\tau,i,:}) \Sigma_\tau)=\sum_j\Var(\hat{W}^{\mathrm{Q}}_{\tau,ij}) (\Sigma_\tau)_{jj}$,
    and summing over $i$ gives the noise term in~Eq.~\eqref{eq:qrac-risk}.
\end{proof}

\subsection{Formal fixed-readout no-go}\label{subapp:nogo}

\begin{theorem}[Fixed-readout classical simulability]\label{thm:no-go-formal}
    Let a fixed-readout quantum quantizer store a product state $\rho=\bigotimes_{(i,j)}\rho_{ij}$ over weight slots and apply the same POVM $\{E_y\}_{y\in\mac{Y}}$ at every slot in every context, with the context $\tau$ entering only through a deterministic decoder $f_{\tau,g}:\mac{Y}\to\mab{R}$ producing $\hat{W}^{\mathrm{Q}}_{\tau,ij}=f_{\tau,g(i,j)}(Y_{ij})$. Then there exists a classical stochastic simulator that draws, per weight slot $(i,j)$, one shared classical random variable $\xi_{ij}\in\mac{Y}$ with $\Pr(\xi_{ij}=y)=\Tr(\rho_{ij}E_y)$ and outputs $\hat{W}^{\mathrm{C}}_{\tau,ij}=f_{\tau,g(i,j)}(\xi_{ij})$. The joint distributions of $\{\hat{W}^{\mathrm{C}}_{\tau,ij}\}_{i,j}$ and $\{\hat{W}^{\mathrm{Q}}_{\tau,ij}\}_{i,j}$ are identical for every context $\tau$, and the two schemes have the same expected layer-wise reconstruction risk. If $|\mac{Y}|=2$ and $f_{\tau,g}(+1)=-f_{\tau,g}(-1)$, the simulator is a stochastic signed one-bit quantizer in the corresponding scale class; if $|\mac{Y}|=2$ without sign symmetry, it is an affine one-bit quantizer with a context-dependent zero-point; if $|\mac{Y}|>2$, it is a classical $|\mac{Y}|$-ary stochastic quantizer.
\end{theorem}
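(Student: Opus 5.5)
The approach is to compute the joint outcome law of the quantum scheme and show that it depends on the context only through the classical decoder. Then we sample that same law classically.

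First, fix a context $\tau$. The stored state is the product state $\rho=\bigotimes_{(i,j)}\rho_{ij}$, and the measurement applied at every slot is the fixed POVM $\{E_y\}$, independent of $\tau$. So the global measurement is the product POVM $\{\bigotimes_{(i,j)}E_{y_{ij}}\}$. Lemma~\ref{lem:indep}(i) gives
\[
\Pr\bigl((Y_{ij})_{i,j}=(y_{ij})_{i,j}\bigr)=\prod_{(i,j)}\Tr(\rho_{ij}E_{y_{ij}}),
\]
and this does not depend on $\tau$.

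Next, define the simulator. It draws independent $\xi_{ij}$ with $\Pr(\xi_{ij}=y)=\Tr(\rho_{ij}E_y)$. These are valid probability vectors because $E_y\succeq 0$ and $\sum_yE_y=I$ give nonnegativity and unit trace. Then $(\xi_{ij})$ and $(Y_{ij})$ are equal in law. Applying the same deterministic map $(y_{ij})\mapsto(f_{\tau,g(i,j)}(y_{ij}))$ to both, the pushforwards coincide, so $\{\hat W^{\mathrm C}_{\tau,ij}\}$ and $\{\hat W^{\mathrm Q}_{\tau,ij}\}$ have identical joint distributions for every $\tau$.

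Equality of joint laws gives equal expectations of any measurable functional. In particular the expected risk $\mathbb E\,\Tr\bigl((W-\hat W_\tau)\Sigma_\tau(W-\hat W_\tau)^\top\bigr)$ is equal for the two schemes; the expectation is finite since $\mathcal Y$ is finite. Summing with weights $\pi_\tau$ gives equality of the context risk \eqref{eq:context-risk}.

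A subtle point is that one shared $\xi_{ij}$ serves all contexts. This is legitimate because the outcome law is context-independent, so the same classical memory suffices. This is exactly where incompatibility is absent.

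The three classifications are then routine rewriting of the decoder.
\begin{itemize}
\item If $|\mathcal Y|=2$, say $\mathcal Y=\{\pm1\}$, with $f_{\tau,g}(+1)=-f_{\tau,g}(-1)=:\alpha_{\tau,g}$, then $\hat W^{\mathrm C}_{\tau,ij}=\alpha_{\tau,g(i,j)}\xi_{ij}$. This has the form \eqref{eq:classical-ptq} with a random shared sign matrix $C=\xi$, i.e.\ a stochastic signed one-bit quantizer in the scale class of $g$.
\item Without sign symmetry, write $f(y)=\alpha y+z$ with $\alpha=(f(+1)-f(-1))/2$ and $z=(f(+1)+f(-1))/2$. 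This gives an affine one-bit quantizer with context-dependent zero-point $z_{\tau,g}$.
\item For $|\mathcal Y|>2$, the stored $\xi_{ij}$ is a $|\mathcal Y|$-ary symbol decoded by a context-dependent codebook.
\end{itemize}

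There is no real obstacle here. The only care needed is to make explicit that the POVM, and hence the outcome law, carries no $\tau$-dependence, and that the decoders are deterministic. If decoders were randomized, one would add independent classical randomness to the simulator.
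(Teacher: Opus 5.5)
Your proposal is correct and follows the paper's proof essentially step for step. The paper also factorises the outcome law via Lemma~\ref{lem:indep}(i), draws one independent classical $\xi_{ij}$ per slot with the same context-independent marginals, and pushes both forward through the deterministic decoders to get equal joint laws and equal expected risk; it then classifies the decoder by writing the binary decoder as $f_{\tau,g}(C)=\beta_{\tau,g}+\alpha_{\tau,g}C$, using the same midpoint and half-difference coefficients as you.
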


\begin{proof}[Proof of Theorem~\ref{thm:no-go-formal}]
    Per the fixed-readout setup, the quantum memory stores one state per weight slot, so the global state factorises as $\rho=\bigotimes_{(i,j)}\rho_{ij}$. Let $\mac{E}=\{E_y\}_{y\in\mac{Y}}$ be the fixed POVM applied at every slot in every context, and suppose context $\tau$ enters only through a deterministic decoder $f_{\tau,g(i,j)}:\mac{Y}\to\mab{R}$ that maps the measurement outcome to a quantized weight component, so $\hat{W}^{\mathrm{Q}}_{\tau,ij}=f_{\tau,g(i,j)}(Y_{ij})$ where $Y_{ij}$ is the outcome at slot $(i,j)$. For each slot, the outcome distribution $\Pr(Y_{ij}=y)=\Tr(\rho_{ij}E_y)$ is independent of $\tau$.

    \emph{Step 1: joint-distribution simulability.}
    Define one classical random variable $\xi_{ij}\in\mac{Y}$ per slot with law
    $\Pr(\xi_{ij}=y)=\Tr(\rho_{ij}E_y)$, mutually independent across $(i,j)$. By
    Lemma~\ref{lem:indep}(i) applied to the product state $\bigotimes_{(i,j)}\rho_{ij}$ and the
    product POVM $\bigotimes_{(i,j)}\{E_y\}$, the joint law of $\{Y_{ij}\}_{i,j}$ also
    factorises into independent marginals identical to those of $\{\xi_{ij}\}_{i,j}$. Hence the
    random matrices
    \begin{equation}
        \hat{W}^{\mathrm{Q}}_\tau=\bigl(f_{\tau,g(i,j)}(Y_{ij})\bigr)_{i,j},\qquad \hat{W}^{\mathrm{C}}_\tau\coloneqq\bigl(f_{\tau,g(i,j)}(\xi_{ij})\bigr)_{i,j},
    \end{equation}
    have identical joint distributions for every context $\tau$. Since the reconstruction
    risk~Eq.~\eqref{eq:context-risk} is a deterministic quadratic functional of the random matrix
    $\hat{W}_\tau$, equality in distribution implies equality of the expected risks for every
    $\tau$ and hence of their $\pi$-weighted sum.

    \emph{Step 2: decoder-class identification.}
    For $|\mac{Y}|=2$, identify $\mac{Y}$ with $\{+1,-1\}$ and parameterise the binary decoder by
    \begin{equation}
        f_{\tau,g}(C)=\beta_{\tau,g}+\alpha_{\tau,g} C,\quad \beta_{\tau,g}=\tfrac{f_{\tau,g}(+1)+f_{\tau,g}(-1)}{2},\quad \alpha_{\tau,g}=\tfrac{f_{\tau,g}(+1)-f_{\tau,g}(-1)}{2}.
    \end{equation}
    The simulator of Step~1 then takes the form
    $\hat{W}^{\mathrm{C}}_{\tau,ij}=\beta_{\tau,g(i,j)}+\alpha_{\tau,g(i,j)} C_{ij}$
    with classical sign $C_{ij}\coloneqq 2 \mathbf{1}[\xi_{ij}=+1]-1\in\{\pm 1\}$ and
    per-context-and-group scale $\alpha_{\tau,g}$, plus zero-point $\beta_{\tau,g}$. The
    zero-point vanishes if and only if $f_{\tau,g}$ is sign-symmetric
    ($f_{\tau,g}(+1)=-f_{\tau,g}(-1)$); in this case the simulator lives in
    $\mac{S}_{\mathrm{row}\pm}$ (or in a coarser class, depending on $g$). Otherwise the
    simulator is an affine one-bit quantizer with a context-dependent zero-point.

    For general $|\mac{Y}|>2$, the same identification produces a classical stochastic quantizer
    with $|\mac{Y}|$ codebook entries per slot, encodable in $\lceil\log_2|\mac{Y}|\rceil$
    classical bits per slot.

    \emph{Step 3: conclusion.}
    In every case, the classical simulator matches the quantum scheme in distribution and risk,
    so no fixed-readout quantum scheme can strictly beat the corresponding classical stochastic
    decoder class with the same outcome alphabet and decoder family.
\end{proof}

\subsection{Formal sufficient advantage condition}\label{subapp:sufficient}

\begin{theorem}[Sufficient advantage condition]\label{thm:sufficient-formal}
    Fix a scale class $\mac{S}$ and a context distribution $\{\Sigma_\tau,\pi_\tau\}$. Let $\Eq^\star$ and $\Ec^{\mac{S},\star}$ be the optima of the QRAQ and classical shared-sign risks over their admissible variables. A sufficient condition for $\Eq^\star<\Ec^{\mac{S},\star}$ is that there exist per-context sign matrices $\{B^{(\tau)}\}\in(\{\pm 1\}^{N\times M})^K$ and scales $\{\mu_\tau\}\in\mac{S}^K$ such that
    \begin{equation}
        \sum_{\tau=1}^K\pi_\tau R_\tau\bigl(W,\mu_\tau\odot B^{(\tau)}\bigr) +\frac{\nu_K(\eta)}{S}\sum_{\tau=1}^K\sum_{i=1}^N\sum_{j=1}^M\pi_\tau \mu_{\tau,g(i,j)}^2 (\Sigma_\tau)_{jj} <\min_{C,\alpha}\sum_{\tau=1}^K\pi_\tau R_\tau(W,\alpha_\tau\odot C).
        \label{eq:sufficient}
    \end{equation}
\end{theorem}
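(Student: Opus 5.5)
The argument is a direct chain of inequalities. The left-hand side of Eq.~\eqref{eq:sufficient} is exactly the QRAQ risk $\Eq$ from Eq.~\eqref{eq:qrac-risk}, evaluated at one admissible point $(\{B^{(\tau)}\},\{\mu_\tau\})$. The right-hand side is the classical optimum $\Ec^{\mac{S},\star}$. Since $\Eq^\star$ is an infimum over all admissible QRAQ variables, it is at most the value at that witness point. The strict inequality then passes straight through:
\begin{equation}
    \Eq^\star\le\Eq\bigl(W,\{B^{(\tau)},\mu_\tau\}\bigr)<\Ec^{\mac{S},\star}.
\end{equation}

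\emph{Step 1: the left-hand side equals the QRAQ objective.} For the given signs and scales, set raw scales $\gamma_{\tau,g}=\mu_{\tau,g}/(\eta c_K)$. Lemma~\ref{lem:noise-stats} then gives mean $\mu_{\tau,g(i,j)}B^{(\tau)}_{ij}$ and variance $\mu^2_{\tau,g(i,j)}\nu_K(\eta)/S$ for each entry. Lemma~\ref{lem:indep} makes the within-row covariance diagonal. Lemma~\ref{lem:bv-decomp} converts the expected context risk into the bias term $R_\tau(W,\mu_\tau\odot B^{(\tau)})$ plus the variance term $\frac{\nu_K(\eta)}{S}\sum_{i,j}\mu_{\tau,g(i,j)}^2(\Sigma_\tau)_{jj}$. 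Weighting by $\pi_\tau$ and summing over contexts reproduces the left-hand side of Eq.~\eqref{eq:sufficient}.

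\emph{Step 2: admissibility.} Here I must check that $\mu_\tau\in\mac{S}$ corresponds to a legitimate QRAQ configuration. Since $\eta c_K>0$, the map $\gamma\mapsto\mu$ is a bijection of $\mab{R}$, so every scale in $\mac{S}$ is realized by some $\gamma$. The states of Eq.~\eqref{eq:K-context-qrac} are valid density matrices by Lemma~\ref{lem:positivity}.

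\emph{Step 3: the classical minimum is attained.} This is the only point needing care. For a fixed $C$, $\min_\alpha$ is a finite sum of quadratic minimizations. Each has a closed form: $\alpha$ is a least-squares coefficient, or $0$ if its denominator vanishes, so the infimum is achieved. There are finitely many sign matrices $C$. Therefore $\min_{C,\alpha}$ is a genuine minimum and equals $\Ec^{\mac{S},\star}$. If $\mac{S}$ were a constrained non-closed set, I would instead read the right-hand side as an infimum, and the chain of inequalities is unchanged.

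I expect no real obstacle. The only subtlety is making sure the risk definition in Eq.~\eqref{eq:qrac-risk} and the hypothesis refer to the same reparameterized scale $\mu$, which Step 1 settles.
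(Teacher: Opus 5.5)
Your proposal is correct and is essentially the paper's proof: evaluate the QRAQ objective at the witness $(\{B^{(\tau)}\},\{\mu_\tau\})$ via Lemmas~\ref{lem:noise-stats}, \ref{lem:indep} and~\ref{lem:bv-decomp}, which bounds the QRAQ optimum by the left-hand side; then identify the right-hand side with the classical optimum and let the strict inequality pass through. Your Step~3 on attainment is harmless but unnecessary, since reading the right-hand side as an infimum leaves the chain intact, as you note; also, its scalar closed form literally covers only per-tensor and per-row scales, because column or group scales give a multivariate least-squares problem.
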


\begin{proof}[Proof of Theorem~\ref{thm:sufficient-formal}]
    Any feasible pair
    $\{B^{(\tau)},\mu_\tau\}\in(\{\pm 1\}^{N\times M})^K\times\mac{S}^K$ produces, by
    Lemma~\ref{lem:bv-decomp}, the QRAQ risk upper bound
    \begin{equation}
        \Eq^\star \le\sum_{\tau=1}^K\pi_\tau R_\tau\bigl(W,\mu_\tau\odot B^{(\tau)}\bigr) +\frac{\nu_K(\eta)}{S}\sum_{\tau=1}^K\sum_{i=1}^N\sum_{j=1}^M\pi_\tau \mu_{\tau,g(i,j)}^2 (\Sigma_\tau)_{jj},
        \label{eq:sufficient-rhs-Q}
    \end{equation}
    where the first sum is the bias and the second is the variance contribution from
    Lemma~\ref{lem:bv-decomp}. The classical one-bit minimum is
    \begin{equation}
        \Ec^{\mac{S},\star} =\min_{C\in\{\pm 1\}^{N\times M}, \alpha_\tau\in\mac{S}^K} \sum_{\tau=1}^K\pi_\tau R_\tau(W,\alpha_\tau\odot C),
        \label{eq:sufficient-rhs-C}
    \end{equation}
    which equals the right-hand side of~Eq.~\eqref{eq:sufficient} by definition of the classical
    quantizer (the sign matrix $C$ does not pick up a shot-noise regulariser because classical
    bits are noiseless). If~Eq.~\eqref{eq:sufficient} holds, the right-hand side
    of~Eq.~\eqref{eq:sufficient-rhs-Q} is strictly less than~Eq.~\eqref{eq:sufficient-rhs-C}, so
    $\Eq^\star<\Ec^{\mac{S},\star}$.
\end{proof}

\subsection{Formal proof of the main row-wise separation}\label{subapp:main-thm}

\begin{theorem}[Main separation: per-row signed scale]\label{thm:main-formal}
    Assume the scale class $\mac{S}_{\mathrm{row}\pm}$, a pairwise anticommuting $K$-mode QRAC with $c_K=1/\sqrt K$, depolarizing parameter $\eta\in(0,1]$, context prior $\pi_\tau>0$ for every $\tau\in[K]$, and $\Sigma_\tau\succeq0$ with $b\Sigma_\tau b^\top>0$ for every $b\in\{\pm1\}^M$. For each row $i$, define
    \begin{equation}
        \mathcal{E}^{\mathrm{row\pm}}_{\mathrm{C},i} \coloneqq\min_{c\in\{\pm 1\}^M, \alpha_\tau\in\mab{R}} \sum_{\tau=1}^K\pi_\tau (W_{i,:}-\alpha_\tau c) \Sigma_\tau (W_{i,:}-\alpha_\tau c)^\top,
        \label{eq:row-classical}
    \end{equation}
    define $\mathcal{E}^{\mathrm{row}}_{\mathrm{Q},i}(S)$ from~Eq.~\eqref{eq:qrac-risk} restricted to row $i$, and let $\Delta_i^\infty$ and $\Delta_i(S)$ be the corresponding classical-minus-QRAQ gaps at $S=\infty$ and finite $S$. Then:
    \begin{enumerate}
        \item[\emph{(a)}] $\Delta_i^\infty\ge0$, with $\Delta_i^\infty>0$ if and only if $\bigcap_{\tau=1}^K\mac{M}_\tau=\emptyset$ modulo the global sign symmetry $b\sim-b$, where $\mac{M}_\tau=\argmin_{b\in\{\pm1\}^M}J_\tau^\star(b)$ and $J_\tau^\star$ is defined in~Eq.~\eqref{eq:ls}.
        \item[\emph{(b)}] If the intersection condition in part~\emph{(a)} holds, then $\Delta_i(S)>0$ whenever
        \begin{equation}
            S>S_0^{(i)} \coloneqq \frac{\nu_K(\eta)\sum_{\tau=1}^K\pi_\tau T_\tau (\mu_\tau^\star)^2}{\Delta_i^\infty}, \qquad T_\tau\coloneqq\sum_{j=1}^M(\Sigma_\tau)_{jj},
            \label{eq:shots-threshold-proof}
        \end{equation}
        for any ideal row-wise QRAQ optimizer $\{B_\tau^\star,\mu_\tau^\star\}_\tau$. If the intersection condition fails, then $\Delta_i(S)\le0$ for every finite $S$.
    \end{enumerate}
\end{theorem}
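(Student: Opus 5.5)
The plan is to reduce everything to a single row and then to a finite comparison between ``min of sum'' and ``sum of mins''. First, fix row $i$ with $w=W_{i,:}$. For a fixed sign vector $b\in\{\pm1\}^M$ and context $\tau$, the map $\alpha\mapsto(w-\alpha b)\Sigma_\tau(w-\alpha b)^\top$ is a strictly convex quadratic, because $b\Sigma_\tau b^\top>0$. Its unique minimizer is $\alpha_\tau^\star(b)=b\Sigma_\tau w^\top/(b\Sigma_\tau b^\top)$, and the minimum value is the $J_\tau^\star(b)$ of the takeaway box, i.e.\ Eq.~\eqref{eq:ls}. Two consequences follow. In the classical problem Eq.~\eqref{eq:row-classical}, the scales $\alpha_\tau$ are chosen separately per context while $c$ is shared. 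Hence $\mathcal{E}^{\mathrm{row\pm}}_{\mathrm{C},i}=\min_c\sum_\tau\pi_\tau J_\tau^\star(c)$. In the ideal QRAQ row risk (take $S=\infty$ in Eq.~\eqref{eq:qrac-risk}, so the variance term drops), both $B^{(\tau)}_{i,:}$ and $\mu_{\tau,i}$ are free per context, so the row objective separates and equals $\sum_\tau\pi_\tau\min_b J_\tau^\star(b)$. Finally, $J_\tau^\star(b)=J_\tau^\star(-b)$, so every $\mathcal{M}_\tau$ is closed under $b\mapsto-b$. This is what makes ``modulo global sign'' harmless.

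For part (a), $\Delta_i^\infty\ge0$ follows because $\min_c\sum_\tau\pi_\tau J_\tau^\star(c)\ge\sum_\tau\pi_\tau\min_b J_\tau^\star(b)$ holds termwise. For the equality case, suppose a common $c\in\bigcap_\tau\mathcal{M}_\tau$ exists. Then $c$ attains every inner minimum at once, so $\Delta_i^\infty=0$. Conversely, suppose $\Delta_i^\infty=0$ and let $c^\star$ be a classical minimizer. Then $\sum_\tau\pi_\tau\bigl(J_\tau^\star(c^\star)-\min_b J_\tau^\star(b)\bigr)=0$ is a sum of nonnegative terms with weights $\pi_\tau>0$. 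Every term therefore vanishes, so $c^\star\in\mathcal{M}_\tau$ for all $\tau$ and the intersection is nonempty. The strict-positivity characterization is the contrapositive. Finiteness of $\{\pm1\}^M$ guarantees that all minima are attained.

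For part (b), let $\{B_\tau^\star,\mu_\tau^\star\}$ be an ideal optimizer and plug it into the finite-$S$ QRAQ row risk. By Lemma~\ref{lem:bv-decomp} and Lemma~\ref{lem:noise-stats}, with per-row scales $\mu_{\tau,g(i,j)}=\mu_\tau^\star$, we get
\[
\mathcal{E}^{\mathrm{row}}_{\mathrm{Q},i}(S)\le\sum_\tau\pi_\tau J_\tau^\star(B_\tau^\star)+\frac{\nu_K(\eta)}{S}\sum_\tau\pi_\tau(\mu_\tau^\star)^2T_\tau .
\]
Hence $\Delta_i(S)\ge\Delta_i^\infty-\nu_K(\eta)\sum_\tau\pi_\tau T_\tau(\mu_\tau^\star)^2/S$, which is positive exactly when $S>S_0^{(i)}$. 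For the converse, the variance term is nonnegative for every feasible choice, since $\nu_K(\eta)=K/\eta^2-1\ge K-1>0$ and $(\Sigma_\tau)_{jj}\ge0$. So $\mathcal{E}^{\mathrm{row}}_{\mathrm{Q},i}(S)\ge\mathcal{E}^{\mathrm{row}}_{\mathrm{Q},i}(\infty)$ and $\Delta_i(S)\le\Delta_i^\infty=0$ whenever the intersection is nonempty.

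The main obstacle is bookkeeping rather than depth. First, I must check that the ideal QRAQ row problem really decouples across contexts. The signs $B^{(\tau)}$ are arbitrary per context because Lemma~\ref{lem:positivity} makes every sign pattern a valid state, and $\mu_{\tau,i}\in\mathbb{R}$ is unconstrained via the reparameterization $\mu=\eta c_K\gamma$. Second, I must check that restricting Eq.~\eqref{eq:qrac-risk} to row $i$ is legitimate, which holds because the trace objective is a sum over rows. A subtle point is the degenerate case $\mu_\tau^\star=0$ (possible when $b\Sigma_\tau w^\top=0$). It only reduces the threshold and causes no difficulty.
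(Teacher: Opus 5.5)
Your proposal is correct and follows essentially the same route as the paper. You reduce to a single row, eliminate the per-context scales in closed form via $\alpha_\tau^\star(b)$, and for part (a) compare ``min of sum'' against ``sum of mins''; for part (b) you plug an ideal optimizer into the finite-shot risk and use nonnegativity of the variance term for the converse. The only cosmetic difference is that you prove the strict-positivity direction by contraposition (a zero gap forces a classical minimizer into every $\mathcal{M}_\tau$), whereas the paper shows $F(c)$ strictly exceeds the sum of minima for every $c$ and then invokes finiteness of $\{\pm1\}^M$.
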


\begin{corollary}[Finite-shot threshold]\label{cor:shots-formal}
    Under the sign-disagreement condition of Theorem~\ref{thm:main-formal}, the per-row gap is positive whenever
    \begin{equation}
        S>S_0^{(i)} \coloneqq \frac{\nu_K(\eta)\sum_{\tau=1}^K\pi_\tau T_\tau(\mu_\tau^\star)^2}{\Delta_i^\infty}, \qquad T_\tau\coloneqq\sum_{j=1}^M(\Sigma_\tau)_{jj},
    \end{equation}
    where $B_\tau^\star\in\mac{M}_\tau$ and $\mu_\tau^\star=\alpha_\tau^\star(B_\tau^\star)$ are any ideal QRAQ row optimizers. This is a sufficient threshold; it need not be necessary.
\end{corollary}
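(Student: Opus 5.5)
The statement immediately above is Corollary~\ref{cor:shots-formal}, which is essentially part~(b) of Theorem~\ref{thm:main-formal} written out separately. The plan is to derive it directly from the row-wise decomposition of Eq.~\eqref{eq:qrac-risk}, without routing through the full theorem. There are three steps.

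\textbf{Step 1: the ideal row risk.} Fix row $i$ with $w=W_{i,:}$. For signed per-row scales, the QRAQ risk restricted to row $i$ is
\[
\sum_\tau\pi_\tau\Bigl[(w-\mu_\tau b_\tau)\Sigma_\tau(w-\mu_\tau b_\tau)^\top+\frac{\nu_K(\eta)}{S}\mu_\tau^2T_\tau\Bigr].
\]
Its variance term is exactly $\frac{\nu_K(\eta)}{S}\sum_\tau\pi_\tau\mu_\tau^2\sum_j(\Sigma_\tau)_{jj}$, since all entries of row $i$ share the scale $\mu_\tau$ (Lemma~\ref{lem:bv-decomp}, Lemma~\ref{lem:noise-stats}). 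For a fixed sign vector $b$ and context $\tau$, minimizing the bias term over $\mu_\tau\in\mab{R}$ is a one-dimensional least-squares problem. The minimizer is $\alpha_\tau^\star(b)=b\Sigma_\tau w^\top/(b\Sigma_\tau b^\top)$, which is well defined because $b\Sigma_\tau b^\top>0$, and the minimum value is $J_\tau^\star(b)$. Since the ideal QRAQ objective decouples across contexts, its optimum is $\sum_\tau\pi_\tau\min_bJ_\tau^\star(b)$, attained at $B_\tau^\star\in\mac{M}_\tau$ with $\mu_\tau^\star=\alpha_\tau^\star(B_\tau^\star)$. The classical row optimum is $\min_c\sum_\tau\pi_\tau J_\tau^\star(c)$, because the scales are noiseless and per-context. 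Their difference is $\Delta_i^\infty$.

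\textbf{Step 2: upper bound at finite $S$.} Plug the particular feasible choice $\{B_\tau^\star,\mu_\tau^\star\}$ into the finite-$S$ QRAQ row objective, in the spirit of Theorem~\ref{thm:sufficient-formal}. Because the QRAQ optimum is no larger than the value at any feasible point,
\[
\mathcal{E}^{\mathrm{row}}_{\mathrm{Q},i}(S)\le\sum_\tau\pi_\tau J_\tau^\star(B_\tau^\star)+\frac{\nu_K(\eta)}{S}\sum_\tau\pi_\tau T_\tau(\mu_\tau^\star)^2 .
\]
Subtracting this from the classical optimum gives
\[
\Delta_i(S)\ge\Delta_i^\infty-\frac{\nu_K(\eta)}{S}\sum_\tau\pi_\tau T_\tau(\mu_\tau^\star)^2 .
\]

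\textbf{Step 3: the threshold.} The sign-disagreement condition makes $\Delta_i^\infty>0$; this is the ``if'' direction of part~(a) of Theorem~\ref{thm:main-formal}, which we may invoke. Multiplying through by $\Delta_i^\infty>0$, the right-hand side of Step~2 is positive exactly when $S>S_0^{(i)}$. Hence $S>S_0^{(i)}$ implies $\Delta_i(S)>0$. The bound holds for every choice of optimizer in the sets $\mac{M}_\tau$, as the statement requires.

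Non-necessity follows because a different feasible choice may beat this one at finite $S$. For example, shrinking the $\mu_\tau$ trades bias for variance and can lower the total. So the true $\Delta_i(S)$ can be positive below $S_0^{(i)}$, and the threshold is only sufficient. An explicit instance is not needed; noting that the bound comes from a single feasible point suffices.

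\textbf{Main obstacle.} The only delicate point is the reduction in Step~1. One must check that under $\mac{S}_{\mathrm{row}\pm}$ the classical baseline may optimize its scales per context but cannot change signs across contexts, so that the classical optimum equals $\min_c\sum_\tau\pi_\tau J_\tau^\star(c)$. The positivity $b\Sigma_\tau b^\top>0$ ensures the closed forms are well defined. Everything after that is routine.
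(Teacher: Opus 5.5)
Your proposal is correct and is essentially the paper's own argument. Steps~7--8 of the proof of Theorem~\ref{thm:main-formal} evaluate the finite-shot row objective at an ideal optimizer $\{B_\tau^\star,\mu_\tau^\star\}$, subtract it from the classical optimum to get $\Delta_i(S)\ge\Delta_i^\infty-\frac{\nu_K(\eta)}{S}\sum_\tau\pi_\tau T_\tau(\mu_\tau^\star)^2$, and use Step~5 for $\Delta_i^\infty>0$ under (D1). The one soft spot is your non-necessity remark: observing that the bound comes from a single feasible point only shows your argument does not prove necessity, not that necessity fails. A concrete instance is the paper's resource-fair example $w=(1,0)$, where $S_0^{(i)}=1/r>1$ yet $\Delta_i(1)>0$ for $(\sqrt{5}-1)/2<r<1$ (Corollary~\ref{cor:faircomparison_formal}).
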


\begin{proof}[Proof of Theorem~\ref{thm:main-formal}]
    We prove the theorem row by row. The Frobenius-norm risk decomposes additively as
    \begin{equation}
        \Tr \bigl[(W-\hat{W}_\tau) \Sigma_\tau (W-\hat{W}_\tau)^\top\bigr] =\sum_{i=1}^N(W-\hat{W}_\tau)_{i,:} \Sigma_\tau (W-\hat{W}_\tau)_{i,:}^\top,
    \end{equation}
    and, by Lemma~\ref{lem:indep}(iii), $\Cov(\hat{W}^{\mathrm{Q}}_\tau)$ is diagonal so the
    variance term in Lemma~\ref{lem:bv-decomp} is also additive across rows. The total
    statement therefore follows from the per-row statement by summation
    (see Corollary~\ref{cor:additive-formal}).

    Fix row $i$, write $w\coloneqq W_{i,:}\in\mab{R}^M$, and for a sign vector $b\in\{\pm 1\}^M$
    and a scalar $\alpha\in\mab{R}$ define
    \begin{equation}
        J_\tau(b,\alpha)\coloneqq(w-\alpha b) \Sigma_\tau (w-\alpha b)^\top,\qquad\tau\in[K].
        \label{eq:per-row-quad}
    \end{equation}

    \emph{Step 1: per-context optimal scale.}
    The map $\alpha\mapsto J_\tau(b,\alpha)$ is a quadratic with second derivative
    $2 b\Sigma_\tau b^\top\ge 0$. Under the hypothesis $b\Sigma_\tau b^\top>0$, the first-order
    condition $\partial_\alpha J_\tau(b,\alpha)=-2 b\Sigma_\tau(w-\alpha b)^\top=0$ yields the
    unique minimiser
    \begin{equation}
        \alpha_\tau^\star(b)=\frac{b \Sigma_\tau w^\top}{b \Sigma_\tau b^\top},\qquad J_\tau^\star(b)\coloneqq J_\tau\bigl(b,\alpha_\tau^\star(b)\bigr) =w \Sigma_\tau w^\top-\frac{(b \Sigma_\tau w^\top)^2}{b \Sigma_\tau b^\top}.
        \label{eq:ls}
    \end{equation}
    Note $J_\tau^\star(b)=J_\tau^\star(-b)$, since flipping $b$ flips the sign of
    $\alpha_\tau^\star(b)$ but leaves $\alpha_\tau^\star(b) b$ invariant.

    \emph{Step 2: classical per-row minimum.}
    In $\mac{S}_{\mathrm{row}\pm}$, the classical quantizer~Eq.~\eqref{eq:classical-ptq} for row $i$
    picks a single sign vector $c\in\{\pm 1\}^M$ shared across contexts and per-context signed
    scales $\alpha_\tau\in\mab{R}$. Minimising~Eq.~\eqref{eq:row-classical} first over
    $\{\alpha_\tau\}$ (independently per context, closed-form by~Eq.~\eqref{eq:ls}) yields
    \begin{equation}
        \mathcal{E}^{\mathrm{row}\pm}_{\mathrm{C},i}=\min_{c\in\{\pm 1\}^M}F(c),\qquad F(c)\coloneqq\sum_{\tau=1}^K\pi_\tau J_\tau^\star(c).
        \label{eq:Ec-rowpm}
    \end{equation}

    \emph{Step 3: ideal QRAQ per-row minimum.}
    In the ideal regime $S\to\infty$, the noise term in the QRAQ per-row
    risk~Eq.~\eqref{eq:qrac-risk} vanishes and each $B^{(\tau)}$ appears only in the $\tau$-th
    summand:
    \begin{equation}
        \mathcal{E}^{\mathrm{row},\infty}_{\mathrm{Q},i} =\min_{\{B^{(\tau)}\},\{\mu_\tau\}}\sum_{\tau=1}^K\pi_\tau J_\tau(B^{(\tau)},\mu_\tau).
    \end{equation}
    Since the summands are uncoupled across $\tau$, we may minimise term-by-term: first in
    $\mu_\tau$ pointwise in $B^{(\tau)}$ (closed-form by~Eq.~\eqref{eq:ls}), then in $B^{(\tau)}$.
    This gives
    \begin{equation}
        \mathcal{E}^{\mathrm{row},\infty}_{\mathrm{Q},i} =\sum_{\tau=1}^K\pi_\tau\min_{B^{(\tau)}\in\{\pm 1\}^M}J_\tau^\star(B^{(\tau)}).
        \label{eq:Eq-rowmin-ideal}
    \end{equation}

    \emph{Step 4: ideal gap is non-negative.}
    Subtracting~Eq.~\eqref{eq:Eq-rowmin-ideal} from~Eq.~\eqref{eq:Ec-rowpm} gives
    \begin{equation}
        \Delta_i^{\infty} =\min_{c\in\{\pm 1\}^M}F(c) -\sum_{\tau=1}^K\pi_\tau\min_{b\in\{\pm 1\}^M}J_\tau^\star(b).
        \label{eq:ideal-gap}
    \end{equation}
    For every $c\in\{\pm 1\}^M$ and every $\tau$, $J_\tau^\star(c)\ge\min_b J_\tau^\star(b)$, so
    $F(c)\ge\sum_\tau\pi_\tau\min_b J_\tau^\star(b)$. Taking the minimum in $c$ preserves the
    inequality and gives $\Delta_i^{\infty}\ge 0$.

    \emph{Step 5: $\Delta_i^\infty>0\iff(D1)$.}
    For each $\tau$ define
    \begin{equation}
        \mac{M}_\tau\coloneqq\argmin_{b\in\{\pm 1\}^M}J_\tau^\star(b)\subseteq\{\pm 1\}^M,
    \end{equation}
    regarded as a subset of $\{\pm 1\}^M/\{\pm\mathbf 1\}$ after quotienting by the global sign
    (because $J_\tau^\star(b)=J_\tau^\star(-b)$). Condition (D1) is
    $\bigcap_{\tau=1}^K\mac{M}_\tau=\emptyset$ in this quotient.

    \emph{($\Leftarrow$)} If $\bigcap_\tau\mac{M}_\tau\ne\emptyset$, pick any
    $c^\star\in\bigcap_\tau\mac{M}_\tau$. Then $J_\tau^\star(c^\star)=\min_b J_\tau^\star(b)$
    for every $\tau$, so $F(c^\star)=\sum_\tau\pi_\tau\min_b J_\tau^\star(b)$ and
    $\Delta_i^\infty=0$.

    \emph{($\Rightarrow$)} Assume $\bigcap_\tau\mac{M}_\tau=\emptyset$. Then for every
    $c\in\{\pm 1\}^M$ there exists $\tau_c\in[K]$ with $c\notin\mac{M}_{\tau_c}$, i.e.,
    $J_{\tau_c}^\star(c)>\min_b J_{\tau_c}^\star(b)$. The hypothesis $\pi_{\tau_c}>0$ upgrades
    this to
    \begin{equation}
        F(c) =\pi_{\tau_c} J_{\tau_c}^\star(c) +\sum_{\tau\ne\tau_c}\pi_\tau J_\tau^\star(c) >\pi_{\tau_c}\min_b J_{\tau_c}^\star(b) +\sum_{\tau\ne\tau_c}\pi_\tau\min_b J_\tau^\star(b) =\sum_\tau\pi_\tau\min_b J_\tau^\star(b).
    \end{equation}
    Since $\{\pm 1\}^M$ is finite, this strict inequality is preserved when taking the minimum
    in $c$, so $\Delta_i^\infty>0$.

    \emph{Step 6: necessity of (D1) at finite $S$.}
    Suppose $c^\star\in\bigcap_\tau\mac{M}_\tau$ (i.e., (D1) fails). Then
    $\mathcal{E}^{\mathrm{row}\pm}_{\mathrm{C},i}=\sum_\tau\pi_\tau J_\tau^\star(c^\star)
    =\mathcal{E}^{\mathrm{row},\infty}_{\mathrm{Q},i}$. For finite $S$ the QRAQ objective at
    every feasible $(B^{(\tau)},\mu_\tau)$ exceeds the ideal objective pointwise because the
    shot-noise regulariser $(\nu_K(\eta)/S)\mu_\tau^2 T_\tau\ge 0$. Hence minimisation
    preserves the inequality:
    \begin{equation}
        \mathcal{E}^{\mathrm{row}}_{\mathrm{Q},i}(S) \ge\mathcal{E}^{\mathrm{row},\infty}_{\mathrm{Q},i} =\mathcal{E}^{\mathrm{row}\pm}_{\mathrm{C},i},
    \end{equation}
    i.e., $\Delta_i(S)\le 0$, with equality only in the limit $S\to\infty$. Thus (D1) is
    necessary for $\Delta_i(S)>0$ at any finite $S$.

    \emph{Step 7: finite-shot QRAQ per-row risk.}
    Including the depolarizing factor absorbed into $\mu_\tau$ (Lemma~\ref{lem:noise-stats}),
    the finite-shot per-row QRAQ risk equals
    \begin{equation}
        \mathcal{E}^{\mathrm{row}}_{\mathrm{Q},i}(S) =\min_{\{B^{(\tau)}\},\{\mu_\tau\}}\sum_{\tau=1}^K\pi_\tau \Bigl[J_\tau(B^{(\tau)},\mu_\tau)+\frac{\nu_K(\eta)}{S} \mu_\tau^2 T_\tau\Bigr],\quad T_\tau\coloneqq\sum_{j=1}^M(\Sigma_\tau)_{jj}.
        \label{eq:Q-row-S}
    \end{equation}

    \emph{Step 8: finite-shot threshold.}
    Let $\{B_\tau^\star,\mu_\tau^\star\}_\tau$ be a tuple of ideal row-wise QRAQ optimisers,
    i.e., $B_\tau^\star\in\mac{M}_\tau$ and $\mu_\tau^\star=\alpha_\tau^\star(B_\tau^\star)$
    from~Eq.~\eqref{eq:ls}. Plugging this feasible tuple into the right-hand side
    of~Eq.~\eqref{eq:Q-row-S} yields the upper bound
    \begin{equation}
        \mathcal{E}^{\mathrm{row}}_{\mathrm{Q},i}(S) \le\sum_{\tau=1}^K\pi_\tau\Bigl[J_\tau(B_\tau^\star,\mu_\tau^\star) +\frac{\nu_K(\eta)}{S} (\mu_\tau^\star)^2 T_\tau\Bigr] =\mathcal{E}^{\mathrm{row},\infty}_{\mathrm{Q},i} +\frac{\nu_K(\eta)}{S}\sum_{\tau=1}^K\pi_\tau T_\tau (\mu_\tau^\star)^2,
        \label{eq:bias-inflation}
    \end{equation}
    since $\sum_\tau\pi_\tau J_\tau(B_\tau^\star,\mu_\tau^\star)
    =\sum_\tau\pi_\tau J_\tau^\star(B_\tau^\star)
    =\mathcal{E}^{\mathrm{row},\infty}_{\mathrm{Q},i}$ by definition of the ideal optimum.
    Subtracting from $\mathcal{E}^{\mathrm{row\pm}}_{\mathrm{C},i}$ and using
    $\Delta_i^\infty=\mathcal{E}^{\mathrm{row\pm}}_{\mathrm{C},i}
    -\mathcal{E}^{\mathrm{row},\infty}_{\mathrm{Q},i}$,
    \begin{equation}
        \Delta_i(S) \ge\Delta_i^\infty-\frac{\nu_K(\eta)}{S}\sum_{\tau=1}^K\pi_\tau T_\tau (\mu_\tau^\star)^2,
    \end{equation}
    which is strictly positive whenever
    \begin{equation}
        S>S_0^{(i)} \coloneqq\frac{\nu_K(\eta)\sum_{\tau=1}^K\pi_\tau T_\tau (\mu_\tau^\star)^2}{\Delta_i^\infty},
    \end{equation}
    matching the form in Corollary~\ref{cor:shots-formal}. Under (D1), $\Delta_i^\infty>0$ by Step~5,
    so $S_0^{(i)}<\infty$. Combined with Step~6, this proves parts~(a) and~(b) of
    Theorem~\ref{thm:main-formal}; the row-additive total gap of Corollary~\ref{cor:additive-formal} follows
    by summation over $i$.
\end{proof}

\subsection{Formal closed-form two-context corollary}\label{subapp:cor-closedform}

\begin{corollary}[Closed-form gap, $M=2$, $K=2$, symmetric covariance]\label{cor:closedform-formal}
    Let $M=K=2$, $\Sigma_\pm=I\pm r(J-I)$ with $r\in[0,1)$, and $\pi_\pm=1/2$. Write $w=(w_1,w_2)=W_{i,:}$. If $w=0$, both classical and QRAQ row risks vanish, so $\Delta_i=0$. Otherwise, in the ideal regime $S=\infty$, $\eta=1$, the per-row gap satisfies
    \begin{equation}
        \Delta_i=\frac{r (w_1^2+w_2^2)-2 |w_1 w_2|}{2}\cdot\mathbf{1}\bigl[r>r_0\bigr], \qquad r_0\coloneqq\frac{2 |w_1 w_2|}{w_1^2+w_2^2}\in[0,1].
        \label{eq:closedform}
    \end{equation}
    The gap is strictly positive whenever $r>r_0$ and is monotone increasing in $r$ above that threshold.
\end{corollary}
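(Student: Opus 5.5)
The plan is to specialize Theorem~\ref{thm:main-formal}, Steps~2--4, to this two-dimensional family and evaluate everything in closed form. By Eq.~\eqref{eq:ideal-gap}, the gap is
\[
\Delta_i=\min_c F(c)-\tfrac12\sum_{\pm}\min_b J_\pm^\star(b).
\]
Since $J_\tau^\star(b)=J_\tau^\star(-b)$, only two sign classes matter: $b_s=(1,1)$ and $b_d=(1,-1)$. For $r\in[0,1)$ we have $b\Sigma_\pm b^\top\in\{2(1+r),2(1-r)\}>0$, so the nondegeneracy hypothesis of Theorem~\ref{thm:main-formal} holds. The case $w=0$ is immediate, since all risks vanish, so we assume $w\neq0$.

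\textbf{Step 1: compute the four values of $J_\pm^\star$.} Write $s=w_1+w_2$ and $d=w_1-w_2$. From Eq.~\eqref{eq:ls}, using $b_s\Sigma_\pm w^\top=(1\pm r)s$ and $b_d\Sigma_\pm w^\top=(1\mp r)d$, we get
\[
J_\pm^\star(b_s)=w\Sigma_\pm w^\top-\tfrac{(1\pm r)s^2}{2},\qquad
J_\pm^\star(b_d)=w\Sigma_\pm w^\top-\tfrac{(1\mp r)d^2}{2},
\]
with $w\Sigma_\pm w^\top=\lVert w\rVert^2\pm 2r w_1w_2$.

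\textbf{Step 2: classical and ideal QRAQ risks.} Averaging the two contexts cancels the $\pm$ terms. The classical risk is therefore
\[
\min_c F(c)=\lVert w\rVert^2-\tfrac12\max(s^2,d^2)=\lVert w\rVert^2-\tfrac12\bigl(\lVert w\rVert^2+2|w_1w_2|\bigr).
\]
The ideal QRAQ risk is
\[
\lVert w\rVert^2-\tfrac14\Bigl[\max\bigl((1+r)s^2,(1-r)d^2\bigr)+\max\bigl((1-r)s^2,(1+r)d^2\bigr)\Bigr].
\]

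\textbf{Step 3: case analysis.} By the symmetry $w_2\mapsto -w_2$, which swaps $s$ and $d$ and swaps $\Sigma_+$ with $\Sigma_-$, we may assume $w_1w_2\ge0$, so $s^2\ge d^2$. Then the first maximum is $(1+r)s^2$. The second maximum is $(1+r)d^2$ exactly when
\[
(1+r)d^2>(1-r)s^2 \iff 2r\lVert w\rVert^2>4w_1w_2 \iff r>r_0 .
\]
If $r\le r_0$, the QRAQ risk equals $\lVert w\rVert^2-s^2/2$, which is the classical risk, so $\Delta_i=0$. This is consistent with (D1) failing, since $b_s$ is optimal in both contexts. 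If $r>r_0$, we use $s^2+d^2=2\lVert w\rVert^2$ to get
\[
\Delta_i=\tfrac{(1+r)\lVert w\rVert^2}{2}-\tfrac{\lVert w\rVert^2+2w_1w_2}{2}=\tfrac{r\lVert w\rVert^2-2|w_1w_2|}{2}.
\]
This matches Eq.~\eqref{eq:closedform}.

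\textbf{Step 4: remaining claims.} By AM--GM, $r_0\in[0,1]$. Above the threshold, $\Delta_i$ is affine in $r$ with slope $\lVert w\rVert^2/2>0$, and it vanishes at $r=r_0$, so it is strictly positive and increasing for $r>r_0$. The only step needing care is the case boundary in Step~3, in particular verifying that the tie $r=r_0$ gives zero gap; otherwise the argument is a direct evaluation of Eq.~\eqref{eq:ls}.
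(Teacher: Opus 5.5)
Your proposal is correct and follows essentially the same route as the paper: reduce to the two sign classes $(1,1)$ and $(1,-1)$ modulo the global sign, evaluate the four values $J_\pm^\star$ from Eq.~\eqref{eq:ls}, use the symmetry to assume $w_1w_2\ge 0$, and split the case analysis at $r_0$. The only difference is cosmetic: the paper simplifies each value to a residual form such as $J_+^\star(b_+)=(1-r)(w_1-w_2)^2/2$, whereas you keep the ``total minus explained'' form and let the $\pm 2rw_1w_2$ terms cancel on averaging, which gives the same threshold and the same gap.
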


\begin{proof}[Proof of Corollary~\ref{cor:closedform-formal}]
    Fix $M=K=2$, uniform prior $\pi_\pm=1/2$, and context covariances
    \begin{equation}
        \Sigma_\pm=I\pm r(J-I)=\begin{pmatrix}1&\pm r\\\pm r&1\end{pmatrix},\qquad r\in[0,1).
    \end{equation}
    If $w=(0,0)$ then both classical and ideal QRAQ row risks vanish, so $\Delta_i=0$. We may
    therefore assume $w\ne 0$. Without loss of generality assume $w_1 w_2\ge 0$; the case
    $w_1 w_2<0$ is symmetric under swapping the two context labels, as we make explicit at the
    end.

    \emph{Step 1: per-context quadratic forms.}
    For $b=(b_1,b_2)\in\{\pm 1\}^2$, direct computation gives
    \begin{align}
        b \Sigma_\pm b^\top
        &=b_1^2+b_2^2\pm 2 r b_1 b_2=2\pm 2 r b_1 b_2,\\
        b \Sigma_\pm w^\top
        &=b_1 w_1+b_2 w_2\pm r (b_1 w_2+b_2 w_1)
        =(b_1\pm r b_2) w_1+(b_2\pm r b_1) w_2,\\
        w \Sigma_\pm w^\top
        &=w_1^2+w_2^2\pm 2 r w_1 w_2.
    \end{align}
    Hence by~Eq.~\eqref{eq:ls},
    \begin{equation}
        J_\pm^\star(b) =w \Sigma_\pm w^\top -\frac{\bigl((b_1\pm r b_2) w_1+(b_2\pm r b_1) w_2\bigr)^2}{2\pm 2 r b_1 b_2}.
        \label{eq:Jstar-2x2}
    \end{equation}

    \emph{Step 2: enumeration modulo global sign.}
    Since $J_\tau^\star(b)=J_\tau^\star(-b)$, the four sign vectors in $\{\pm 1\}^2$ reduce to
    two representatives modulo the global sign: $b_+=(+,+)$ and $b_-=(+,-)$. We
    evaluate~Eq.~\eqref{eq:Jstar-2x2} at $b_\pm$ and both contexts.

    \emph{Sub-step 2a: $b=b_+$.}
    $b_1 b_2=+1$, so $2\pm 2 r b_1 b_2=2\pm 2 r$ and $b \Sigma_\pm w^\top=(1\pm r) (w_1+w_2)$.
    Hence
    \begin{equation}
        J_+^\star(b_+)=(w_1^2+w_2^2+2 r w_1 w_2)-\frac{(1+r) (w_1+w_2)^2}{2} =\frac{(1-r) (w_1-w_2)^2}{2},
    \end{equation}
    where we used $(w_1+w_2)^2=w_1^2+w_2^2+2w_1 w_2$. Similarly,
    \begin{equation}
        J_-^\star(b_+)=\frac{(1+r) (w_1-w_2)^2}{2}.
    \end{equation}

    \emph{Sub-step 2b: $b=b_-$.}
    $b_1 b_2=-1$, so $2\pm 2 r b_1 b_2=2\mp 2 r$ and $b \Sigma_\pm w^\top=(1\mp r) (w_1-w_2)$.
    Hence
    \begin{equation}
        J_+^\star(b_-)=\frac{(1+r) (w_1+w_2)^2}{2},\qquad J_-^\star(b_-)=\frac{(1-r) (w_1+w_2)^2}{2}.
    \end{equation}
    These clean forms admit an eigendecomposition interpretation: $\Sigma_\pm$ has eigenvalue
    $(1\pm r)$ along $(1,1)$ and $(1\mp r)$ along $(1,-1)$ (verified by direct multiplication),
    so fitting $\alpha b_\pm$ under $\Sigma_\tau$ removes the component along $b_\pm$ and
    leaves a residual weighted by the eigenvalue along the orthogonal direction $b_\mp$.

    \emph{Step 3: classical per-row minimum.}
    Set $A\coloneqq w_1^2+w_2^2>0$ (since $w\ne 0$) and $P\coloneqq w_1 w_2$, so
    $(w_1\pm w_2)^2=A\pm 2P$. The closed forms of Step~2 become
    \begin{equation}
        \begin{aligned}
            J_+^\star(b_+)&=\tfrac{(1-r)(A-2P)}{2},\quad
            J_-^\star(b_+)=\tfrac{(1+r)(A-2P)}{2},\\
            J_+^\star(b_-)&=\tfrac{(1+r)(A+2P)}{2},\quad
            J_-^\star(b_-)=\tfrac{(1-r)(A+2P)}{2}.
        \end{aligned}
    \end{equation}
    Hence with $F(c)=\tfrac{1}{2}\bigl[J_+^\star(c)+J_-^\star(c)\bigr]$,
    \begin{equation}
        F(b_+)=\frac{A-2P}{2}=\frac{(w_1-w_2)^2}{2},\qquad F(b_-)=\frac{A+2P}{2}=\frac{(w_1+w_2)^2}{2}.
    \end{equation}
    Under $w_1 w_2\ge 0$ we have $P\ge 0$ and $F(b_+)\le F(b_-)$, so
    \begin{equation}
        \mathcal{E}^{\mathrm{row}\pm}_{\mathrm{C},i}=F(b_+)=\frac{(w_1-w_2)^2}{2}=\frac{A-2P}{2}.
        \label{eq:classical-risk-closed-form}
    \end{equation}

    \emph{Step 4: per-context QRAQ argmin.}
    Direct subtraction gives
    \begin{equation}
        J_+^\star(b_+)-J_+^\star(b_-)=-r A-2 P,\qquad J_-^\star(b_+)-J_-^\star(b_-)=r A-2 P.
    \end{equation}
    Under $P\ge 0$ the first difference is $\le 0$ for every $r\in[0,1)$, so
    $\min_b J_+^\star(b)=J_+^\star(b_+)$. The second difference has sign $\mathrm{sgn}(rA-2P)$,
    i.e., the threshold $r_0\coloneqq 2P/A=2|w_1 w_2|/(w_1^2+w_2^2)\in[0,1]$: for $r>r_0$,
    $\min_b J_-^\star(b)=J_-^\star(b_-)$; for $r\le r_0$, $\min_b J_-^\star(b)=J_-^\star(b_+)$.

    \emph{Step 5: closed-form QRAQ minimum and gap.}

    \emph{Case (I): $r>r_0$.} Using the minimisers from Step~4,
    \begin{equation}
        \mathcal{E}^{\mathrm{row},\infty}_{\mathrm{Q},i} =\tfrac{1}{2}J_+^\star(b_+)+\tfrac{1}{2}J_-^\star(b_-) =\tfrac{1}{4}\bigl[(1-r)(A-2P)+(1-r)(A+2P)\bigr] =\frac{(1-r) A}{2}.
    \end{equation}
    Subtracting from $\mathcal{E}^{\mathrm{row}\pm}_{\mathrm{C},i}=(A-2P)/2$,
    \begin{equation}
        \Delta_i^\infty=\frac{A-2P}{2}-\frac{(1-r) A}{2}=\frac{r A-2P}{2}>0.
        \label{eq:Delta-closed-form}
    \end{equation}

    \emph{Case (II): $0\le r\le r_0$.} Both contexts are minimised at $b_+$, so
    $\mathcal{E}^{\mathrm{row},\infty}_{\mathrm{Q},i}=F(b_+)=\mathcal{E}^{\mathrm{row}\pm}_{\mathrm{C},i}$
    and $\Delta_i^\infty=0$.

    \emph{Case $w_1 w_2<0$.} Symmetric under $b_+\leftrightarrow b_-$ (and equivalently under
    swapping the two context labels): the classical minimiser becomes $b_-$, the QRAQ
    context-$+$ minimiser flips to $b_-$ whenever $r<2|P|/A$ (and to $b_+$ whenever
    $r>2|P|/A$), and the analogous computation yields $\Delta_i^\infty=(rA-2|P|)/2$ above the
    threshold $r_0=2|P|/A$.

    Combining the two cases produces~Eq.~\eqref{eq:closedform} with $|w_1 w_2|$ in both the
    numerator and the indicator. Monotonicity above threshold follows from
    $\partial\Delta_i^\infty/\partial r=A/2>0$.
\end{proof}

\begin{corollary}[Resource-fair closed-form advantage condition]
    \label{cor:faircomparison_formal}
    Under the hypotheses of Corollary~\ref{cor:closedform-formal}, set $\eta=1$ and $S=1$. Then $\Delta_i(S)>0$ if and only if
    \begin{equation}\label{eq:fair-threshold}
        \frac{-(1-r_0)+\sqrt{(1+r_0)(5+r_0)}}{2}\;<\;r\;<\;1,
    \end{equation}
    where $r_0=2|w_1 w_2|/(w_1^2+w_2^2)$ is the ideal-regime threshold defined in Eq.~\eqref{eq:closedform}.
\end{corollary}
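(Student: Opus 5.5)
The plan is to redo the per-row computation of Corollary~\ref{cor:closedform-formal}, but now with the shot-noise regulariser switched on. Setting $K=2$, $\eta=1$, $S=1$ gives $\nu_2(1)=1$. For $\Sigma_\pm$ we have $T_\pm=\Tr\Sigma_\pm=2$. By Eq.~\eqref{eq:Q-row-S} the finite-shot row risk is then
\[
\mathcal{E}^{\mathrm{row}}_{\mathrm{Q},i}(1)=\min_{B^{(\pm)},\mu_\pm}\tfrac12\sum_{\pm}\bigl[J_\pm(B^{(\pm)},\mu_\pm)+2\mu_\pm^2\bigr].
\]
For fixed $b$ the inner problem in $\mu$ is still a strictly convex quadratic. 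Its minimiser is $\mu=b\Sigma_\pm w^\top/(b\Sigma_\pm b^\top+2)$, which gives the regularised value
\[
\tilde J_\pm(b)=w\Sigma_\pm w^\top-\frac{(b\Sigma_\pm w^\top)^2}{b\Sigma_\pm b^\top+2}.
\]
This is just Eq.~\eqref{eq:ls} with the denominator shifted by $2$. As before, $\tilde J_\pm(b)=\tilde J_\pm(-b)$, so only $b_+=(+,+)$ and $b_-=(+,-)$ need to be considered. As in Corollary~\ref{cor:closedform-formal}, I will assume $P=w_1w_2\ge 0$ and treat $P<0$ afterwards by the $b_+\leftrightarrow b_-$ symmetry. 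I will use the notation $A,P$ and the quantities $b\Sigma_\pm b^\top\in\{2\pm2r\}$, $b\Sigma_\pm w^\top\in\{(1\pm r)(w_1+w_2),(1\mp r)(w_1-w_2)\}$ from Step~1 of that proof.

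The first step is to eliminate the diagonal sign choices. Suppose both contexts use the same $b$. Then $\tilde J_\pm(b)\ge J^\star_\pm(b)$, since the regularised minimum dominates the unregularised one pointwise. So the QRAQ value is at least $F(b)\ge\mathcal{E}^{\mathrm{row}\pm}_{\mathrm{C},i}=(A-2P)/2$, and there is no advantage; this is the same argument as Step~6 of Theorem~\ref{thm:main-formal}. Hence $\Delta_i(1)>0$ if and only if one of the two mixed assignments $(b^{(+)},b^{(-)})\in\{(b_+,b_-),(b_-,b_+)\}$ has regularised risk strictly below $(A-2P)/2$.

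I will then compute the two mixed values explicitly. For $(b_+,b_-)$ the value is
\[
\tfrac12\Bigl[A+2rP-\tfrac{(1+r)^2(A+2P)}{4+2r}\Bigr]+\tfrac12\Bigl[A-2rP-\tfrac{(1-r)^2(A-2P)}{4-2r}\Bigr].
\]
Here the $\pm 2rP$ terms cancel. The $(b_-,b_+)$ value is obtained by exchanging $A+2P\leftrightarrow A-2P$ in the numerators. Because $P\ge0$, the $(b_+,b_-)$ assignment puts the larger coefficient $(1+r)^2$ on the larger quantity $A+2P$, so it is the better of the two. The key step is to compare this value with $(A-2P)/2$. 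I will clear the positive denominators $(4+2r)(4-2r)=4(4-r^2)$ and divide by $A>0$, so that everything is expressed through $r$ and $r_0=2P/A$. The target is to reduce the inequality to
\[
r^2+(1-r_0)\,r-(1+2r_0)>0.
\]
Since $(1-r_0)^2+4(1+2r_0)=(1+r_0)(5+r_0)$, the positive root of this quadratic is exactly the left endpoint in Eq.~\eqref{eq:fair-threshold}. The constraint $r<1$ is simply the standing hypothesis $r\in[0,1)$. As a sanity check, $r_0=0$ should recover $r>(\sqrt5-1)/2$.

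The main obstacle is this last algebraic reduction. Clearing the denominators produces a cubic-looking expression in $r$, and I expect it to factor as a positive factor (something like $(2-r)$ or $(2+r)$) times the quadratic above. I would verify the factorisation symbolically before writing it up. A second point to check carefully is that the comparison between $(b_+,b_-)$ and $(b_-,b_+)$ really requires only $P\ge0$ uniformly in $r$. Finally, for $P<0$ the same computation applies with $b_+$ and $b_-$ interchanged, which gives $|w_1w_2|$ in $r_0$.
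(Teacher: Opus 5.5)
Your route is essentially the paper's. The shot-noise term only shifts the denominator of $J_\tau^\star$ by $\nu_2(1)T_\pm/S=2$. Assignments that use the same sign in both contexts are ruled out by dominance; the paper instead shows that context $+$ always prefers $b_+$ and then splits on the choice in context $-$. What remains is to compare the mixed assignment with $(A-2P)/2$.

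The gap is in the explicit value you write for $(b_+,b_-)$, which is exactly the input to your key step. In the second bracket you used the factors that belong to $b_-$ under $\Sigma_+$. Since $b\Sigma_\pm b^\top=2\pm 2rb_1b_2$, you have $b_-\Sigma_-b_-^\top=2+2r$ and $b_-\Sigma_-w^\top=(1+r)(w_1-w_2)$. Hence
\[
\tilde J_-(b_-)=A-2rP-\frac{(1+r)^2(A-2P)}{4+2r},
\]
not $A-2rP-(1-r)^2(A-2P)/(4-2r)$. With your expression the reduction cannot reach the target quadratic. At $r_0=0$ the comparison with $A/2$ becomes $(1+r)^2/(2+r)+(1-r)^2/(2-r)>2$, i.e.\ $4>8-2r^2$, which fails for every $r<1$. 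So your own sanity check would break, and for general $r_0$ your formula predicts no advantage at all. Your reason for preferring $(b_+,b_-)$ over $(b_-,b_+)$ (that the larger coefficient sits on $A+2P$ when $P\ge0$) is likewise an artefact of the wrong formula.

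With the correct term, both contexts carry the same coefficient $(1+r)^2/(4+2r)$, so $P$ cancels completely:
\[
\tfrac12\bigl[\tilde J_+(b_+)+\tilde J_-(b_-)\bigr]=\frac{A(3-r^2)}{2(2+r)},\qquad
\tfrac12\bigl[\tilde J_+(b_-)+\tilde J_-(b_+)\bigr]=A-\frac{(1-r)^2A}{2(2-r)}.
\]
The first never exceeds the second, because $(1+r)^2(2-r)-(1-r)^2(2+r)=2r(3-r^2)\ge0$, irrespective of the sign of $P$. Comparing $A(3-r^2)/(2(2+r))$ with $A(1-r_0)/2$ and multiplying by $2(2+r)/A>0$ gives $3-r^2<(1-r_0)(2+r)$. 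This is exactly $r^2+(1-r_0)r-(1+2r_0)>0$; no cubic arises and no factorisation has to be guessed. For $P<0$ only the classical side changes, to $(A-2|P|)/2$, which produces $r_0=2|w_1w_2|/A$. With this correction your argument is complete and agrees with the paper's value of $\mathcal{E}^{\mathrm{row}}_{\mathrm{Q},i}(1)$.
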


\begin{proof}[Proof of Corollary~\ref{cor:faircomparison_formal}]
    We work with general $S\geq 1$ and $\eta\in(0,1]$ and specialise at the end.

    \medskip\noindent\textit{Finite-shot optimal scale.}\quad
    Under signed per-row scales, the per-row QRAQ risk (Eq.~\eqref{eq:qrac-risk}) restricted to row~$i$ is
    \begin{equation}\label{eq:per-row-quantum-risk-finite-shot}
        \mathcal{E}^{\mathrm{row}}_{\mathrm{Q},i}(S)
        =\min_{\{b^{(\tau)}\},\,\{\mu_\tau\}}
        \sum_{\tau=1}^{K}\pi_\tau
        \Bigl[
        (w-\mu_\tau b^{(\tau)})\,\Sigma_\tau\,(w-\mu_\tau b^{(\tau)})^\top
        +\frac{\nu_K(\eta)}{S}\,\mu_\tau^2\,T_\tau
        \Bigr],
    \end{equation}
    where $w=W_{i,:}$ and $T_\tau=\sum_{j}(\Sigma_\tau)_{jj}$. Because the variance term is quadratic in~$\mu_\tau$, the objective remains a quadratic in~$\mu_\tau$ for each fixed~$b^{(\tau)}$. Minimising over~$\mu_\tau$ yields the finite-shot optimal scale and the corresponding minimum
    \begin{equation}\label{eq:Jstar-finite}
        J^\star_\tau(b;\,S)
        \;=\;
        w\,\Sigma_\tau\,w^\top
        -\frac{(b\,\Sigma_\tau\,w^\top)^2}
        {b\,\Sigma_\tau\,b^\top
        +\nu_K(\eta)\,T_\tau/S},
    \end{equation}
    which reduces to $J^\star_\tau(b)$ of Eq.~\eqref{eq:ls} when $S\to\infty$.

    \medskip\noindent
    \textit{Specialisation to $M=K=2$, $\Sigma_\pm=I\pm r(J-I)$.}\quad
    As in Corollary~\ref{cor:closedform-formal}, write $w=(w_1,w_2)\neq 0$ and assume $w_1 w_2\geq 0$ without loss of generality. Since $T_\pm=\operatorname{tr}(\Sigma_\pm)=2$, the finite-shot row risk for each context and sign vector $b_\pm$ is
    \begin{equation}\label{eq:Jpm-finite}
        J^\star_{\pm}(b;\,S)
        =w\,\Sigma_\pm\,w^\top
        -\frac{\bigl((b_1\pm r\,b_2)\,w_1
        +(b_2\pm r\,b_1)\,w_2\bigr)^2}
        {2\pm 2r\,b_1 b_2+2\nu_K(\eta)/S}.
    \end{equation}

    \medskip\noindent
    \textit{Plus context: $b_+$ is always optimal.}\quad
    We claim $J^\star_+(b_+;\,S)\leq J^\star_+(b_-;\,S)$. Using Eq.~\eqref{eq:Jpm-finite}, this is equivalent to
    \begin{equation}
        \frac{(1+r)(w_1+w_2)^2} {2+\frac{2\nu_K(\eta)}{S(1+r)}} \;\geq\; \frac{(1-r)(w_1-w_2)^2} {2+\frac{2\nu_K(\eta)}{S(1-r)}}.
    \end{equation}
    Under $w_1 w_2\geq 0$, the left-hand side has a larger numerator ($(1+r)(w_1+w_2)^2\geq(1-r)(w_1-w_2)^2$) and a smaller denominator ($\frac{1}{1+r}\leq\frac{1}{1-r}$), so the inequality holds.

    \medskip\noindent
    \textit{Minus context and necessity.}\quad
    Recall from Eq.~\eqref{eq:classical-risk-closed-form} that the classical optimum chooses~$b_+$ common to both contexts. If $J^\star_-(b_+;\,S)\leq J^\star_-(b_-;\,S)$, then the quantum quantizer also chooses~$b_+$ in both contexts, so its risk satisfies
    \begin{equation}
        \mathcal{E}^{\mathrm{row}}_{\mathrm{Q},i}(S) =\tfrac{1}{2}\bigl(J^\star_+(b_+;\,S) +J^\star_-(b_+;\,S)\bigr) >\tfrac{1}{2}\bigl(J^\star_+(b_+;\,\infty) +J^\star_-(b_+;\,\infty)\bigr) =\mathcal{E}^{\mathrm{row\pm}}_{\mathrm{C},i},
    \end{equation}
    where the strict inequality follows from $\nu_K(\eta)/S>0$ in the denominator. Hence quantum advantage requires $J^\star_-(b_+;\,S)>J^\star_-(b_-;\,S)$, and in that case
    \begin{equation}\label{eq:EQ-case2}
        \mathcal{E}^{\mathrm{row}}_{\mathrm{Q},i}(S)
        =\tfrac{1}{2}\bigl(
        J^\star_+(b_+;\,S)+J^\star_-(b_-;\,S)\bigr)
        =\frac{A}{2}
        \biggl(2-\frac{(1+r)^2}{(1+r)+\nu_K(\eta)/S}
        \biggr).
    \end{equation}
    Setting $S=1$ and $\eta=1$ (so that $\nu_2(1)=1$) gives $\mathcal{E}^{\mathrm{row}}_{\mathrm{Q},i}(1)=A(3-r^2)/\!\bigl(2(2+r)\bigr)$. Comparing with $\mathcal{E}^{\mathrm{row\pm}}_{\mathrm{C},i}=(A-2|P|)/2$ and writing $r_0=2|P|/A$ yields
    \begin{equation}
        r^2+(1-r_0)\,r-(1+2r_0)>0,
    \end{equation}
    whose positive root gives the lower threshold in Eq.~\eqref{eq:fair-threshold}.

    \medskip\noindent
    \textit{Sufficiency.}\quad
    Conversely, assume Eq.~\eqref{eq:fair-threshold} holds. Then the preceding calculation gives $\frac{1}{2}(J^\star_+(b_+;\,1)+J^\star_-(b_-;\,1))<\mathcal{E}^{\mathrm{row\pm}}_{\mathrm{C},i}$.
    Since
    \begin{equation}
        \mathcal{E}^{\mathrm{row}}_{\mathrm{Q},i}(1) =\tfrac{1}{2}\bigl( J^\star_+(b_+;\,1) +\min\{J^\star_-(b_+;\,1),\,J^\star_-(b_-;\,1)\} \bigr) \leq \tfrac{1}{2}\bigl( J^\star_+(b_+;\,1)+J^\star_-(b_-;\,1) \bigr),
    \end{equation}
    quantum advantage follows.

    The sufficient condition of Corollary~\ref{cor:shots-formal}, applied with $S=1$, requires $1 > S_0^{(i)} = A/(rA-2|P|)$, i.e.\ $rA - 2|P| > A$, which fails for every $r\in[0,1)$. Thus, in the two-context symmetric covariance family of Corollary~\ref{cor:closedform-formal}, Corollary~\ref{cor:shots-formal} can never certify quantum advantage at $S=1$. This is because Corollary~\ref{cor:shots-formal} evaluates the finite-shot QRAQ risk at the scale $\mu^\star_\tau$ that is optimal in the ideal $S=\infty$ regime (Eq.~\eqref{eq:ls}), yielding only an upper bound on the true finite-shot risk. Corollary~\ref{cor:faircomparison_formal} circumvents this looseness by optimising the scale jointly with the shot-noise term.
\end{proof}

\subsection{Formal row-additivity corollary}\label{subapp:cor-additive}

\begin{corollary}[Row additivity]\label{cor:additive-formal}
    Under the assumptions of Theorem~\ref{thm:main-formal} and for every $S\in\mab{N}\cup\{\infty\}$,
    \begin{equation}
        \mathcal{E}^{\mathrm{row\pm}}_{\mathrm{C}}-\mathcal{E}^{\mathrm{row}}_{\mathrm{Q}}(S) =\sum_{i=1}^N\Delta_i(S).
    \end{equation}
    In the ideal limit, every summand is nonnegative and the total gap is strictly positive if and only if at least one row satisfies the sign-disagreement condition of Theorem~\ref{thm:main-formal}. At finite $S$, a strict total advantage holds if and only if $\sum_i\Delta_i(S)>0$; sufficient conditions are that all finite-shot row gaps are nonnegative and at least one is positive.
\end{corollary}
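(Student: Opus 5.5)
The plan is to reduce the total statement to the per-row statements of Theorem~\ref{thm:main-formal}, using the additive structure of both objectives over rows. First I will show that the classical signed per-row objective separates across rows. In $\mac{S}_{\mathrm{row}\pm}$ the decision variables are the sign matrix $C$ and the scales $\alpha_{\tau,i}$. Row $i$ of $\alpha_\tau\odot C$ depends only on $C_{i,:}$ and $\alpha_{\tau,i}$. The trace expands as $\Tr[(W-\hat W_\tau)\Sigma_\tau(W-\hat W_\tau)^\top]=\sum_i(W-\hat W_\tau)_{i,:}\Sigma_\tau(W-\hat W_\tau)_{i,:}^\top$. Hence the $\pi$-weighted objective is a sum of terms, each involving only the disjoint variable block $(C_{i,:},\{\alpha_{\tau,i}\}_\tau)$. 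The minimum of a sum of functions of disjoint blocks equals the sum of the minima, so $\mathcal{E}^{\mathrm{row}\pm}_{\mathrm{C}}=\sum_i\mathcal{E}^{\mathrm{row}\pm}_{\mathrm{C},i}$, with the row term as in Eq.~\eqref{eq:row-classical}.

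The same argument applies to QRAQ at any $S\in\mab{N}\cup\{\infty\}$. The bias part of Eq.~\eqref{eq:qrac-risk} splits over rows in the same way. The variance part $\frac{\nu_K(\eta)}{S}\sum_{i,j}\mu_{\tau,i}^2(\Sigma_\tau)_{jj}=\sum_i\frac{\nu_K(\eta)}{S}\mu_{\tau,i}^2T_\tau$ is also row-additive; this relies on the diagonal covariance of Lemma~\ref{lem:indep}(iii) through Lemma~\ref{lem:bv-decomp}. Each row term depends only on $(B^{(\tau)}_{i,:},\mu_{\tau,i})_\tau$, which gives $\mathcal{E}^{\mathrm{row}}_{\mathrm{Q}}(S)=\sum_i\mathcal{E}^{\mathrm{row}}_{\mathrm{Q},i}(S)$. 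When $S=\infty$ the variance term is simply dropped. Subtracting the two decompositions term by term yields the identity $\mathcal{E}^{\mathrm{row}\pm}_{\mathrm{C}}-\mathcal{E}^{\mathrm{row}}_{\mathrm{Q}}(S)=\sum_i\Delta_i(S)$.

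For the ideal claims I invoke Theorem~\ref{thm:main-formal}(a) row by row. It gives $\Delta_i^\infty\ge0$ for every $i$, and $\Delta_i^\infty>0$ exactly under the row's sign-disagreement condition. A finite sum of nonnegative numbers is positive if and only if some summand is positive, which proves the ideal equivalence.

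At finite $S$ the statement ``strict total advantage iff $\sum_i\Delta_i(S)>0$'' is just the identity rewritten. The sufficient condition then follows immediately: if all $\Delta_i(S)\ge0$ and at least one is strictly positive, the sum is positive.

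The only delicate point, and the place I expect some care, is the separability step. I need the minimization over the joint variables to factor exactly, with no cross-row coupling. Scales are per row, signs are per entry, and the covariance couples only columns within a row, so nothing links different rows. I also need all row minima to be attained: signs range over a finite set, and the scale minimizers are closed-form by Eq.~\eqref{eq:ls} and its finite-shot analogue. Given these, summation commutes with minimization and the rest is bookkeeping.
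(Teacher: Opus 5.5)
Your proposal is correct and follows essentially the same route as the paper: both the signed per-row classical objective and the QRAQ objective (bias term plus the variance term made row-additive via Lemma~\ref{lem:indep} and Lemma~\ref{lem:bv-decomp}) separate over disjoint per-row variable blocks, which gives the identity, and the ideal and finite-shot claims then follow from Theorem~\ref{thm:main-formal}(a) applied row by row and from the identity itself. Your concern about attainment of the row minima is harmless but not needed, since for these nonnegative objectives the infimum of a sum over disjoint variable blocks already equals the sum of the infima.
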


\begin{proof}[Proof of Corollary~\ref{cor:additive-formal}]
    \emph{Step 1: row-wise decomposition.}
    The trace form $\Tr[(W-\hat{W}_\tau) \Sigma_\tau (W-\hat{W}_\tau)^\top]$
    in~Eq.~\eqref{eq:context-risk} decomposes additively across rows of $W-\hat{W}_\tau$, and the
    variance contribution to the QRAQ risk also decomposes additively across rows by the
    diagonality of $\Cov(\hat{W}^{\mathrm{Q}}_\tau)$ (Lemma~\ref{lem:indep}(iii)). Both the
    classical row-wise minimum~Eq.~\eqref{eq:row-classical} and the QRAQ row-wise minimum are
    therefore independent across $i$:
    \begin{equation}
        \mathcal{E}^{\mathrm{row\pm}}_{\mathrm{C}}=\sum_{i=1}^N\mathcal{E}^{\mathrm{row\pm}}_{\mathrm{C},i}, \qquad \mathcal{E}^{\mathrm{row}}_{\mathrm{Q}}(S)=\sum_{i=1}^N\mathcal{E}^{\mathrm{row}}_{\mathrm{Q},i}(S), \qquad S\in\mab{N}\cup\{\infty\}.
    \end{equation}
    This yields the row-additive identity
    $\mathcal{E}^{\mathrm{row\pm}}_{\mathrm{C}}-\mathcal{E}^{\mathrm{row}}_{\mathrm{Q}}(S)
    =\sum_{i=1}^N\Delta_i(S)$.

    \emph{Step 2: ideal limit.}
    At $S=\infty$, Theorem~\ref{thm:main-formal}(a) gives $\Delta_i^\infty\ge 0$ row by row, so
    $\mathcal{E}^{\mathrm{row\pm}}_{\mathrm{C}}-\mathcal{E}^{\mathrm{row}}_{\mathrm{Q}}(\infty)
    =\sum_{i=1}^N\Delta_i^\infty\ge 0$, with strict positivity if and only if at least one row
    satisfies (D1) (by Step~5 of the proof of Theorem~\ref{thm:main-formal}). If $N_0$ rows satisfy
    (D1) with common gap lower bound $\Delta_0>0$, the ideal total gap is at least
    $N_0 \Delta_0$.

    \emph{Step 3: finite-$S$ regime.}
    For finite $S$, on a row of zero ideal gap the QRAQ risk satisfies
    $\mathcal{E}^{\mathrm{row}}_{\mathrm{Q},i}(S)\ge\mathcal{E}^{\mathrm{row},\infty}_{\mathrm{Q},i}
    =\mathcal{E}^{\mathrm{row\pm}}_{\mathrm{C},i}$ (Step~6 of the proof of Theorem~\ref{thm:main-formal}),
    so $\Delta_i(S)\le 0$ on those rows. Hence non-negativity of $\sum_{i=1}^N\Delta_i(S)$ is
    not automatic: a strict total advantage requires that per-row gaps on (D1)-rows compensate
    the shot-noise loss on the remaining rows. The stated sufficiency is immediate: if
    $\Delta_i(S)\ge 0$ for every $i$ and $\Delta_i(S)>0$ for at least one $i$, then
    $\sum_{i=1}^N\Delta_i(S)>0$. By Theorem~\ref{thm:main-formal}(b), every (D1)-row achieves
    $\Delta_i(S)>0$ for $S>S_0^{(i)}$. The remaining rows must be controlled directly because
    finite-shot regularization can make their gaps negative. They are harmless whenever their
    finite-shot gaps are nonnegative; for example, this holds on a zero-gap row if it admits a
    zero-scale QRAQ optimizer, in which case the variance term in~Eq.~\eqref{eq:Q-row-S} vanishes.
\end{proof}

\subsection{Formal scale-granularity boundary}\label{subapp:granularity}

\begin{theorem}[Scale-granularity boundary]\label{thm:granularity-formal}
    Under the assumptions of Theorem~\ref{thm:main-formal}, the following hold.
    \begin{enumerate}
        \item[\emph{(a)}] If the feasible classical set of a scale class $\mac{S}$ is contained in $\mac{S}_{\mathrm{row}\pm}$, for example $\mac{S}\in\{\mac{S}_{\mathrm{tensor}\pm},\mac{S}_{\mathrm{tensor}+},\mac{S}_{\mathrm{row}+},\mac{S}_{\mathrm{row}\pm}\}$, then
        \begin{equation}
            \mathcal{E}^{\mac{S}}_{\mathrm{C}}-\mathcal{E}^{\mathrm{row}}_{\mathrm{Q}}(\infty) \ge \mathcal{E}^{\mathrm{row\pm}}_{\mathrm{C}}-\mathcal{E}^{\mathrm{row}}_{\mathrm{Q}}(\infty) =\sum_{i=1}^N\Delta_i^\infty\ge0.
            \label{eq:granularity-coarser}
        \end{equation}
        The ideal separation is strict whenever at least one row satisfies the sign-disagreement condition. At finite $S$, a strict separation over $\mac{S}$ holds whenever the row-summed ideal margin in~Eq.~\eqref{eq:granularity-coarser} exceeds the row-summed finite-shot inflation. In particular, if every row has a positive ideal gap and $S>\max_iS_0^{(i)}$, then the finite-shot separation persists.
        \item[\emph{(b)}] If $\mac{S}$ is not contained in $\mac{S}_{\mathrm{row}\pm}$, for example $\mac{S}_{\mathrm{col}\pm}$, $\mac{S}_{\mathrm{row\times col}\pm}$, or $\mac{S}_{\mathrm{group}(g)\pm}$, no universal separation follows from the row-wise theorem. The necessary and sufficient ideal condition is the direct margin
        \begin{equation}
            \delta_{\mac{S}}\coloneqq \mathcal{E}^{\mac{S}}_{\mathrm{C}}-\mathcal{E}^{\mathrm{row}}_{\mathrm{Q}}(\infty)>0.
            \label{eq:granularity-margin}
        \end{equation}
        At finite $S$, QRAQ retains a strict advantage over $\mac{S}$ whenever $\delta_{\mac{S}}>\mathcal{E}^{\mathrm{row}}_{\mathrm{Q}}(S)-\mathcal{E}^{\mathrm{row}}_{\mathrm{Q}}(\infty)$.
        \item[\emph{(c)}] For $\mac{S}_{\mathrm{entry}\pm}=\mac{S}_{\mathrm{entry}}$, the classical model can realize $W$ exactly by choosing entry-wise scales, so the classical risk is zero and no positive reconstruction-risk advantage is possible.
    \end{enumerate}
\end{theorem}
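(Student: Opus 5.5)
The three parts are handled separately. Each one reduces to feasible-set inclusion plus the row-wise results already proved in Theorem~\ref{thm:main-formal} and Corollary~\ref{cor:additive-formal}.

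\emph{Part (a).} Suppose the classical feasible set of $\mac{S}$ is contained in that of $\mac{S}_{\mathrm{row}\pm}$. Any pair $(C,\{\alpha_\tau\})$ admissible in $\mac{S}$ then yields surrogates $\alpha_\tau\odot C$ that are also realizable with signed per-row scales, and the objective $\sum_\tau\pi_\tau R_\tau$ is identical in both classes. Minimizing over a smaller set can only raise the minimum, so $\mathcal{E}^{\mac{S}}_{\mathrm{C}}\ge\mathcal{E}^{\mathrm{row}\pm}_{\mathrm{C}}$. For the listed examples:
\begin{itemize}
\item tensor scales are row scales that are constant in $i$;
\item nonnegative scales form a subset of signed scales.
\end{itemize}
Subtracting $\mathcal{E}^{\mathrm{row}}_{\mathrm{Q}}(\infty)$ and applying Corollary~\ref{cor:additive-formal} gives the equality with $\sum_i\Delta_i^\infty$. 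Nonnegativity and the strictness criterion (some row satisfies (D1)) then follow from Theorem~\ref{thm:main-formal}(a).

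For finite $S$ I write
\[
\mathcal{E}^{\mac{S}}_{\mathrm{C}}-\mathcal{E}^{\mathrm{row}}_{\mathrm{Q}}(S)\ \ge\ \sum_i\Delta_i^\infty-\bigl[\mathcal{E}^{\mathrm{row}}_{\mathrm{Q}}(S)-\mathcal{E}^{\mathrm{row}}_{\mathrm{Q}}(\infty)\bigr].
\]
The inflation is bounded, as in Step~8 of the main proof, by $\frac{\nu_K(\eta)}{S}\sum_i\sum_\tau\pi_\tau T_\tau(\mu^\star_{\tau,i})^2$. This bound proves the margin statement.

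For the particular case where every row has a positive ideal gap and $S>\max_iS_0^{(i)}$, Theorem~\ref{thm:main-formal}(b) gives $\Delta_i(S)>0$ for every $i$. Summing, and using $\mathcal{E}^{\mac{S}}_{\mathrm{C}}\ge\mathcal{E}^{\mathrm{row}\pm}_{\mathrm{C}}$, gives strict positivity.

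\emph{Part (b).} This part is largely definitional. By definition the ideal advantage over $\mac{S}$ is $\delta_{\mac{S}}$, so $\delta_{\mac{S}}>0$ is trivially necessary and sufficient. The finite-$S$ claim is the identity
\[
\mathcal{E}^{\mac{S}}_{\mathrm{C}}-\mathcal{E}^{\mathrm{row}}_{\mathrm{Q}}(S)=\delta_{\mac{S}}-\bigl[\mathcal{E}^{\mathrm{row}}_{\mathrm{Q}}(S)-\mathcal{E}^{\mathrm{row}}_{\mathrm{Q}}(\infty)\bigr].
\]

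The substantive content is the phrase ``no universal separation.'' To justify it I would exhibit an instance with $\delta_{\mac{S}}\le0$ for $\mac{S}=\mac{S}_{\mathrm{col}\pm}$; this also covers $\mathrm{row}\times\mathrm{col}$, which contains col. The simplest choice is $N=1$, where col scales are per-entry scales, so the classical risk is $0$ by the argument of part (c). Meanwhile the QRAQ row-scaled risk is positive whenever $w$ is not proportional to a sign vector, for example $w=(1,0)$ with any positive definite $\Sigma_\tau$.

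A group partition that refines entries behaves the same way. It remains to make sure the example does not also get absorbed by part (a). Instances where $\delta_{\mac{S}}>0$ also occur: for example, with $M=1$ col scale coincides with tensor scale, so part (a) applies.

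The main obstacle is stating this non-universality cleanly. The direction ``col can lose to QRAQ'' in genuinely incomparable settings needs some example, and I would take $M=1$ with the (D1) instance extended trivially, or else the closed-form $2\times2$ family of Corollary~\ref{cor:closedform-formal} with a single row.

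\emph{Part (c).} Choose $C_{ij}=\mathrm{sgn}(W_{ij})$, taking $+1$ when $W_{ij}=0$, and set $\alpha_{\tau,(i,j)}=|W_{ij}|$ for every $\tau$. This gives $\hat W_\tau=W$, so every $R_\tau=0$ and the classical risk is $0$. The QRAQ risk is a nonnegative expected quadratic form, so $\mathcal{E}^{\mathrm{entry}\pm}_{\mathrm{C}}-\mathcal{E}_{\mathrm{Q}}\le0$ and no positive advantage is possible.
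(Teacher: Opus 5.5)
Your arguments for (a), for the margin/identity part of (b), and for (c) are the same as the paper's: feasible-set inclusion plus Theorem~\ref{thm:main-formal}(a) and the Step~8 inflation bound; the tautological margin together with the decomposition $\mathcal{E}^{\mathrm{row}}_{\mathrm{Q}}(S)=\mathcal{E}^{\mathrm{row}}_{\mathrm{Q}}(\infty)+(\mathcal{E}^{\mathrm{row}}_{\mathrm{Q}}(S)-\mathcal{E}^{\mathrm{row}}_{\mathrm{Q}}(\infty))$; and exact realization of $W$ by entry-wise scales. Your $N=1$ witness is a genuine improvement. The paper only asserts that column-type classes ``can in principle'' fall below the signed-row minimum, whereas your instance proves it. Taking $w=(1,0)$ in the family of Corollary~\ref{cor:closedform-formal} with $0<r<1$ gives $\Delta^\infty=r/2>0$ against $\mathcal{S}_{\mathrm{row}\pm}$ while $\delta_{\mathcal{S}_{\mathrm{col}\pm}}=-(1-r)/2<0$. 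That is precisely the statement that the row-wise separation does not transfer.

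The later paragraph about instances with $\delta_{\mathcal{S}}>0$ is wrong and should be deleted. The theorem never claims such instances exist for classes outside $\mathcal{S}_{\mathrm{row}\pm}$, so nothing is lost by removing it.
\begin{itemize}
\item At $M=1$ the two sign vectors are identified by the global symmetry, so (D1) can never hold. Part (a) then gives only $\delta_{\mathcal{S}}\ge\sum_i\Delta_i^\infty=0$. Any strict margin there comes from a single tensor scale failing to match unequal row magnitudes, not from part (a).
\item Moreover, $M=1$ makes $\mathcal{S}_{\mathrm{col}\pm}=\mathcal{S}_{\mathrm{tensor}\pm}\subseteq\mathcal{S}_{\mathrm{row}\pm}$, so it is not an incomparable setting at all.
\item Your other suggestion, a single row of the closed-form family, contradicts your own counterexample. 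With $N=1$, column scales are entry scales, so the classical risk is zero and QRAQ can never win there.
\end{itemize}
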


\begin{proof}[Proof of Theorem~\ref{thm:granularity-formal}]
    \emph{Part (a): coarser-than-row classes.}
    Let $\mac{S}$ be a scale class whose feasible classical set is contained in the signed
    per-row feasible set, e.g.\
    $\mac{S}\in\{\mac{S}_{\mathrm{tensor}\pm},\mac{S}_{\mathrm{tensor}+},
    \mac{S}_{\mathrm{row}+},\mac{S}_{\mathrm{row}\pm}\}$. The classical optimum is taken over a
    smaller feasible set, so
    \begin{equation}
        \mathcal{E}^{\mac{S}}_{\mathrm{C}}\ge\mathcal{E}^{\mathrm{row\pm}}_{\mathrm{C}}.
    \end{equation}
    The stated inclusions hold by definition: per-tensor is per-row with constant scale, and
    signed scales contain non-negative scales. Substituting Theorem~\ref{thm:main-formal}(a) on the
    right yields~Eq.~\eqref{eq:granularity-coarser}, with strict inequality whenever at least one
    row satisfies (D1). For finite $S$, combine~Eq.~\eqref{eq:granularity-coarser} with the bias
    inflation bound~Eq.~\eqref{eq:bias-inflation}: the strict separation persists at any
    $S>\max_i S_0^{(i)}$ such that the (row-summed) bias inflation is dominated by the ideal
    gap.

    \emph{Part (b): scale classes incomparable to or finer than $\mac{S}_{\mathrm{row}\pm}$.}
    The classes
    $\mac{S}_{\mathrm{col}\pm},\mac{S}_{\mathrm{row\times col}\pm},\mac{S}_{\mathrm{group}(g)\pm}$
    are not contained in $\mac{S}_{\mathrm{row}\pm}$ in general (e.g., a per-column signed scale
    $\alpha_{\tau,j}$ is not expressible as a per-row signed scale unless it is constant in
    $j$). Consequently, the inclusion argument of part~(a) does not apply, and the classical
    risk in these classes can in principle drop below the per-row signed minimum on structured
    instances where a per-context column-sign reassignment is absorbed into the column scale.
    For any finite-dimensional problem instance $(W,\{\Sigma_\tau,\pi_\tau\})$, however, both
    $\mathcal{E}^{\mac{S}}_{\mathrm{C}}$ and $\mathcal{E}^{\mathrm{row}}_{\mathrm{Q}}(\infty)$
    are well defined, so $\delta_{\mac{S}}$ in~Eq.~\eqref{eq:granularity-margin} is computable from
    the data; by definition,
    \begin{equation}
        \mathcal{E}^{\mathrm{row}}_{\mathrm{Q}}(\infty)<\mathcal{E}^{\mac{S}}_{\mathrm{C}} \iff\delta_{\mac{S}}>0.
    \end{equation}
    The finite-shot statement follows from
    \begin{equation}
        \mathcal{E}^{\mathrm{row}}_{\mathrm{Q}}(S) =\mathcal{E}^{\mathrm{row}}_{\mathrm{Q}}(\infty) +\bigl(\mathcal{E}^{\mathrm{row}}_{\mathrm{Q}}(S)-\mathcal{E}^{\mathrm{row}}_{\mathrm{Q}}(\infty)\bigr)
    \end{equation}
    combined with~Eq.~\eqref{eq:bias-inflation} summed over rows: QRAQ retains a strict advantage
    over $\mac{S}$ as soon as $\delta_{\mac{S}}$ exceeds the (row-summed) bias inflation. We
    do not claim a closed-form sufficient criterion on $W$ and $\{\Sigma_\tau\}$ alone; in the experiments reported in Appendix~\ref{app:experiments}, we verify the margin condition numerically.

    \emph{Part (c): per-entry classes.}
    With $\mac{S}_{\mathrm{entry}\pm}=\mac{S}_{\mathrm{entry}}$, every weight entry has a free
    real scale, so the codebook contains $W$ itself and the classical risk is identically zero,
    precluding any positive reconstruction-risk advantage.
\end{proof}

\subsection{Arbitrary priors}\label{app:thm-K-general}

\begin{theorem}[Arbitrary priors]\label{thm:K-general}
    Let $K\ge 2$, let $\pi_\tau\ge 0$ with $\sum_{\tau=1}^K\pi_\tau=1$, and let
    \begin{equation}
        \mac{T}_\pi\coloneqq\{\tau\in[K]:\pi_\tau>0\}
    \end{equation}
    be the active context set. Under the scale class $\mac{S}_{\mathrm{row}\pm}$ and the assumptions of Theorem~\ref{thm:main-formal} restricted to $\mac{T}_\pi$, the ideal per-row gap is
    \begin{equation}
        \Delta_i^\infty=\min_{c\in\{\pm 1\}^M}\sum_{\tau\in\mac{T}_\pi}\pi_\tau J_\tau^\star(c)-\sum_{\tau\in\mac{T}_\pi}\pi_\tau\min_{b\in\{\pm 1\}^M}J_\tau^\star(b)\ge 0.
        \label{eq:K-gap-arbitrary-prior}
    \end{equation}
    Strict positivity, $\Delta_i^\infty>0$, holds if and only if $\bigcap_{\tau\in\mac{T}_\pi}\mac{M}_\tau=\emptyset$ modulo the global sign symmetry $b\sim-b$, where $\mac{M}_\tau=\argmin_{b\in\{\pm 1\}^M}J_\tau^\star(b)$. The finite-shot threshold of Theorem~\ref{thm:main-formal}\emph{(b)} extends with $\pi$-weighted sums restricted to $\mac{T}_\pi$.
\end{theorem}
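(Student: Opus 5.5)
The plan is to reduce Theorem~\ref{thm:K-general} to the proof of Theorem~\ref{thm:main-formal} by removing the contexts with zero prior weight. The key observation is that every context $\tau\notin\mac{T}_\pi$ has $\pi_\tau=0$, so it contributes nothing to either objective. This holds for any sign vector and any scale, because each context enters the classical risk~Eq.~\eqref{eq:row-classical} and the QRAQ risk~Eq.~\eqref{eq:Q-row-S} only through the weighted summand $\pi_\tau[\cdots]$. Since the assumptions of Theorem~\ref{thm:main-formal} (nondegenerate $b\Sigma_\tau b^\top>0$) are imposed only on $\mac{T}_\pi$, I will not need $J_\tau^\star$ to be well defined for inactive contexts. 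I will handle them by the convention $0\cdot(\text{anything finite})=0$; equivalently, their scales can be set to $\alpha_\tau=\mu_\tau=0$, which gives a finite value times zero.

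\emph{Steps 1--3 (reduction).} The classical row objective becomes $\min_{c}\sum_{\tau\in\mac{T}_\pi}\pi_\tau J_\tau^\star(c)$ by the per-context closed-form scale~Eq.~\eqref{eq:ls}, exactly as in Step~2 of the proof of Theorem~\ref{thm:main-formal}. The ideal QRAQ objective decouples across active contexts as in Step~3. Subtracting the two gives~Eq.~\eqref{eq:K-gap-arbitrary-prior}.

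\emph{Step 4 (nonnegativity).} The inequality $\Delta_i^\infty\ge 0$ follows from the termwise bound $J_\tau^\star(c)\ge\min_b J_\tau^\star(b)$ together with $\pi_\tau\ge0$.

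\emph{Step 5 (characterization of strict positivity).} I will repeat Step~5 of the proof of Theorem~\ref{thm:main-formal} with the intersection taken over $\mac{T}_\pi$. If the sets $\mac{M}_\tau$, $\tau\in\mac{T}_\pi$, share a common representative, that common vector attains the sum of minima, so $\Delta_i^\infty=0$. If they have no common representative, then every $c$ misses some $\mac{M}_{\tau_c}$ with $\tau_c\in\mac{T}_\pi$. Because $\pi_{\tau_c}>0$, this gives a strict inequality $F(c)>\sum_\tau\pi_\tau\min_bJ_\tau^\star(b)$, and finiteness of $\{\pm1\}^M$ then yields a strictly positive minimum gap. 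The quotient by $b\sim-b$ is harmless because $J_\tau^\star(b)=J_\tau^\star(-b)$.

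This step is the main subtlety, and it is why the restriction to $\mac{T}_\pi$ is essential. An inactive context with a disagreeing argmin would make the unrestricted intersection empty while the gap stays zero. So the original hypothesis $\pi_\tau>0$ for all $\tau$ is exactly what the restriction replaces.

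\emph{Finite-shot extension.} The finite-shot claim follows Steps~6--8 verbatim. The variance term $\frac{\nu_K(\eta)}{S}\pi_\tau\mu_\tau^2T_\tau$ also vanishes for inactive contexts, so the threshold $S_0^{(i)}$ uses $\sum_{\tau\in\mac{T}_\pi}\pi_\tau T_\tau(\mu_\tau^\star)^2$. The coefficient $\nu_K(\eta)$ keeps the full $K$, because the encoding still uses $c_K=1/\sqrt K$ regardless of which contexts are active. I will note this explicitly, since re-encoding with only $|\mac{T}_\pi|$ observables would improve the constant but is a different scheme.
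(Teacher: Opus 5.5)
Your proposal is correct and follows the paper's proof. You drop the zero-prior contexts, whose summands vanish in both the classical and QRAQ objectives, then rerun Steps~4--5 of the proof of Theorem~\ref{thm:main-formal} over $\mathcal{T}_\pi$ for nonnegativity and the (D1) characterization, and Steps~6--8 for the finite-shot threshold. Your extra remarks are slightly more careful than the paper's ``without loss of generality $\mathcal{T}_\pi=[K]$'' phrasing, which, read literally, would relabel $K$: namely, that $J_\tau^\star$ need not be defined for inactive contexts, and that $\nu_K(\eta)$ keeps the full $K$ because the encoding still uses $c_K=1/\sqrt{K}$.
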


\begin{proof}
    The contexts $\tau\notin\mac{T}_\pi$ contribute zero to both classical and QRAQ risks
    (both prefactors $\pi_\tau$ vanish in the corresponding summands of~Eq.~\eqref{eq:Ec-rowpm}
    and~Eq.~\eqref{eq:Eq-rowmin-ideal}), so without loss of generality we may assume
    $\mac{T}_\pi=[K]$ with $\pi_\tau>0$ for every $\tau\in[K]$. Under this assumption,
    Eq.~\eqref{eq:K-gap-arbitrary-prior} reduces to~Eq.~\eqref{eq:ideal-gap}; non-negativity
    $\Delta_i^\infty\ge 0$ follows from Step~4 of the proof of Theorem~\ref{thm:main-formal}, and the
    strict-positivity criterion follows from Step~5 of the same proof. The finite-shot
    extension is identical to Steps~7--8 of the proof of Theorem~\ref{thm:main-formal} with all sums
    over $\tau$ restricted to $\mac{T}_\pi$.
\end{proof}

The closed-form family of Corollary~\ref{cor:closedform-formal} extends, in the $K$-context case,
to the diagonal-plus-rank-one covariance structure $\Sigma_\tau=I+r_\tau (uu^\top-I)$ for
a common direction $u\in\mab{R}^M$ with $\lVert u\rVert_2=1$ and per-context anisotropy
$r_\tau$; the row-by-row $J_\tau^\star$ algebra parallels Step~1 of
Section~\ref{subapp:cor-closedform} and we omit the verbatim restatement.

\subsection{Non-anticommuting HS-orthogonal observables via Gram bound}\label{subapp:gram}

\begin{theorem}[Gram-bounded HS-orthogonal observables]\label{thm:gram}
    Let $\{A_\tau\}_{\tau=1}^K$ be self-adjoint operators on $\mab{C}^d$ satisfying
    \begin{enumerate}
        \item[\emph{(i)}] $A_\tau^2=I$ for every $\tau$;
        \item[\emph{(ii)}] tracelessness $\Tr(A_\tau)=0$ for every $\tau$;
        \item[\emph{(iii)}] Hilbert--Schmidt orthogonality $\Tr(A_{\tau'}A_\tau)=0$ for every
        $\tau\ne\tau'$;
    \end{enumerate}
    and assume there exists a constant $\kappa>0$ such that the Gram operator bound
    \begin{equation}
        \Bigl(\sum_{\tau=1}^K b_\tau A_\tau\Bigr)^2\preceq\kappa K\cdot I \qquad\text{for every }b\in\{\pm 1\}^K
        \label{eq:gram-op-bound}
    \end{equation}
    holds. Set $c_G\coloneqq 1/\sqrt{\kappa K}$ and
    $\rho_b^{(G)}\coloneqq\tfrac{1}{d}\bigl(I+c_G\sum_{\tau=1}^K b_\tau A_\tau\bigr)$. Then:
    \begin{enumerate}
        \item[\emph{(a)}] $\rho_b^{(G)}$ is a valid density matrix (Hermitian, unit trace, positive
        semidefinite) for every $b\in\{\pm 1\}^K$;
        \item[\emph{(b)}] the QRAQ risk~Eq.~\eqref{eq:qrac-risk} holds with $\nu_K(\eta)$ replaced by
        $\nu_G(\eta)=\kappa K/\eta^2-1$;
        \item[\emph{(c)}] the separation of Theorem~\ref{thm:main-formal}(b) is preserved with $\nu_K$
        replaced by $\nu_G$.
    \end{enumerate}
    When $\{A_\tau\}$ pairwise anticommute, (i)--(iii) are automatic (Lemma~\ref{lem:positivity})
    and $\kappa=1$, so Theorem~\ref{thm:main-formal} is recovered verbatim. The spectral
    bound~Eq.~\eqref{eq:gram-op-bound} is implied by the pointwise operator inequality
    $\lambda_{\max}\bigl(\sum_{\tau,\tau'}b_\tau b_{\tau'}\{A_\tau,A_{\tau'}\}/2\bigr)\le\kappa K$,
    which for any unit vector $|\psi\rangle$ reduces to
    $\langle\psi|S_b^2|\psi\rangle\le\kappa\lVert b\rVert_2^2$, i.e., the Gram matrix of
    $\{A_\tau|\psi\rangle\}_{\tau=1}^K$ has spectral radius at most $\kappa$ for every
    $|\psi\rangle$.
\end{theorem}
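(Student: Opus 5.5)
The plan is to follow the template of Lemma~\ref{lem:positivity} and Lemma~\ref{lem:noise-stats}, replacing the exact identity $S_b^2=KI$ by the Gram bound~Eq.~\eqref{eq:gram-op-bound}. Throughout we write $S_b\coloneqq\sum_\tau b_\tau A_\tau$.

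\emph{Part (a).} $S_b$ is Hermitian, so $\rho_b^{(G)}$ is Hermitian. Unit trace follows from (ii):
\begin{equation}
\Tr\bigl(\rho_b^{(G)}\bigr)=\frac{1}{d}\Tr(I)+\frac{c_G}{d}\sum_\tau b_\tau\Tr(A_\tau)=1.
\end{equation}
For positivity we use the spectral theorem. Since $S_b^2\preceq\kappa K I$, every eigenvalue $\lambda$ of $S_b$ satisfies $\lambda^2\le\kappa K$, so $\lambda\ge-\sqrt{\kappa K}$. Then $I+c_GS_b$ has smallest eigenvalue at least $1-c_G\sqrt{\kappa K}=0$, which gives $\rho_b^{(G)}\succeq0$. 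No tightness claim is needed.

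\emph{Part (b).} We rerun Steps 1--5 of Lemma~\ref{lem:noise-stats}. By (i), (ii) and HS-orthogonality (iii), the single-shot mean is
\begin{equation}
\Tr\bigl(\rho_b^{(G)}A_\tau\bigr)=\frac{c_G}{d}\,b_\tau\Tr(A_\tau^2)=c_Gb_\tau.
\end{equation}
Step~1 used anticommutation only to conclude $\Tr(A_{\tau'}A_\tau)=0$, so (iii) replaces it directly. Because $A_\tau^2=I$, the measurement of $A_\tau$ is a binary projective measurement with outcomes $\pm1$. The per-shot variance is therefore $1-\eta^2c_G^2$. Depolarizing again scales the mean by $\eta$, since $\Tr(A_\tau)=0$.

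Next we reparametrize with $\mu=\eta c_G\gamma$ and average over $S$ shots. This gives variance
\begin{equation}
\frac{\mu^2}{S}\Bigl(\frac{1}{\eta^2c_G^2}-1\Bigr)=\frac{\mu^2}{S}\Bigl(\frac{\kappa K}{\eta^2}-1\Bigr)=\frac{\mu^2}{S}\,\nu_G(\eta).
\end{equation}
Lemma~\ref{lem:indep} and Lemma~\ref{lem:bv-decomp} never used the specific observables. They therefore yield Eq.~\eqref{eq:qrac-risk} with $\nu_K$ replaced by $\nu_G$.

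\emph{Part (c).} The proof of Theorem~\ref{thm:main-formal} uses the noise model only through the coefficient $\nu_K(\eta)$ in Eq.~\eqref{eq:Q-row-S}. It also uses the nonnegativity of that coefficient in Step~6, which holds here because $\kappa K\ge1\ge\eta^2$. To see $\kappa K\ge1$, take traces in the Gram bound: $\Tr(S_b^2)=Kd$ by (i) and (iii), so $Kd\le\kappa Kd$, which forces $\kappa\ge1$. Steps 1--8 then go through verbatim, giving the threshold $S_0^{(i)}$ with $\nu_G$ in place of $\nu_K$.

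\emph{Closing remarks.} In the anticommuting case, Lemma~\ref{lem:positivity}(ii) gives $S_b^2=KI$, so $\kappa=1$ and $\nu_G=\nu_K$. The final reformulation is one line: for a unit vector $\psi$,
\begin{equation}
\langle\psi|S_b^2|\psi\rangle=\lVert S_b\psi\rVert^2=b^\top G_\psi b,\qquad (G_\psi)_{\tau\tau'}=\mathrm{Re}\,\langle A_\tau\psi,A_{\tau'}\psi\rangle.
\end{equation}
Hence a spectral radius bound $\kappa$ on $G_\psi$ gives $\langle\psi|S_b^2|\psi\rangle\le\kappa\lVert b\rVert_2^2=\kappa K$.

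The main obstacle is bookkeeping rather than depth. We must check that no step in Lemma~\ref{lem:noise-stats} or Theorem~\ref{thm:main-formal} silently relied on anticommutation beyond HS-orthogonality. We must also verify $\nu_G\ge0$, which the trace argument in part~(c) settles.
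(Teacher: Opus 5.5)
Your proposal is correct and follows the paper's proof essentially step for step. Positivity comes from the eigenvalue bound implied by the Gram inequality, HS-orthogonality replaces anticommutation in the mean computation of Lemma~\ref{lem:noise-stats}, the reparametrization $\mu=\eta c_G\gamma$ yields $\nu_G$, and substitution into Steps~7--8 of Theorem~\ref{thm:main-formal} gives the threshold. Your trace argument that $\kappa\ge1$ (hence $\nu_G\ge0$, which the Step~6 necessity direction needs) is a small check the paper leaves implicit, and your real-part Gram-matrix derivation of the closing remark is also correct.
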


\begin{proof}
    \emph{Step 1: Hermiticity, unit trace, positivity.}
    Tracelessness (assumption (ii)) gives $\Tr(S_b)=\sum_\tau b_\tau\Tr(A_\tau)=0$, so
    $\Tr(\rho_b^{(G)})=\Tr(I)/d=1$. Since each $A_\tau$ is self-adjoint, $S_b$ and
    $\rho_b^{(G)}$ are Hermitian. By~Eq.~\eqref{eq:gram-op-bound}, $S_b^2\preceq\kappa K\cdot I$,
    which means every eigenvalue of $S_b$ lies in $[-\sqrt{\kappa K},\sqrt{\kappa K}]$. Hence
    \begin{equation}
        I+c_G S_b\succeq\bigl(1-c_G \sqrt{\kappa K}\bigr) I=0,\qquad c_G=\frac{1}{\sqrt{\kappa K}},
    \end{equation}
    so $\rho_b^{(G)}\succeq 0$. This proves part~(a).

    \emph{Step 2: per-shot mean and variance.}
    By the observable expectation formula~Eq.~\eqref{eq:expect-observable} and HS orthogonality (iii)
    together with $\Tr(A_\tau^2)=d$ (from $A_\tau^2=I$),
    \begin{equation}
        \Tr\bigl(\rho_b^{(G)} A_\tau\bigr) =\frac{1}{d} \Tr(A_\tau)+\frac{c_G}{d}\sum_{\tau'=1}^K b_{\tau'} \Tr(A_{\tau'}A_\tau) =\frac{c_G}{d} b_\tau \Tr(A_\tau^2) =c_G b_\tau,
    \end{equation}
    where we used (ii) and (iii). Since the matched binary outcome $m^{(\tau)}\in\{\pm 1\}$
    satisfies $(m^{(\tau)})^2=1$ and $b_\tau^2=1$,
    \begin{equation}
        \Var\bigl(m^{(\tau)}\bigr)=1-(c_G b_\tau)^2=1-c_G^2.
    \end{equation}

    \emph{Step 3: depolarizing channel and $S$-shot averaging.}
    For $\mac{N}_\eta(\rho)=\eta\rho+(1-\eta)I/d$, the same argument as in Step~3 of
    Lemma~\ref{lem:noise-stats} (using only $\Tr(A_\tau)=0$, which holds by (ii)) shows that
    the per-shot mean becomes $\eta c_G b_\tau$ and the per-shot variance becomes
    $1-(\eta c_G)^2$; the $S$-shot empirical mean satisfies
    \begin{equation}
        \mab{E}\bigl[\bar m^{(\tau)}\bigr]=\eta c_G b_\tau,\qquad \Var\bigl(\bar m^{(\tau)}\bigr)=\frac{1-\eta^2 c_G^2}{S}.
    \end{equation}

    \emph{Step 4: noise coefficient.}
    Define $\mu_{\tau,g}\coloneqq\eta c_G \gamma_{\tau,g}$. The proof of
    Lemma~\ref{lem:noise-stats}, Step~5, applies verbatim with $c_K$ replaced by $c_G$ and
    yields
    \begin{equation}
        \Var\bigl(\hat{W}^{\mathrm{Q}}_{\tau,ij}\bigr) =\frac{\mu_{\tau,g(i,j)}^2}{S}\Bigl(\frac{1}{\eta^2 c_G^2}-1\Bigr) =\frac{\mu_{\tau,g(i,j)}^2}{S} \nu_G(\eta),\qquad \nu_G(\eta)=\frac{\kappa K}{\eta^2}-1.
    \end{equation}
    This proves part~(b).

    \emph{Step 5: separation.}
    The proof of Theorem~\ref{thm:main-formal} uses only the per-slot mean and variance from
    Lemma~\ref{lem:noise-stats}, the diagonality of $\Cov(\hat{W}^{\mathrm{Q}}_\tau)$
    (Lemma~\ref{lem:indep}), and the bias-variance decomposition
    (Lemma~\ref{lem:bv-decomp}). Substituting Step~4 into Steps~7--8 of that proof, with
    $\nu_K(\eta)$ replaced by $\nu_G(\eta)$ everywhere, gives the analogous threshold
    $S>\widetilde S_0^{(i)}=\nu_G(\eta)\sum_\tau\pi_\tau T_\tau (\mu_\tau^\star)^2/\Delta_i^\infty$
    under (D1), and the corresponding finite-shot strict separation. This proves part~(c).
\end{proof}

\subsection{Multi-qubit QRAC and storage-noise tradeoff}\label{subapp:multi-qubit}

\begin{theorem}[Multi-qubit QRAC]\label{thm:multi-qubit}
    Let $n\ge 1$, $K\le 2n{+}1$, and $\{A_\tau\}_{\tau=1}^K$ a pairwise anticommuting set of
    self-inverse observables on $n$ qubits (e.g.\ the Jordan--Wigner family of
    Appendix~\ref{subapp:tensor-jordan}). Then:
    \begin{enumerate}
        \item[\emph{(a)}] the Bloch radius $c=1/\sqrt{K}$ is tight (Lemma~\ref{lem:positivity});
        \item[\emph{(b)}] the formal separation and scale-granularity results,
        Theorems~\ref{thm:main-formal} and~\ref{thm:granularity-formal}, hold with the noise coefficient
        $\nu_K(\eta)=K/\eta^2-1$;
        \item[\emph{(c)}] the minimal Hilbert-space dimension $2^n$ supporting $K$ pairwise
        anticommuting self-inverse self-adjoint operators satisfies $2^n\ge 2^{\lceil(K-1)/2\rceil}$,
        i.e., $n\ge\lceil(K-1)/2\rceil$.
    \end{enumerate}
    Saturating the Jordan--Wigner bound $K=2n{+}1$ gives a logical storage-cell reduction of
    factor $(2n{+}1)/n\to 2$ as $n\to\infty$ relative to a $K$-context classical scheme that
    stores $K$ separate sign bits per weight. This is a logical comparison; physical copies for
    shot averaging and control overhead must be accounted for separately. The statistical cost
    is a $K/\eta^2-1$ factor inflation in shot-noise variance.
\end{theorem}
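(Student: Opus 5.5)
Parts (a) and (b) follow by reduction to earlier results, and part (c) is the only step with real content.

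For part (a), I will invoke Lemma~\ref{lem:positivity}. Its hypotheses are exactly the assumptions here: self-adjoint $A_\tau$ with $A_\tau^2=I$ that pairwise anticommute, on $\mab{C}^d$ with $d=2^n$. Lemma~\ref{lem:positivity}(iii) then states that $\rho_b^{(c)}\succeq0$ for all $b$ if and only if $c\le1/\sqrt K$, with tightness witnessed by the $-\sqrt K$ eigenspace of $S_b$, which is nonempty because it has dimension $d/2$. Nothing depends on $n$ beyond $d$ being even, and that is itself forced by the trace argument in part (ii) of the lemma.

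For part (b), I will observe, as in Step~5 of the proof of Theorem~\ref{thm:gram}, that the proofs of Theorems~\ref{thm:main-formal} and~\ref{thm:granularity-formal} use only three inputs. These are the per-slot mean and variance of Lemma~\ref{lem:noise-stats}, the diagonal covariance of Lemma~\ref{lem:indep}, and the bias--variance identity of Lemma~\ref{lem:bv-decomp}. Lemma~\ref{lem:noise-stats} is already stated for a general multi-qubit register of dimension $d$ with pairwise anticommuting observables. Lemma~\ref{lem:indep} needs only that distinct slots carry distinct tensor factors; here each slot is an $n$-qubit block measured by a product POVM across slots. So both theorems transfer verbatim with $\nu_K(\eta)=K/\eta^2-1$. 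Equivalently, this is Theorem~\ref{thm:gram} with $\kappa=1$.

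For part (c), the main step, I will use the standard Clifford-algebra dimension bound. First, if $K=2m$, I will form the products $a_k=A_{2k-1}A_{2k}$, or rather $i A_{2k-1}A_{2k}$ so that they are Hermitian involutions. These $m$ operators commute pairwise, square to $I$, and are traceless. Their joint eigenspaces are indexed by sign patterns $s\in\{\pm1\}^m$. Applying $A_{2k-1}$ maps the joint eigenspace for $s$ bijectively onto the one for $s$ with $s_k$ flipped, because $A_{2k-1}$ anticommutes with $a_k$ and commutes with $a_{k'}$ for $k'\ne k$. Hence all $2^m$ joint eigenspaces are nonzero and isomorphic, so $2^m\mid d$ and $d\ge 2^{K/2}$. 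If $K=2m+1$, I will drop one observable and apply the even case to get $d\ge 2^m=2^{\lceil (K-1)/2\rceil}$. As a cross-check for $K=2m$, the bound from $2m$ observables is $2^m=2^{\lceil(2m-1)/2\rceil}$, so both parities give exactly $\lceil (K-1)/2\rceil$, and with $d=2^n$ this yields $n\ge\lceil(K-1)/2\rceil$. Achievability comes from the Jordan--Wigner family, which realizes $2n+1$ such observables on $n$ qubits.

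The main obstacle is the eigenspace-permutation argument: I need to check carefully that $A_{2k-1}$ anticommutes with $a_k$ but commutes with the other $a_{k'}$. The commuting part holds because $A_{2k-1}$ anticommutes with both factors of $a_{k'}$, and two sign flips cancel.

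Finally, the remarks after the statement are arithmetic. At $K=2n+1$, storing $K$ separate bits uses $2n+1$ cells against $n$ qubits, a ratio that tends to $2$. The variance factor $K/\eta^2-1$ is read off from part (b).
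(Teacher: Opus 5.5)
Your proposal is correct. Parts (a), (b), and the closing remarks match the paper's proof essentially step for step:
\begin{itemize}
\item tightness comes from Lemma~\ref{lem:positivity}(iii);
\item Theorems~\ref{thm:main-formal} and~\ref{thm:granularity-formal} transfer because their proofs use only Lemmas~\ref{lem:indep}, \ref{lem:noise-stats}, and~\ref{lem:bv-decomp};
\item achievability comes from the Jordan--Wigner family.
\end{itemize}

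The real difference is in part (c). The paper cites the classification of complex Clifford algebras, $\mathbb{C}\ell_{2k}\simeq M_{2^k}(\mathbb{C})$ and $\mathbb{C}\ell_{2k+1}\simeq M_{2^k}(\mathbb{C})\oplus M_{2^k}(\mathbb{C})$, so every representation decomposes into irreducibles of dimension $2^k$. You instead give a self-contained counting argument. The commuting Hermitian involutions $a_k=iA_{2k-1}A_{2k}$ split $\mathbb{C}^d$ into $2^m$ joint eigenspaces. $A_{2k-1}$ maps the $s$-eigenspace bijectively onto the one with $s_k$ flipped, so all eigenspaces have a common dimension. That dimension is nonzero because together they span $\mathbb{C}^d$, which gives $2^m\mid d$. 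The odd case follows by discarding one generator.

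Your route is elementary and needs no external structure theorem. It also yields the divisibility $2^{\lfloor K/2\rfloor}\mid d$ as a byproduct. It makes explicit why the $(2k+1)$-th generator costs no extra qubit, a point the paper's odd-case sentence only sketches. The paper's citation is shorter and carries more information: it gives the structure of every realization as a direct sum of irreducible Clifford modules. The dimension bound in (c), however, needs nothing beyond what you prove.
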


\begin{proof}
    \emph{Step 1: existence (Jordan--Wigner construction).}
    The Jordan--Wigner family of Appendix~\ref{subapp:tensor-jordan} consists of $2n+1$
    self-adjoint operators on $\mab{C}^{2^n}$, each squaring to $I$, that pairwise anticommute
    (Pauli string verification). Hence for any $K\le 2n+1$ we may take $\{A_\tau\}_{\tau=1}^K$
    as the first $K$ such operators, and Lemma~\ref{lem:positivity} applies with $d=2^n$, giving
    parts~(a) and~(b) below.

    \emph{Step 2: tightness of $c=1/\sqrt{K}$ and parameters.}
    Lemma~\ref{lem:positivity}(iii) directly gives part~(a).

    \emph{Step 3: separation under $\nu_K(\eta)$.}
    The proof of Theorem~\ref{thm:main-formal} uses only Lemmas~\ref{lem:positivity},~\ref{lem:indep},
    ~\ref{lem:noise-stats}, and~\ref{lem:bv-decomp}, all of which hold for any $K\ge 2$ on
    $d=2^n\ge 2$. Substituting these lemmas into Steps~1--8 of that proof shows that
    Theorems~\ref{thm:main-formal} and~\ref{thm:granularity-formal} hold with the noise coefficient
    $\nu_K(\eta)=K/\eta^2-1$ unchanged in form. This is part~(b).

    \emph{Step 4: minimal dimension lower bound.}
    Suppose $\{A_\tau\}_{\tau=1}^K$ is a pairwise anticommuting set of self-inverse
    self-adjoint operators on $\mab{C}^d$. These operators satisfy the defining anticommutation
    relations of the complex Clifford system with $K$ generators. The standard classification
    of finite-dimensional complex representations of these relations implies that every such
    representation has dimension at least $2^k$ for $K=2k$ and at least $2^k$ for $K=2k+1$:
    for even $K$ this is the irreducible representation of $\mathbb{C}\ell_{2k}\simeq
    M_{2^k}(\mab{C})$, and for odd $K$ the algebra splits as
    $\mathbb{C}\ell_{2k+1}\simeq M_{2^k}(\mab{C})\oplus M_{2^k}(\mab{C})$ while the
    generator relations are already realized on one irreducible summand. Hence
    $d\ge2^{\lceil(K-1)/2\rceil}$ and, for $d=2^n$, $n\ge\lceil(K-1)/2\rceil$.

    \emph{Step 5: storage--noise tradeoff.}
    Saturating $K=2n+1$ in part~(c) gives logical storage cost $n$ qubits per weight versus
    $K=2n+1$ separate classical sign bits, i.e., reduction factor $(2n+1)/n\to 2$ as
    $n\to\infty$. This comparison ignores the extra physical copies or re-preparations used for
    shot averaging; those costs enter through the explicit finite-shot threshold. The inflation
    in shot-noise variance is $\nu_K(\eta)=K/\eta^2-1$ from part~(b).
\end{proof}

\subsection{Correlated weight noise}\label{subapp:correlated}

\begin{theorem}[Correlated weight noise]\label{thm:correlated}
    Suppose Assumption~\ref{ass:fresh-copy} holds and the $M$ qubits of a single output channel
    $i$ are correlated, while qubits in distinct output channels remain independent. Let
    $\Lambda_{\tau,i}\succeq 0$ be the $M\times M$ covariance matrix of the $i$-th row
    $\hat{W}^{\mathrm{Q}}_{\tau,i,:}$ in context $\tau$. Then the QRAQ
    risk~Eq.~\eqref{eq:qrac-risk} generalises to
    \begin{equation}
        \mathcal{E}_{\mathrm{Q}}\bigl(W,\{B^{(\tau)},\mu_\tau,\{\Lambda_{\tau,i}\}\}\bigr) =\sum_{\tau=1}^K\pi_\tau \Bigl[R_\tau(W,Q_\tau)+\sum_{i=1}^N\Tr\bigl(\Lambda_{\tau,i} \Sigma_\tau\bigr)\Bigr],
        \label{eq:correlated-risk}
    \end{equation}
    with $Q_{\tau,ij}=\mu_{\tau,g(i,j)} B^{(\tau)}_{ij}$.
    Moreover, Theorem~\ref{thm:main-formal} holds with the finite-shot threshold condition (D2) of
    that theorem replaced by the per-row spectral condition
    \begin{equation}
        \sum_{\tau=1}^K\pi_\tau \lambda_{\max}(\Lambda_{\tau,i}) \Tr(\Sigma_\tau)<\Delta_i^{\infty}
        \label{eq:correlated-condition}
    \end{equation}
    for every row $i$ of positive ideal gap.
\end{theorem}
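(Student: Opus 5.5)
The plan has two parts. First I derive the risk formula Eq.~\eqref{eq:correlated-risk} directly from the general bias--variance identity of Lemma~\ref{lem:bv-decomp}. Then I rerun Steps~7--8 of the proof of Theorem~\ref{thm:main-formal}, replacing the diagonal variance term by a spectral upper bound.

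For the risk formula, I fix $\tau$. Under Assumption~\ref{ass:fresh-copy} and the per-slot mean computation of Lemma~\ref{lem:noise-stats} (Step~1, which uses only the marginal state of each slot and is therefore unaffected by correlations), the estimator remains unbiased: $\mab{E}[\hat{W}^{\mathrm{Q}}_{\tau}]=Q_\tau$. Lemma~\ref{lem:bv-decomp} holds for any estimator with finite second moment. It gives the bias term $R_\tau(W,Q_\tau)$ plus $\sum_i\Tr(\Cov(U_{\tau,i,:})\Sigma_\tau)$, and by definition $\Cov(U_{\tau,i,:})=\Lambda_{\tau,i}$. Independence across output channels is not even needed for this identity, because Eq.~\eqref{eq:row-cov-trace} already decomposes row by row. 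Its role is to guarantee that no cross-row covariance enters the objective, so the risk stays row-additive. Weighting by $\pi_\tau$ and summing over $\tau$ gives Eq.~\eqref{eq:correlated-risk}.

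For the separation, I keep Steps~1--6 of the proof of Theorem~\ref{thm:main-formal} verbatim, since they concern the ideal regime and the nonnegativity of the noise term. Nonnegativity still holds because $\Tr(\Lambda\Sigma)\ge0$ for PSD $\Lambda$ and $\Sigma$. In particular, (D1) remains necessary for a finite-$S$ advantage, and $\Delta_i^\infty>0$ is equivalent to (D1). For the finite-shot step I plug the ideal optimizer $\{B_\tau^\star,\mu_\tau^\star\}$ into the row-$i$ objective, as in Eq.~\eqref{eq:bias-inflation}, to obtain
\[
\mathcal{E}^{\mathrm{row}}_{\mathrm{Q},i}\le \mathcal{E}^{\mathrm{row},\infty}_{\mathrm{Q},i}+\sum_\tau\pi_\tau\Tr(\Lambda_{\tau,i}\Sigma_\tau).
\]
I then apply the von Neumann / PSD trace inequality $\Tr(\Lambda\Sigma)\le\lambda_{\max}(\Lambda)\Tr(\Sigma)$, valid for $\Lambda,\Sigma\succeq0$ via $\Sigma^{1/2}\Lambda\Sigma^{1/2}\preceq\lambda_{\max}(\Lambda)\Sigma$. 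This yields
\[
\Delta_i\ge\Delta_i^\infty-\sum_\tau\pi_\tau\lambda_{\max}(\Lambda_{\tau,i})\Tr(\Sigma_\tau),
\]
which is strictly positive under Eq.~\eqref{eq:correlated-condition}. Here $\Lambda_{\tau,i}$ denotes the covariance evaluated at the ideal optimizer. Summing over rows as in Corollary~\ref{cor:additive-formal} handles the global statement.

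The main obstacle is interpretive rather than computational. $\Lambda_{\tau,i}$ depends on the chosen scales and the number of shots $S$, so condition~\eqref{eq:correlated-condition} must be read at the ideal optimizer $\mu_\tau^\star$. I will make this explicit. I will also check that correlations leave the per-slot means unchanged, and that in the independent case $\Lambda_{\tau,i}=(\nu_K(\eta)/S)(\mu_\tau^\star)^2 I$, so that the condition reduces to (D2).
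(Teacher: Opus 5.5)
Your proposal is correct and follows essentially the same route as the paper: you apply Lemma~\ref{lem:bv-decomp} with $\Cov(U_{\tau,i,:})=\Lambda_{\tau,i}$ to obtain Eq.~\eqref{eq:correlated-risk}, plug the ideal optimizer into the row objective as in Steps~7--8 of the main proof, and bound $\Tr(\Lambda_{\tau,i}\Sigma_\tau)\le\lambda_{\max}(\Lambda_{\tau,i})\Tr(\Sigma_\tau)$. Your extra remarks make explicit what the paper leaves implicit: $\Lambda_{\tau,i}$ must be evaluated at $\mu_\tau^\star$, $\Tr(\Lambda\Sigma)\ge0$ keeps (D1) necessary at finite $S$, and the condition reduces to (D2) when $\Lambda_{\tau,i}=(\nu_K(\eta)/S)(\mu_\tau^\star)^2 I$.
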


\begin{proof}
    \emph{Step 1: row-wise risk decomposition under correlated noise.}
    The within-row covariance of $\hat{W}^{\mathrm{Q}}_{\tau,i,:}$ is, by hypothesis, the
    $M\times M$ matrix $\Lambda_{\tau,i}$ (no diagonality assumption on $\Lambda_{\tau,i}$),
    while distinct rows remain independent. Apply Lemma~\ref{lem:bv-decomp} with
    $\Cov(U_{\tau,i,:})=\Lambda_{\tau,i}$:
    \begin{equation}
        \mab{E} \Tr \bigl[(W-\hat{W}_\tau) \Sigma_\tau (W-\hat{W}_\tau)^\top\bigr] =R_\tau(W,Q_\tau)+\sum_{i=1}^N\Tr\bigl(\Lambda_{\tau,i} \Sigma_\tau\bigr).
    \end{equation}
    Multiplying by $\pi_\tau$ and summing over $\tau$ gives~Eq.~\eqref{eq:correlated-risk}.

    \emph{Step 2: per-row variance bound.}
    For any positive semidefinite matrices $A,B\in\mab{R}^{M\times M}$ with $A\succeq 0$,
    $B\succeq 0$,
    \begin{equation}
        \Tr(A B)\le\lambda_{\max}(A) \Tr(B),
    \end{equation}
    since $\Tr(A B)=\Tr(A^{1/2}B A^{1/2})\le\lambda_{\max}(A) \Tr(B)$ via the Loewner ordering
    $A\preceq\lambda_{\max}(A) I$. Applying this with $A=\Lambda_{\tau,i}$, $B=\Sigma_\tau$,
    \begin{equation}
        \Tr\bigl(\Lambda_{\tau,i} \Sigma_\tau\bigr)\le\lambda_{\max}(\Lambda_{\tau,i}) \Tr(\Sigma_\tau).
        \label{eq:correlated-rowbound}
    \end{equation}

    \emph{Step 3: per-row separation.}
    Following Step~7 of the proof of Theorem~\ref{thm:main-formal}, but replacing the diagonal noise
    contribution $(\nu_K(\eta)/S) \mu_\tau^2 T_\tau$ in~Eq.~\eqref{eq:Q-row-S} by
    $\Tr(\Lambda_{\tau,i} \Sigma_\tau)$, the per-row QRAQ risk satisfies
    \begin{equation}
        \mathcal{E}^{\mathrm{row}}_{\mathrm{Q},i} \le\sum_{\tau=1}^K\pi_\tau J_\tau(B_\tau^\star,\mu_\tau^\star) +\sum_{\tau=1}^K\pi_\tau \Tr\bigl(\Lambda_{\tau,i} \Sigma_\tau\bigr) \le\mathcal{E}^{\mathrm{row},\infty}_{\mathrm{Q},i} +\sum_{\tau=1}^K\pi_\tau \lambda_{\max}(\Lambda_{\tau,i}) \Tr(\Sigma_\tau),
    \end{equation}
    where the second inequality uses~Eq.~\eqref{eq:correlated-rowbound}. Subtracting from
    $\mathcal{E}^{\mathrm{row\pm}}_{\mathrm{C},i}$ and recalling
    $\Delta_i^\infty=\mathcal{E}^{\mathrm{row\pm}}_{\mathrm{C},i}
    -\mathcal{E}^{\mathrm{row},\infty}_{\mathrm{Q},i}$,
    \begin{equation}
        \Delta_i\ge\Delta_i^\infty-\sum_{\tau=1}^K\pi_\tau \lambda_{\max}(\Lambda_{\tau,i}) \Tr(\Sigma_\tau),
    \end{equation}
    which is strictly positive whenever~Eq.~\eqref{eq:correlated-condition} holds. The remaining
    parts of Theorem~\ref{thm:main-formal} (parts~(a) on non-negativity and (D1)$\Leftrightarrow$
    strict positivity at the ideal level) carry over verbatim because they only use the bias
    part of the risk, which is independent of $\{\Lambda_{\tau,i}\}$.
\end{proof}

\subsection{General Pauli noise channels}\label{subapp:pauli-noise}

\begin{theorem}[Pauli-noise calibration]\label{thm:pauli-noise}
    Let $\mac{N}_{\boldsymbol{p}}(\rho)=\sum_{P\in\{I,X,Y,Z\}}p_P P\rho P$ be a single-qubit Pauli-diagonal channel with $p_P\ge 0$ and $\sum_P p_P=1$ (so $\boldsymbol p$ is admissible in the sense of~Eq.~\eqref{eq:pauli-diag-CP}), applied identically to the qubit before every measurement. Let the QRAC observables be $A_\tau\in\{X,Y,Z\}$ for $\tau=1,\dots,K$ ($K\le 3$). Then the channel acts on the observable side via its adjoint $\mac{N}^\dagger_{\boldsymbol p}(A)=\sum_P p_P PAP$ (which coincides with $\mac{N}_{\boldsymbol p}$ because Pauli-diagonal channels are self-adjoint with respect to the Hilbert--Schmidt inner product), and
    \begin{equation}
        \mac{N}^\dagger_{\boldsymbol p}(A_\tau)=\eta_\tau A_\tau,\quad \eta_\tau\coloneqq\sum_{P}p_P\epsilon_{\tau,P}\in[-1,1],
    \end{equation}
    where $\epsilon_{\tau,P}=+1$ if $P$ commutes with $A_\tau$ and $-1$ otherwise (so $\epsilon_{\tau,I}=+1$ always). Assuming $\eta_\tau\in(0,1]$ for all $\tau$, the per-context QRAQ shot-noise statistics of~Eq.~\eqref{eq:noise-stats} hold with $\eta$ replaced by $\eta_\tau$, giving the per-context noise coefficient
    \begin{equation}
        \nu_\tau^{\mathrm{asy}}(\boldsymbol{p})=\frac{1}{\eta_\tau^2 c_K^2}-1=\frac{K}{\eta_\tau^2}-1.
        \label{eq:nu-tau-asy}
    \end{equation}
    The per-row QRAQ risk~Eq.~\eqref{eq:Q-row-S} generalizes to the weighted finite-shot form
    \begin{equation}
        \mathcal{E}^{\mathrm{row}}_{\mathrm{Q},i}(S;\boldsymbol{p})=\min_{\{B^{(\tau)}\},\{\mu_\tau\}}\sum_\tau\pi_\tau \left[J_\tau(B^{(\tau)},\mu_\tau)+\frac{\nu_\tau^{\mathrm{asy}}(\boldsymbol{p})}{S}\mu_\tau^2 T_\tau\right],
    \end{equation}
    and Theorem~\ref{thm:main-formal} continues to hold with~Eq.~\eqref{eq:shots-threshold-proof} replaced by $S>\left(\sum_\tau\pi_\tau\nu_\tau^{\mathrm{asy}}(\boldsymbol{p})T_\tau\mu_\tau^{\star 2}\right)/\Delta_i^\infty$.
\end{theorem}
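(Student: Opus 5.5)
The proof runs in four steps: compute the Heisenberg-picture action of the channel on each Pauli observable, redo the per-slot statistics of Lemma~\ref{lem:noise-stats} with a context-dependent shrinkage $\eta_\tau$, substitute into the bias--variance decomposition, and then reuse Steps~7--8 of the proof of Theorem~\ref{thm:main-formal}.

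\emph{Step 1: adjoint action.} By cyclicity of the trace, $\Tr(\mac{N}_{\boldsymbol p}(\rho)A)=\sum_P p_P\Tr(P\rho PA)=\Tr(\rho\sum_P p_P PAP)$. This identifies $\mac{N}^\dagger_{\boldsymbol p}(A)=\sum_P p_P PAP$, which has the same form as $\mac{N}_{\boldsymbol p}$ because each Pauli $P$ is self-adjoint. For $A_\tau\in\{X,Y,Z\}$ and any Pauli $P$, either $PA_\tau=A_\tau P$ or $PA_\tau=-A_\tau P$. Since $P^2=I$, this gives $PA_\tau P=\epsilon_{\tau,P}A_\tau$, and hence $\mac{N}^\dagger_{\boldsymbol p}(A_\tau)=\eta_\tau A_\tau$. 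The bound $|\eta_\tau|\le\sum_P p_P=1$ is immediate. As a consistency check, this matches Eq.~\eqref{eq:pauli-diag-bloch}: for $A_\tau=X$ one gets $\eta_\tau=p_I+p_X-p_Y-p_Z=\eta_x$.

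\emph{Step 2: per-slot statistics.} The single-shot mean under noise is $\Tr(\mac{N}_{\boldsymbol p}(\rho_{ij})A_\tau)=\eta_\tau\Tr(\rho_{ij}A_\tau)=\eta_\tau c_K B^{(\tau)}_{ij}$, using Step~1 of Lemma~\ref{lem:noise-stats}. The outcome is still a sign, so the single-shot variance is $1-\eta_\tau^2c_K^2$. Averaging $S$ fresh copies under Assumption~\ref{ass:fresh-copy} divides the variance by $S$. Next, reparameterize the calibrated scale as $\mu_{\tau,g}=\eta_\tau c_K\gamma_{\tau,g}$; this is legitimate because $\eta_\tau\in(0,1]$ by assumption, so the division is well defined. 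Step~5 of Lemma~\ref{lem:noise-stats} then goes through verbatim with $\eta$ replaced by $\eta_\tau$. It yields mean $\mu_{\tau,g}B^{(\tau)}_{ij}$ and variance $\mu_{\tau,g}^2\nu^{\mathrm{asy}}_\tau(\boldsymbol p)/S$ with $\nu^{\mathrm{asy}}_\tau=K/\eta_\tau^2-1$. Independence across slots still holds by Lemma~\ref{lem:indep}, since the channel acts identically and locally on each qubit, so the state remains a product state.

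\emph{Step 3: risk and separation.} Substituting these statistics into Lemma~\ref{lem:bv-decomp} gives the weighted per-row risk. Because the variance coefficient now sits inside the $\tau$-sum, the noise term is $\sum_\tau\pi_\tau(\nu^{\mathrm{asy}}_\tau/S)\mu_\tau^2T_\tau$. The ideal quantities $\Delta_i^\infty$, the sets $\mac{M}_\tau$, and part~(a) of Theorem~\ref{thm:main-formal} involve only the bias term, so they are unchanged. For part~(b), plug the ideal optimizer $\{B^\star_\tau,\mu^\star_\tau\}$ into the finite-shot objective as in Step~8. This gives $\Delta_i(S)\ge\Delta_i^\infty-\sum_\tau\pi_\tau\nu^{\mathrm{asy}}_\tau T_\tau\mu_\tau^{\star2}/S$, which is positive under the stated threshold. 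The failure direction of Step~6 also carries over, because the noise term is nonnegative whenever $\nu^{\mathrm{asy}}_\tau\ge0$.

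\emph{Main obstacle.} The only delicate point is sign and admissibility: $\eta_\tau$ could be negative or zero for an admissible $\boldsymbol p$, which is why the hypothesis $\eta_\tau\in(0,1]$ is imposed. It also guarantees $\nu^{\mathrm{asy}}_\tau\ge K-1>0$, which the Step~6 argument requires. I would state explicitly that this positivity is what makes the calibrated rescaling and the nonnegativity of the regulariser valid. Beyond that, the argument is bookkeeping with context-indexed coefficients.
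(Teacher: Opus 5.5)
Your proposal is correct and follows the paper's proof essentially step for step. Pauli conjugation gives $\mathcal{N}^\dagger_{\boldsymbol p}(A_\tau)=\eta_\tau A_\tau$, the per-shot mean and variance become $\eta_\tau c_K b_\tau$ and $1-\eta_\tau^2c_K^2$, the rescaling $\mu_\tau=\eta_\tau c_K\gamma_\tau$ yields $\nu_\tau^{\mathrm{asy}}=K/\eta_\tau^2-1$, and Steps~7--8 of Theorem~\ref{thm:main-formal} are rerun with the context-indexed coefficient. You also make explicit three points the paper leaves implicit: the derivation of the adjoint via trace cyclicity, the fact that a local channel preserves the product structure Lemma~\ref{lem:indep} needs, and the check that the Step~6 failure direction survives because $\nu_\tau^{\mathrm{asy}}\ge 0$.
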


\begin{proof}
    \emph{Step 1: action of Pauli conjugation on $A_\tau$.}
    For any two Pauli operators $P,P'\in\{I,X,Y,Z\}$, Pauli conjugation acts as
    \begin{equation}
        P' P P'^\dagger=(-1)^{[P,P']} P,\qquad
        [P,P']=\begin{cases}0&\text{if }P,P'\text{ commute},\\ 1&\text{if }P,P'\text{ anticommute},\end{cases}
    \end{equation}
    since $P'P P'^\dagger=P'P P'^{-1}$ for unitary $P'$ and Pauli conjugation flips sign exactly
    on anticommuting pairs. Hence $P' A_\tau P'^\dagger=\epsilon_{\tau,P'} A_\tau$ for each
    $P'\in\{I,X,Y,Z\}$, where $\epsilon_{\tau,P'}=+1$ if $P'$ commutes with $A_\tau$ and $-1$
    otherwise. By linearity,
    \begin{equation}
        \mac{N}^\dagger_{\boldsymbol{p}}(A_\tau) =\sum_{P\in\{I,X,Y,Z\}}p_P P A_\tau P =\Bigl(\sum_P p_P \epsilon_{\tau,P}\Bigr) A_\tau =\eta_\tau A_\tau,
    \end{equation}
    which proves the $\eta_\tau$ identity in the theorem.

    \emph{Step 2: post-channel measurement statistics.}
    By the trace identity
    $\Tr\bigl(\mac{N}_{\boldsymbol p}(\rho) A\bigr)=\Tr\bigl(\rho \mac{N}^\dagger_{\boldsymbol p}(A)\bigr)$
    (which holds because the Pauli-diagonal channel is self-adjoint with respect to the
    Hilbert--Schmidt inner product) and Step~1,
    \begin{equation}
        \Tr\bigl(\mac{N}_{\boldsymbol p}(\rho_b) A_\tau\bigr) =\Tr\bigl(\rho_b \eta_\tau A_\tau\bigr) =\eta_\tau \Tr(\rho_b A_\tau) =\eta_\tau c_K b_\tau,
    \end{equation}
    using Lemma~\ref{lem:noise-stats} (noiseless mean) for the last equality. Hence the
    per-shot mean of measuring $A_\tau$ on $\mac{N}_{\boldsymbol p}(\rho_b)$ is
    $\eta_\tau c_K b_\tau$, and the per-shot variance is $1-(\eta_\tau c_K)^2$ since the
    outcome remains in $\{\pm 1\}$.

    \emph{Step 3: per-context noise coefficient.}
    Repeating Step~5 of the proof of Lemma~\ref{lem:noise-stats} verbatim with $\eta$ replaced
    by $\eta_\tau$ (which depends on the context through Step~1), the calibrated scale
    $\mu_\tau\coloneqq\eta_\tau c_K \gamma_\tau$ yields
    \begin{equation}
        \Var\bigl(\hat{W}^{\mathrm{Q}}_{\tau,ij}\bigr) =\frac{\mu_\tau^2}{S}\Bigl(\frac{1}{\eta_\tau^2 c_K^2}-1\Bigr) =\frac{\mu_\tau^2}{S} \nu_\tau^{\mathrm{asy}}(\boldsymbol{p}),\qquad \nu_\tau^{\mathrm{asy}}(\boldsymbol{p})=\frac{K}{\eta_\tau^2}-1.
    \end{equation}

    \emph{Step 4: separation under per-context noise.}
    The bias-variance decomposition (Lemma~\ref{lem:bv-decomp}) and per-row minimisation
    (Steps~1--7 of the proof of Theorem~\ref{thm:main-formal}) are $\tau$-separable: each
    $\nu_K(\eta)$ entering the $\tau$-th summand is replaced by
    $\nu_\tau^{\mathrm{asy}}(\boldsymbol{p})$. Step~8 of the proof of Theorem~\ref{thm:main-formal}
    then gives the modified threshold
    $S>\bigl(\sum_\tau\pi_\tau \nu_\tau^{\mathrm{asy}}(\boldsymbol{p}) T_\tau (\mu_\tau^\star)^2\bigr)
    /\Delta_i^\infty$ for strict per-row advantage, as claimed.
\end{proof}

\subsection{Formal finite-sample certificate}\label{subapp:finite-sample}

\begin{theorem}[Finite-sample certificate for the ideal row-wise gap]\label{thm:finite-sample-formal}
    Assume per-token bounded activations $\lVert x_\tau^{(s)}\rVert_2\le B_{\mathrm{x}}$ almost surely and the uniform denominator lower bound
    $\min_{\tau\in[K], b\in\{\pm1\}^M} b\Sigma_\tau b^\top\ge\lambda_0>0$. Set
    \begin{equation}
        N_{\min}\coloneqq\min_\tau N_\tau,\qquad \varepsilon_\Sigma\coloneqq M B_{\mathrm{x}}^2\sqrt{\frac{2\log(2KM^2/\delta)}{N_{\min}}},
        \label{eq:eps-sigma}
    \end{equation}
    and assume $M\varepsilon_\Sigma\le\lambda_0/2$. Let $\Delta^\infty$ and $\hat\Delta^\infty$ denote the population and empirical ideal gaps. With probability at least $1-\delta$,
    \begin{equation}
        \bigl|\hat\Delta^\infty-\Delta^\infty\bigr| \le 2K\Bigl(1+\frac{2M B_{\mathrm{x}}^2}{\lambda_0}\Bigr)^2\lVert W\rVert_F^2 \varepsilon_\Sigma.
        \label{eq:finite-sample-bound}
    \end{equation}
    Thus an empirical ideal gap exceeding the right-hand side certifies a positive population ideal gap. A finite-shot certificate subtracts the finite-shot inflation bound of Theorem~\ref{thm:main-formal}.
\end{theorem}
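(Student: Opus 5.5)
The plan has three stages: a concentration step for the empirical covariances, a perturbation step for each $J_\tau^\star(b)$, and a stability step showing that the gap inherits these bounds through its min-of-sums and sum-of-mins form. The ideal gap for the whole matrix is $\Delta^\infty=\sum_i\Delta_i^\infty$ by Corollary~\ref{cor:additive-formal}. Each $\Delta_i^\infty$ is given by Eq.~\eqref{eq:ideal-gap} as a difference of two functionals of the $J_\tau^\star(\cdot)$, which are themselves rational in $\Sigma_\tau$.

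\textbf{Concentration.} Write $\hat\Sigma_\tau=N_\tau^{-1}\sum_s x_\tau^{(s)}x_\tau^{(s)\top}$. Each entry $(x x^\top)_{jk}$ lies in $[-B_{\mathrm{x}}^2,B_{\mathrm{x}}^2]$, so Hoeffding's inequality combined with a union bound over $K$ contexts and $M^2$ entries gives, with probability at least $1-\delta$,
\[
\max_{\tau,j,k}\bigl|(\hat\Sigma_\tau-\Sigma_\tau)_{jk}\bigr|\le B_{\mathrm{x}}^2\sqrt{2\log(2KM^2/\delta)/N_{\min}}.
\]
Passing to the operator norm costs at most a factor $M$, so $\lVert\hat\Sigma_\tau-\Sigma_\tau\rVert_{\mathrm{op}}\le\varepsilon_\Sigma$ as defined in Eq.~\eqref{eq:eps-sigma}. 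All remaining steps are deterministic on this event.

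\textbf{Perturbation of $J_\tau^\star$.} Fix a row $w$ and a sign vector $b$. The denominator perturbation is $|b(\hat\Sigma-\Sigma)b^\top|\le\lVert b\rVert^2\varepsilon_\Sigma=M\varepsilon_\Sigma\le\lambda_0/2$, so the empirical denominator is at least $\lambda_0/2$. Write $J^\star=w\Sigma w^\top-\langle w,P_b w\rangle_\Sigma$, where $P_b$ is the $\Sigma$-orthogonal projection onto $b$, with coefficient $\alpha^\star=b\Sigma w^\top/b\Sigma b^\top$. The projection fitting residual gives $J^\star=(w-\alpha^\star b)\Sigma(w-\alpha^\star b)^\top$, and I will bound the difference of these expressions directly.

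Since $\lVert\Sigma\rVert_{\mathrm{op}}\le B_{\mathrm{x}}^2$, we get $|\alpha^\star|\le\lVert b\rVert\,\lVert\Sigma\rVert\,\lVert w\rVert/\lambda_0\le\sqrt M B_{\mathrm{x}}^2\lVert w\rVert/\lambda_0$. The same bound holds for the empirical coefficient, up to a factor 2 because the denominator is only bounded below by $\lambda_0/2$. This gives $\lVert w-\alpha b\rVert\le(1+2MB_{\mathrm{x}}^2/\lambda_0)\lVert w\rVert$. Each $J^\star$ is a minimum over $\alpha$ of $(w-\alpha b)\Sigma(w-\alpha b)^\top$. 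Evaluating the empirical objective at the population minimizer, and vice versa, yields
\[
|\hat J^\star_\tau(b)-J^\star_\tau(b)|\le\varepsilon_\Sigma\max\bigl(\lVert w-\alpha^\star b\rVert^2,\lVert w-\hat\alpha^\star b\rVert^2\bigr)\le\Bigl(1+\frac{2MB_{\mathrm{x}}^2}{\lambda_0}\Bigr)^2\lVert w\rVert^2\varepsilon_\Sigma.
\]
This envelope argument for minima of quadratic forms is the cleanest route, and it explains the squared factor in the bound.

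\textbf{Stability of the gap.} The map $\{J_\tau(\cdot)\}_\tau\mapsto\min_c\sum_\tau\pi_\tau J_\tau(c)$ is $1$-Lipschitz in $\max_{\tau,b}|J_\tau(b)-\hat J_\tau(b)|$, because $\sum_\tau\pi_\tau=1$. The map to $\sum_\tau\pi_\tau\min_b J_\tau(b)$ is likewise $1$-Lipschitz. Hence each row gap moves by at most twice the uniform $J$ deviation. To match the stated constant $2K$, I would bound the second term per context without using $\pi_\tau\le1$ weights, which gives $K$ per term, i.e.\ the cruder factor $2K$; this is harmless since the stated bound is looser. Summing over rows replaces $\lVert w\rVert^2$ by $\lVert W\rVert_F^2$ and gives Eq.~\eqref{eq:finite-sample-bound}.

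The certification consequence follows from the triangle inequality: $\Delta^\infty\ge\hat\Delta^\infty-\mathrm{rad}>0$. The finite-shot version follows by subtracting the inflation bound from Eq.~\eqref{eq:bias-inflation}.

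\textbf{Main obstacle.} The perturbation step is the delicate one. It requires uniform control of the fitted scale under the perturbed denominator, and that is exactly where the hypothesis $M\varepsilon_\Sigma\le\lambda_0/2$ is used. The other steps are routine.
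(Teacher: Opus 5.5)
Your proposal is correct, and it reaches the bound by a somewhat different route than the paper.

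\textbf{Perturbation step.} The paper differentiates $J_\tau^\star(b;\Sigma)$ in $\Sigma$ and bounds the Frobenius norm of the gradient by $(1+2MB_{\mathrm{x}}^2/\lambda_0)^2\lVert w\rVert_2^2$. This needs $b\Sigma b^\top\ge\lambda_0/2$ and $\lVert\Sigma\rVert_{\mathrm{op}}\le B_{\mathrm{x}}^2$ along the whole segment from $\Sigma_\tau$ to $\hat\Sigma_\tau$. It then applies the mean-value theorem against $\lVert\hat\Sigma_\tau-\Sigma_\tau\rVert_F$.

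You instead use the variational form $J_\tau^\star(b)=\min_\alpha J_\tau(b,\alpha)$ and evaluate each objective at the other problem's minimizer. This avoids calculus. It needs only bounds on the two fitted scales at the endpoints and an operator-norm bound on $\hat\Sigma_\tau-\Sigma_\tau$. The empirical-scale bound implicitly uses $\lVert\hat\Sigma_\tau\rVert_{\mathrm{op}}\le B_{\mathrm{x}}^2$, which holds deterministically because $\hat\Sigma_\tau$ is an average of $xx^\top$ with $\lVert x\rVert_2\le B_{\mathrm{x}}$. The same constant comes out. Because your route measures the perturbation in operator norm rather than Frobenius norm, it would also benefit directly from a sharper matrix concentration inequality.

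\textbf{Aggregation step.} The paper bounds $\sum_\tau\pi_\tau L_0\lVert\hat\Sigma_\tau-\Sigma_\tau\rVert_F$ by $L_0\sum_\tau\lVert\hat\Sigma_\tau-\Sigma_\tau\rVert_F\le K L_0\varepsilon_\Sigma$, which is where its factor $K$ comes from. You observe instead that both the min-of-sums and the sum-of-mins maps are $1$-Lipschitz in $\max_{\tau,b}\lvert\hat J_\tau^\star(b)-J_\tau^\star(b)\rvert$, since $\sum_\tau\pi_\tau=1$. This removes the factor $K$ and gives the sharper radius $2(1+2MB_{\mathrm{x}}^2/\lambda_0)^2\lVert W\rVert_F^2\varepsilon_\Sigma$. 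Your remark about how to recover $2K$ is therefore unnecessary: the stated bound follows immediately from yours because $K\ge 1$.
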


\begin{proof}[Proof of Theorem~\ref{thm:finite-sample-formal}]
    Let
    \begin{equation}
        \hat\Sigma_\tau\coloneqq\frac{1}{N_\tau}\sum_{s=1}^{N_\tau}x_\tau^{(s)}\bigl(x_\tau^{(s)}\bigr)^\top
    \end{equation}
    denote the empirical covariance computed from $N_\tau$ i.i.d.\ samples with
    $\lVert x_\tau^{(s)}\rVert_2\le B_{\mathrm{x}}$ a.s.

    \emph{Step 1: entrywise concentration of $\hat\Sigma_\tau$.}
    Each scalar entry $(\hat\Sigma_\tau-\Sigma_\tau)_{jk}$ is a zero-mean average of $N_\tau$
    i.i.d.\ random variables bounded in $[-B_{\mathrm{x}}^2,B_{\mathrm{x}}^2]$ (since
    $|x_j x_k|\le\lVert x\rVert_2^2\le B_{\mathrm{x}}^2$). Hoeffding's
    inequality~\citep{hoeffding1963probability} gives, for fixed $\tau,j,k$ and any $t>0$,
    \begin{equation}
        \Pr\bigl(\bigl|(\hat\Sigma_\tau-\Sigma_\tau)_{jk}\bigr|\ge t\bigr) \le 2 \exp \Bigl(-\frac{N_\tau t^2}{2 B_{\mathrm{x}}^4}\Bigr).
    \end{equation}
    Setting $t=B_{\mathrm{x}}^2\sqrt{2\log(2KM^2/\delta)/N_\tau}$ and applying a union bound over the $M^2$
    entries and $K$ contexts gives, with probability at least $1-\delta$,
    \begin{equation}
        \max_{\tau,j,k}\bigl|(\hat\Sigma_\tau-\Sigma_\tau)_{jk}\bigr| \le B_{\mathrm{x}}^2 \sqrt{\frac{2\log(2KM^2/\delta)}{N_{\min}}}.
    \end{equation}
    Since $\lVert A\rVert_F\le M \max_{j,k}|A_{jk}|$ for $A\in\mab{R}^{M\times M}$, on the same
    event,
    \begin{equation}
        \lVert\hat\Sigma_\tau-\Sigma_\tau\rVert_F\le\varepsilon_\Sigma \coloneqq M B_{\mathrm{x}}^2 \sqrt{\frac{2\log(2KM^2/\delta)}{N_{\min}}}\quad\text{for every }\tau.
    \end{equation}
    We henceforth condition on this concentration event. The hypothesis $M\varepsilon_\Sigma\le\lambda_0/2$ of Theorem~\ref{thm:finite-sample-formal}, used in Step~2 below, is the minimal-sample-size condition
    \begin{equation}
        M \varepsilon_\Sigma\le\frac{\lambda_0}{2}.
        \label{eq:sample-size-cond}
    \end{equation}

    \emph{Step 2: empirical denominator control.}
    For every $b\in\{\pm 1\}^M$ and every $\tau$,
    \begin{equation}
        \bigl|b (\hat\Sigma_\tau-\Sigma_\tau) b^\top\bigr| \le\lVert bb^\top\rVert_F \lVert\hat\Sigma_\tau-\Sigma_\tau\rVert_F =M \varepsilon_\Sigma,
    \end{equation}
    since $\lVert bb^\top\rVert_F=\sqrt{\Tr(bb^\top bb^\top)}=\sqrt{M^2}=M$.
    Combined with the population lower bound $b\Sigma_\tau b^\top\ge\lambda_0$,
    \begin{equation}
        b \hat\Sigma_\tau b^\top \ge b \Sigma_\tau b^\top-M \varepsilon_\Sigma \ge\lambda_0-M \varepsilon_\Sigma \stackrel{Eq.~\eqref{eq:sample-size-cond}}{\ge}\frac{\lambda_0}{2}.
        \label{eq:emp-denominator}
    \end{equation}
    Both the population and empirical optimisers $\alpha_\tau^\star(b)$ from~Eq.~\eqref{eq:ls} are
    therefore well defined, and $J_\tau^\star(b;\Sigma)$ is jointly continuous in $\Sigma$ on
    the closed set $\{\Sigma:b \Sigma b^\top\ge\lambda_0/2\}$ for every $b$.

    \emph{Step 3: Lipschitz continuity of the per-row functional.}
    Fix a row $w=W_{i,:}$ and $b\in\{\pm 1\}^M$. The ideal per-context quadratic
    $J_\tau^\star(b;\Sigma)=w \Sigma w^\top-(b \Sigma w^\top)^2/(b \Sigma b^\top)$
    is continuously differentiable in $\Sigma$ on the open half-space
    $\{b \Sigma b^\top>0\}$. Under the bounded-activation assumption,
    $\lVert\Sigma\rVert_{\mathrm{op}}\le B_{\mathrm{x}}^2$ on every $\Sigma$ in the convex hull of
    $\Sigma_\tau,\hat\Sigma_\tau$ on the concentration event, and the lower bound
    $b \Sigma b^\top\ge\lambda_0/2$ from Step~2 holds throughout this convex hull.

    A direct gradient computation (treating $w,b$ as $1\times M$ row vectors so that
    $w^\top w, b^\top w, b^\top b\in\mab{R}^{M\times M}$ and
    $b \Sigma w^\top, b \Sigma b^\top\in\mab{R}$) gives
    \begin{equation}
        \frac{\partial J_\tau^\star(b;\Sigma)}{\partial\Sigma}= w^\top w-\frac{2 (b \Sigma w^\top) (b^\top w)}{b \Sigma b^\top} +\frac{(b \Sigma w^\top)^2 (b^\top b)}{(b \Sigma b^\top)^2},
    \end{equation}
    whose Frobenius norm is bounded above by
    \begin{equation}
        \lVert w^\top w\rVert_F+\frac{2 |b \Sigma w^\top| \lVert b^\top w\rVert_F}{b \Sigma b^\top} +\frac{(b \Sigma w^\top)^2 \lVert b^\top b\rVert_F}{(b \Sigma b^\top)^2}.
    \end{equation}
    Using $|b \Sigma w^\top|\le\sqrt{M} \lVert\Sigma\rVert_{\mathrm{op}} \lVert w\rVert_2
    \le\sqrt{M} B_{\mathrm{x}}^2 \lVert w\rVert_2$,
    $\lVert b^\top w\rVert_F=\lVert b\rVert_2 \lVert w\rVert_2=\sqrt{M} \lVert w\rVert_2$,
    $\lVert b^\top b\rVert_F=M$, $\lVert w^\top w\rVert_F=\lVert w\rVert_2^2$, and
    $b \Sigma b^\top\ge\lambda_0/2$, this is at most
    \begin{equation}
        \lVert w\rVert_2^2+\frac{4 M B_{\mathrm{x}}^2 \lVert w\rVert_2^2}{\lambda_0} +\frac{4 M^2 B_{\mathrm{x}}^4 \lVert w\rVert_2^2}{\lambda_0^2} \le\Bigl(1+\frac{2 M B_{\mathrm{x}}^2}{\lambda_0}\Bigr)^2\lVert w\rVert_2^2 \eqqcolon L_0(w).
    \end{equation}
    The mean-value theorem along the segment
    $\Sigma_t=\Sigma_\tau+t(\hat\Sigma_\tau-\Sigma_\tau)$, $t\in[0,1]$, then yields
    \begin{equation}
        \bigl|J_\tau^\star(b;\hat\Sigma_\tau)-J_\tau^\star(b;\Sigma_\tau)\bigr| \le L_0(w) \lVert\hat\Sigma_\tau-\Sigma_\tau\rVert_F.
        \label{eq:per-row-lip}
    \end{equation}
    Min and max of equi-Lipschitz finite families remain Lipschitz with the same constant, so
    both $\mathcal{E}^{\mathrm{row}\pm}_{\mathrm{C},i}$ and
    $\mathcal{E}^{\mathrm{row},\infty}_{\mathrm{Q},i}$, viewed as functions of
    $(\Sigma_1,\dots,\Sigma_K)$, are $L_0(w_i)$-Lipschitz in
    $\sum_\tau\lVert\hat\Sigma_\tau-\Sigma_\tau\rVert_F$.

    \emph{Step 4: row sum and certification.}
    The per-row gap $\Delta_i^\infty
    =\mathcal{E}^{\mathrm{row\pm}}_{\mathrm{C},i}-\mathcal{E}^{\mathrm{row},\infty}_{\mathrm{Q},i}$
    is a difference of two $L_0(w_i)$-Lipschitz quantities, so
    \begin{equation}
        \bigl|\hat\Delta_i^\infty-\Delta_i^\infty\bigr| \le 2 L_0(w_i)\sum_{\tau=1}^K\lVert\hat\Sigma_\tau-\Sigma_\tau\rVert_F.
    \end{equation}
    Summing over $i$ and using $\sum_i\lVert w_i\rVert_2^2=\lVert W\rVert_F^2$,
    \begin{equation}
        \bigl|\hat\Delta^\infty-\Delta^\infty\bigr| \le 2 \Bigl(1+\frac{2 M B_{\mathrm{x}}^2}{\lambda_0}\Bigr)^2 \lVert W\rVert_F^2 \sum_{\tau=1}^K\lVert\hat\Sigma_\tau-\Sigma_\tau\rVert_F \le 2 K \Bigl(1+\frac{2 M B_{\mathrm{x}}^2}{\lambda_0}\Bigr)^2 \lVert W\rVert_F^2 \varepsilon_\Sigma,
    \end{equation}
    on the concentration event, which has probability at least $1-\delta$. This
    establishes~Eq.~\eqref{eq:finite-sample-bound}.

    The certification statement then follows from
    $\Delta^\infty\ge\hat\Delta^\infty-|\hat\Delta^\infty-\Delta^\infty|>0$ whenever
    $\hat\Delta^\infty>2 K (1+2M B_{\mathrm{x}}^2/\lambda_0)^2 \lVert W\rVert_F^2 \varepsilon_\Sigma$.
\end{proof}

\subsection{Proof of Theorem~\ref{thm:qact}}\label{app:thm-qact}

\begin{theorem}[Quantum-activation fixed-POVM floor]\label{thm:qact}
    Let $K\ge2$ and let $A_1,\dots,A_K$ be traceless Hilbert--Schmidt orthogonal self-adjoint involutions on $\mab{C}^d$:
    \begin{equation}
        A_\tau=A_\tau^\dagger,\qquad A_\tau^2=I,\qquad \Tr(A_\tau)=0,\qquad \Tr(A_\tau A_{\tau'})=d \mathbf{1}\{\tau=\tau'\}.
        \label{eq:qact-axioms}
    \end{equation}
    Assume additionally that
    \begin{equation}
        \rho_b=\frac{1}{d}\Bigl(I+\frac{1}{\sqrt K}\sum_{\tau=1}^K b_\tau A_\tau\Bigr)
        \label{eq:qact-state}
    \end{equation}
    is positive semidefinite for every $b\in\{\pm1\}^K$; equivalently, $\bigl\|\sum_\tau b_\tau A_\tau\bigr\|_{\mathrm{op}}\le\sqrt K$ for every $b$. This validity condition is automatic when the $A_\tau$ pairwise anticommute. Let the context $\tau$ and label $b\in\{\pm1\}^K$ be uniform. Any single-copy fixed-POVM receiver that first applies one POVM $\{E_y\}_{y\in\mac{Y}}$ independent of $\tau$ and then uses a context-dependent deterministic decoder $f_\tau:\mac{Y}\to\mab{R}$ satisfies
    \begin{equation}
        \frac{1}{K}\sum_{\tau=1}^K\mab{E}_b\mab{E}_{y\mid b}\left[(f_\tau(y)-b_\tau)^2\right]\ge 1-\frac{1}{K}.
        \label{eq:qact-floor}
    \end{equation}
    Consequently, a scalar weight $w$ has context-averaged weight-output MSE at least $(1-1/K)w^2$, and independent scalar slots with squared weights summing to $\lVert W\rVert_F^2$ have context-averaged total MSE at least $(1-1/K)\lVert W\rVert_F^2$. A context-matched quantum receiver using $S$ independent fresh copies of $\rho_b$ and measuring $A_\tau$ attains MSE $(K-1)w^2/S$ in the scalar case and $(K-1)\lVert W\rVert_F^2/S$ in the matrix case.
\end{theorem}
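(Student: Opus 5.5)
The plan is a Bayesian least-squares argument. First I replace each decoder by the posterior mean. Then I bound the total ``explained variance'' across contexts by one, using the uniform prior and the Born rule.

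\emph{Step 1 (optimal decoder).} Fix the POVM $\{E_y\}$. For each $\tau$, the inner quantity $\mab{E}_b\mab{E}_{y\mid b}[(f_\tau(y)-b_\tau)^2]$ is a squared-loss prediction of $b_\tau$ from $y$. It is therefore minimized by $f_\tau(y)=\mab{E}[b_\tau\mid y]$, with minimum value $1-\mab{E}_y\bigl[\mab{E}[b_\tau\mid y]^2\bigr]$, since $b_\tau^2=1$. It thus suffices to show
\begin{equation}
\sum_{\tau=1}^K\mab{E}_y\bigl[\mab{E}[b_\tau\mid y]^2\bigr]\le 1 .
\end{equation}

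\emph{Step 2 (posterior via Born rule).} Write $c=1/\sqrt K$. Then $p(y\mid b)=\Tr(E_y\rho_b)=\bigl(\Tr E_y+c\sum_{\tau'}b_{\tau'}\Tr(E_yA_{\tau'})\bigr)/d$. Averaging over uniform $b$ gives $p(y)=\Tr(E_y)/d$. Using $\mab{E}_b[b_\tau b_{\tau'}]=\mathbf 1\{\tau=\tau'\}$ gives $\mab{E}_b[b_\tau p(y\mid b)]=c\Tr(E_yA_\tau)/d$, hence
\begin{equation}
\mab{E}[b_\tau\mid y]=\frac{c\,\Tr(E_yA_\tau)}{\Tr(E_y)} .
\end{equation}
Outcomes with $\Tr E_y=0$ have probability zero and can be discarded. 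Consequently
\begin{equation}
\sum_\tau\mab{E}_y\bigl[\mab{E}[b_\tau\mid y]^2\bigr]=\frac{1}{Kd}\sum_y\frac{\sum_\tau\Tr(E_yA_\tau)^2}{\Tr E_y}.
\end{equation}

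\emph{Step 3 (the key bound).} Since $E_y\succeq0$ and $\lVert A_\tau\rVert_{\mathrm{op}}=1$ (an involution), we have $|\Tr(E_yA_\tau)|\le\Tr(E_y)$. Hence $\sum_\tau\Tr(E_yA_\tau)^2\le K(\Tr E_y)^2$. Summing over $y$ and using $\sum_y\Tr E_y=\Tr I=d$ bounds the right-hand side above by $\frac{1}{Kd}\cdot K d=1$.

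This bound is the crux. It is less obvious than it looks only because it needs the $1/K$ from $c^2$ to cancel the $K$ from the crude per-context bound. I expect this bookkeeping to be the main place to be careful. I would double-check that the HS-orthogonality in Eq.~\eqref{eq:qact-axioms} is used only to make $p(y)$ and the posterior mean clean; with the uniform prior, the cross terms cancel anyway. The validity assumption on $\rho_b$ is needed only so that $p(y\mid b)$ is a genuine probability distribution.

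Averaging over $\tau$ then gives $\frac1K\sum_\tau(1-\cdot)\ge 1-\frac1K$, which is Eq.~\eqref{eq:qact-floor}.

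\emph{Step 4 (scaling and achievability).} For a scalar weight $w$ with target $wb_\tau$, any decoder has the form $w\,f_\tau$ up to relabeling. The MSE therefore scales by $w^2$, giving the floor $(1-1/K)w^2$. For independent slots with product states, the context-averaged total MSE is a sum of per-slot terms by Lemma~\ref{lem:indep}, so the floor becomes $(1-1/K)\lVert W\rVert_F^2$.

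For the quantum receiver, I apply Lemma~\ref{lem:noise-stats} with $\eta=1$ and decoder $\gamma=\sqrt K\,w$, so that $\mu=w$. It gives an unbiased estimate with variance $w^2(K-1)/S$, which is the MSE. Summing over slots yields $(K-1)\lVert W\rVert_F^2/S$.
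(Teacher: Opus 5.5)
Your proof is correct and takes a genuinely different route from the paper's. The paper never optimizes the decoder. It forms $G_\tau=\sum_y f_\tau(y)E_y$ and $H_\tau=\sum_y f_\tau(y)^2E_y$ and writes the averaged risk as $\bar{\mac{T}}-\bar{\mac{C}}+1$. It then controls the cross term in two ways: by expanding $G_\tau$ in a Hilbert--Schmidt orthonormal basis containing the $A_\tau/\sqrt d$, and by the operator inequality $G_\tau^2\preceq H_\tau$, obtained from a Naimark dilation. It finishes with AM--GM in $\sqrt{\bar{\mac{T}}}$. You instead pass to the Bayes-optimal decoder and compute the posterior mean $\mab{E}[b_\tau\mid y]=\Tr(E_yA_\tau)/(\sqrt K\,\Tr E_y)$ explicitly from the Born rule. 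After that, the only quantum input you need is the trace inequality $|\Tr(E_yA_\tau)|\le\Tr E_y$. Your route is more elementary: it needs no dilation, no operator Jensen inequality, and no basis extension. It also proves more. Because $|\mab{E}[b_\tau\mid y]|\le 1/\sqrt K$ pointwise in $y$, each context separately has risk at least $1-1/K$. So the sum over $\tau$, and the ``cancellation'' of $K$ against $c^2$ that you flag as delicate, are not actually needed. The paper's route, in exchange, certifies the bound decoder by decoder at the operator level, without invoking the MMSE reduction.

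One correction to your side remark. HS-orthogonality plays no role in your lower bound. That bound uses only that each $\rho_b$ is a valid state, that $\lVert A_\tau\rVert_{\mathrm{op}}=1$, and that the labels are uniform. HS-orthogonality is instead exactly what the achievability step needs, since $\Tr(\rho_bA_\tau)=b_\tau/\sqrt K$ relies on $\Tr A_\tau=0$ and $\Tr(A_{\tau'}A_\tau)=d\,\mathbf{1}\{\tau=\tau'\}$. Lemma~\ref{lem:noise-stats} is stated for anticommuting observables. In the general HS-orthogonal case you should therefore invoke this direct computation, as in Step~2 of Theorem~\ref{thm:gram]}, rather than the lemma; the paper's own Step~7 has the same citation issue. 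Finally, the slot-wise decomposition of the total MSE is plain linearity of expectation and does not need Lemma~\ref{lem:indep}.
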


\begin{proof}[Proof of Theorem~\ref{thm:qact}]
    \emph{Setup.} The observables $\{A_\tau\}_{\tau=1}^K$ live on a finite-dimensional Hilbert space. In the pairwise-anticommuting case, the Jordan--Wigner family of Appendix~\ref{subapp:tensor-jordan} realizes any $K\le 2n+1$ on $n$ qubits and automatically satisfies both HS orthogonality and the validity condition $\rho_b\succeq0$. In the more general HS-orthogonal case, positivity of every $\rho_b$ is an explicit hypothesis of the theorem. The label $b_\tau\in\{\pm1\}$ plays the role of the classical bit of context $\tau$ carried by the activation; the weight $w\in\mab{R}$ is classical.

    \emph{Step 1: setting up the average risk.}
    A single-copy classical receiver applies the fixed POVM $\{E_y\}_{y\in\mac{Y}}$ on
    $\mab{C}^d$ and outputs $\hat b_\tau=f_\tau(y)\in\mab{R}$ from a context-dependent
    deterministic decoder $f_\tau$. The conditional MSE at fixed $(\tau,b)$ is
    \begin{equation}
        \mab{E}_{y\mid b}\bigl[(f_\tau(y)-b_\tau)^2\bigr] =\mab{E}_{y\mid b}\bigl[f_\tau(y)^2\bigr]-2 \mab{E}_{y\mid b}\bigl[f_\tau(y) b_\tau\bigr]+1,
    \end{equation}
    using $b_\tau^2=1$. Averaging over uniform $\tau\sim\Uni[K]$ and uniform
    $b\sim\Uni(\{\pm 1\}^K)$,
    \begin{equation}
        \bar{\mac{M}} \coloneqq\frac{1}{K}\sum_{\tau=1}^K\mab{E}_b\mab{E}_{y\mid b}\bigl[(f_\tau(y)-b_\tau)^2\bigr] =\bar{\mac{T}}-\bar{\mac{C}}+1,
        \label{eq:qact-Mbar-decomp}
    \end{equation}
    where the second-moment and cross-correlation averages are
    \begin{equation}
        \bar{\mac{T}}\coloneqq\frac{1}{K}\sum_{\tau=1}^K\mab{E}_b\mab{E}_{y\mid b}\bigl[f_\tau(y)^2\bigr], \qquad \bar{\mac{C}}\coloneqq\frac{2}{K}\sum_{\tau=1}^K\mab{E}_b\mab{E}_{y\mid b}\bigl[f_\tau(y) b_\tau\bigr].
    \end{equation}

    \emph{Step 2: state-averaged operator identities.}
    Define the Hermitian observables
    \begin{equation}
        G_\tau\coloneqq\sum_{y\in\mac{Y}} f_\tau(y) E_y,\qquad H_\tau\coloneqq\sum_{y\in\mac{Y}} f_\tau(y)^2 E_y.
    \end{equation}
    By the Born rule,
    $\mab{E}_{y\mid b}[\phi(y)]=\Tr\bigl(\rho_b\sum_y\phi(y)E_y\bigr)$ for every real $\phi$.
    Using $\rho_b=(I+c_K\sum_{\tau'}b_{\tau'}A_{\tau'})/d$ with $c_K=1/\sqrt{K}$ and the
    elementary identities $\mab{E}_b[b_{\tau'}]=0$ and $\mab{E}_b[b_\tau b_{\tau'}]=\delta_{\tau\tau'}$,
    \begin{align}
        \mab{E}_b\bigl[\rho_b\bigr]&=\frac{I}{d},
        \label{eq:qact-rho-mean}\\
        \mab{E}_b\mab{E}_{y\mid b}\bigl[f_\tau(y)^2\bigr]
        &=\Tr\bigl(\mab{E}_b[\rho_b] H_\tau\bigr)=\frac{1}{d}\Tr(H_\tau),
        \label{eq:qact-second-moment}\\
        \mab{E}_b\mab{E}_{y\mid b}\bigl[f_\tau(y) b_\tau\bigr]
        &=\mab{E}_b\bigl[b_\tau \Tr(\rho_b G_\tau)\bigr]
        =\frac{c_K}{d}\sum_{\tau'}\mab{E}_b[b_\tau b_{\tau'}] \Tr(A_{\tau'} G_\tau)
        =\frac{c_K}{d} \Tr(A_\tau G_\tau),
        \label{eq:qact-cross}
    \end{align}
    where in~Eq.~\eqref{eq:qact-cross} we used $\Tr(A_\tau)=0$ to cancel the contribution of the
    identity term in $\rho_b$. Substituting~Eq.~\eqref{eq:qact-second-moment}
    and~Eq.~\eqref{eq:qact-cross} into~Eq.~\eqref{eq:qact-Mbar-decomp},
    \begin{equation}
        \bar{\mac{T}}=\frac{1}{Kd}\sum_{\tau=1}^K\Tr(H_\tau),\qquad \bar{\mac{C}}=\frac{2 c_K}{Kd}\sum_{\tau=1}^K\Tr(A_\tau G_\tau).
        \label{eq:qact-T-C}
    \end{equation}

    \emph{Step 3: Pauli--Parseval expansion.}
    Choose a Hilbert--Schmidt orthonormal Hermitian basis $\{B_s\}_{s=0}^{d^2-1}$ of
    $\mab{C}^{d\times d}$ with $B_0=I/\sqrt{d}$ and $\Tr(B_s B_{s'})=\delta_{ss'}$, including
    the orthonormal traceless self-adjoint operators $\{A_\tau/\sqrt{d}\}_{\tau=1}^K$ as basis
    elements (possible by~Eq.~\eqref{eq:qact-axioms}). Re-indexing so that $B_{[\tau]}=A_\tau/\sqrt d$,
    expand $G_\tau=\sum_{s=0}^{d^2-1} g_s^{(\tau)} B_s$ with $g_s^{(\tau)}\in\mab{R}$. Then
    \begin{equation}
        \Tr(A_\tau G_\tau)=\sqrt{d} g_{[\tau]}^{(\tau)},\qquad \Tr(G_\tau^2)=\sum_{s=0}^{d^2-1}\bigl(g_s^{(\tau)}\bigr)^2,
        \label{eq:qact-parseval}
    \end{equation}
    by orthonormality of $\{B_s\}$. Hence
    \begin{equation}
        \sum_{\tau=1}^K\bigl(g_{[\tau]}^{(\tau)}\bigr)^2 \le\sum_{\tau=1}^K\sum_{s=0}^{d^2-1}\bigl(g_s^{(\tau)}\bigr)^2 =\sum_{\tau=1}^K\Tr(G_\tau^2).
        \label{eq:qact-parseval-bound}
    \end{equation}

    \emph{Step 4: operator Cauchy--Schwarz $G_\tau^2\preceq H_\tau$.}
    By Naimark's theorem, lift the POVM $\{E_y\}$ to a projective measurement
    $\{\Pi_y\}$ on a larger Hilbert space, satisfying $E_y=V^\dagger\Pi_y V$ for an isometry
    $V$. Then $\bigl(\sum_y f_\tau(y)\Pi_y\bigr)^2=\sum_y f_\tau(y)^2\Pi_y$ since the $\Pi_y$
    are mutually orthogonal projectors. Compressing back to the original space and using the
    operator Jensen inequality (or equivalently
    $V^\dagger A V V^\dagger A V\preceq V^\dagger A^2 V$ for self-adjoint $A$ when
    $V V^\dagger\preceq I$), one obtains
    \begin{equation}
        G_\tau^2=\Bigl(\sum_y f_\tau(y) E_y\Bigr)^2\preceq\sum_y f_\tau(y)^2 E_y=H_\tau.
    \end{equation}
    Taking trace,
    \begin{equation}
        \Tr(G_\tau^2)\le\Tr(H_\tau).
        \label{eq:qact-CS}
    \end{equation}

    \emph{Step 5: bounding $\bar{\mac{C}}$ in terms of $\bar{\mac{T}}$.}
    Combining~Eq.~\eqref{eq:qact-T-C}, the basis identity~Eq.~\eqref{eq:qact-parseval}, scalar
    Cauchy--Schwarz on $\{g_{[\tau]}^{(\tau)}\}_{\tau=1}^K$,
    Eq.~\eqref{eq:qact-parseval-bound}, and~Eq.~\eqref{eq:qact-CS},
    \begin{align}
        |\bar{\mac{C}}|
        &=\Big|\frac{2 c_K}{Kd}\sum_{\tau=1}^K\Tr(A_\tau G_\tau)\Big|
        =\frac{2 c_K\sqrt{d}}{Kd}\Big|\sum_{\tau=1}^K g_{[\tau]}^{(\tau)}\Big|\nonumber\\
        &\le\frac{2 c_K\sqrt{d}}{Kd}\sqrt{K} \Bigl(\sum_{\tau=1}^K\bigl(g_{[\tau]}^{(\tau)}\bigr)^2\Bigr)^{1/2}
        \le\frac{2 c_K\sqrt{d}}{Kd}\sqrt{K} \Bigl(\sum_{\tau=1}^K\Tr(G_\tau^2)\Bigr)^{1/2}\nonumber\\
        &\le\frac{2 c_K\sqrt{d}}{Kd}\sqrt{K} \Bigl(\sum_{\tau=1}^K\Tr(H_\tau)\Bigr)^{1/2}
        =\frac{2 c_K\sqrt{d}}{Kd}\sqrt{K} \sqrt{Kd \bar{\mac{T}}}\nonumber\\
        &=\frac{2}{\sqrt{K}} \sqrt{\bar{\mac{T}}},
        \label{eq:qact-Cbar-bound}
    \end{align}
    where we used $c_K=1/\sqrt{K}$ and $\sum_\tau\Tr(H_\tau)=K d \bar{\mac{T}}$
    from~Eq.~\eqref{eq:qact-T-C}.

    \emph{Step 6: AM--GM and the floor.}
    By the elementary inequality $2 a b\le a^2+b^2$ (i.e., $(a-b)^2\ge 0$) with
    $a=\sqrt{\bar{\mac{T}}}$ and $b=1/\sqrt{K}$,
    \begin{equation}
        |\bar{\mac{C}}|\le\frac{2}{\sqrt{K}}\sqrt{\bar{\mac{T}}}\le\bar{\mac{T}}+\frac{1}{K}.
        \label{eq:qact-AMGM}
    \end{equation}
    Substituting into~Eq.~\eqref{eq:qact-Mbar-decomp},
    \begin{equation}
        \bar{\mac{M}}=\bar{\mac{T}}-\bar{\mac{C}}+1 \ge\bar{\mac{T}}-\Bigl(\bar{\mac{T}}+\frac{1}{K}\Bigr)+1 =1-\frac{1}{K},
    \end{equation}
    which is~Eq.~\eqref{eq:qact-floor}. Multiplying by $w^2$ gives the scalar weight floor
    $(1-1/K)w^2$; the matrix case follows by direct sum over $\lVert W\rVert_F^2$ independent
    slots.

    The bound is intentionally stated as a conservative single-copy fixed-readout floor; it is
    sufficient for our use here and is not a lower bound against arbitrary collective
    measurements on multiple independent copies.

    \emph{Step 7: quantum-matched upper bound.}
    Given $S$ independent identically prepared copies of $\rho_b$
    (Assumption~\ref{ass:fresh-copy}), the quantum receiver measures the matched observable
    $A_\tau$ on each copy and averages. By~Eq.~\eqref{eq:mean-one-shot} (with $c_K=1/\sqrt{K}$),
    each per-shot outcome has mean $c_K b_\tau$ and variance $1-c_K^2=1-1/K$. Independence of
    the shots yields the empirical mean $\bar m$ with $\mab{E}[\bar m]=c_K b_\tau$ and
    $\Var(\bar m)=(1-1/K)/S$. The unbiased affine decoder
    $\hat b_\tau\coloneqq\bar m/c_K=\sqrt{K} \bar m$ satisfies $\mab{E}[\hat b_\tau]=b_\tau$ and
    $\Var(\hat b_\tau)=(K-1)/S$, so the per-context weight-output MSE is
    $w^2 (K-1)/S=O(1/S)$.
\end{proof}

\section{Pseudocode}\label{app:pseudocode}

This appendix collects the two algorithmic primitives of QRAQ: calibration, which solves the per-context sign matrices and scales from calibration data; and inference, which prepares the qubit states and measures them.

\begin{algorithm}[H]
    \caption{QRAQ calibration under signed per-row scales}
    \label{alg:calibrate}
    \begin{algorithmic}[1]
        \Require Full-precision weight $W\in\mab{R}^{N\times M}$; per-context calibration samples $\{X_\tau^{(s)}\}$; context prior $\pi$; noise coefficient $\nu_K(\eta)$; shot count $S$.
        \Ensure Per-context sign matrices $\{B^{(\tau)}\}$ and signed row scales $\{\mu_{\tau,i}\}$.
        \State Form empirical covariances $\hat{\Sigma}_\tau=\frac{1}{N_\tau}\sum_s X_\tau^{(s)}(X_\tau^{(s)})^\top$.
        \For{$i=1,\dots,N$}
            \For{$\tau=1,\dots,K$}
                \State Solve the row-wise binary quadratic subproblem
                \begin{equation}
                    B^{(\tau)}_{i,:}\leftarrow \argmin_{b\in\{\pm 1\}^M}\min_{\mu\in\mab{R}} \pi_\tau\left[ (W_{i,:}-\mu b)\hat{\Sigma}_\tau(W_{i,:}-\mu b)^\top +\frac{\nu_K(\eta)}{S}\mu^2\sum_j(\hat{\Sigma}_\tau)_{jj} \right].
                \end{equation}
                \State Set $\mu_{\tau,i}\leftarrow \frac{B^{(\tau)}_{i,:}\hat\Sigma_\tau W_{i,:}^\top}{B^{(\tau)}_{i,:}\hat\Sigma_\tau (B^{(\tau)}_{i,:})^\top+(\nu_K(\eta)/S)\sum_j(\hat\Sigma_\tau)_{jj}}$.
            \EndFor
        \EndFor
        \State Return $\{B^{(\tau)}\}_{\tau=1}^K$ and $\{\mu_{\tau,i}\}_{\tau,i}$.
    \end{algorithmic}
\end{algorithm}

For tensor, column, group, or row-times-column scale classes, Algorithm~\ref{alg:calibrate} should be replaced by the corresponding grouped calibration problem. The row-wise loop above is mathematically exact only for signed per-row scales, which is the main regime of Theorem~\ref{thm:main-formal}.
\begin{algorithm}[H]
    \caption{QRAQ inference under context $\tau$}
    \label{alg:inference}
    \begin{algorithmic}[1]
        \Require Pre-calibrated $\{B^{(\tau)}\}$ and $\{\mu_\tau\}$; context label $\tau\in[K]$; input activation $X\in\mab{R}^{M\times T}$; one logical QRAC register per weight entry (one qubit for $K\le 3$, $n=\lceil(K-1)/2\rceil$ qubits in general) prepared as~Eq.~\eqref{eq:two-context-qrac} or~Eq.~\eqref{eq:K-context-qrac}; shot count $S$; depolarizing parameter $\eta\in(0,1]$; matched observable $A_\tau$.
        \Ensure Quantized output $\hat{Y}_\tau\in\mab{R}^{N\times T}$.
        \For{$(i,j)\in[N]\times[M]$}
        \State Measure logical register $(i,j)$ in observable $A_\tau$ for $S$ independent shots; average the outcomes to $\bar{m}^{(\tau)}_{ij}\in[-1,1]$.
        \State Set $\hat{W}^{\mathrm{Q}}_{\tau,ij}\leftarrow\bigl(\mu_{\tau,i}/(\eta c_K)\bigr)\bar{m}^{(\tau)}_{ij}$.
        \EndFor
        \State Return $\hat{Y}_\tau=\hat{W}^{\mathrm{Q}}_\tau X$.
    \end{algorithmic}
\end{algorithm}

Algorithm~\ref{alg:calibrate} is a finite-dimensional combinatorial problem of the same shape as classical one-bit calibration plus a scalar noise regulariser; it can be solved either exactly for small $M$, via its closed form in Corollary~\ref{cor:closedform-formal}, or via standard integer-programming heuristics for large $M$. Algorithm~\ref{alg:inference} is essentially the classical one-bit inference pipeline with a Pauli measurement in place of a bit read.

\section{Experiment details}\label{app:experiments}

This appendix gives the simulator details behind Section~\ref{sec:experiments}. The experiments are designed to isolate one theoretical prediction at a time: sign-disagreement margins, finite-shot variance, multi-context QRAC scaling, and noise robustness. The simulator used for the figures is stand-alone Python and includes theorem-to-code consistency tests for the closed-form gap, row additivity, finite-shot risk, and Pauli-noise coefficients.

\paragraph{Setup.}
Weights are i.i.d.\ standard Gaussians, $W\in\mab{R}^{N\times M}$, $W_{ij}\sim\mac{N}(0,1)$. The $K$ per-context activation covariances follow the shared-factor Wishart model
\begin{equation}
    G_\tau = \rho F_0 + \sqrt{1-\rho^2} F_\tau,\qquad F_0,F_1,\dots,F_K\overset{\mathrm{iid}}{\sim}\mac{N}(0,I_{M\times M}),\qquad \Sigma_\tau = \tfrac{1}{M}G_\tau G_\tau^\top+\epsilon I_M,
    \label{eq:exp-cov-model}
\end{equation}
with eigenvalue floor $\epsilon=10^{-3}$ and uniform prior $\pi_\tau=1/K$. At $|\rho|=1$, all $\Sigma_\tau$ collapse to the same matrix, so the sign-disagreement condition fails by construction. The deterministic covariance family $\Sigma_\pm=I\pm r(J-I)$ is used only for the closed-form unit test of Corollary~\ref{cor:closedform-formal}.

\paragraph{Measurement and baselines.}
The encoder~Eq.~\eqref{eq:K-context-qrac} is implemented for arbitrary $K$ using the Jordan--Wigner observables of Appendix~\ref{subapp:tensor-jordan}, with $n=\lceil(K-1)/2\rceil$ qubits per weight. Depolarizing noise is evaluated through the analytic variance coefficient in~Eq.~\eqref{eq:noise-stats}; Monte-Carlo runs draw $S$ independent Pauli outcomes from the exact binary probabilities. The classical baseline is the optimal signed per-row one-bit quantizer~Eq.~\eqref{eq:row-classical}, solved by exhaustive enumeration over $\{\pm1\}^M$ for $M\le12$.

\paragraph{Metrics.}
The primary metric is the relative population gap $(\mathcal{E}^{\mathrm{row}\pm}_{\mathrm{C}}-\mathcal{E}^{\mathrm{row}}_{\mathrm{Q}})/\mathcal{E}^{\mathrm{row}\pm}_{\mathrm{C}}$. Population-gap experiments evaluate the covariance-level objective at $S=\infty$ and $\eta=1$. Finite-shot experiments report both analytic risks from~Eq.~\eqref{eq:Q-row-S} and Monte-Carlo averages over Pauli outcomes. The transformer-head diagnostic reports the entrywise classical-sign agreement rate, which tracks but does not replace the row-wise (D1) certificate.

\begin{figure}[H]
    \centering
    \includegraphics[width=\linewidth]{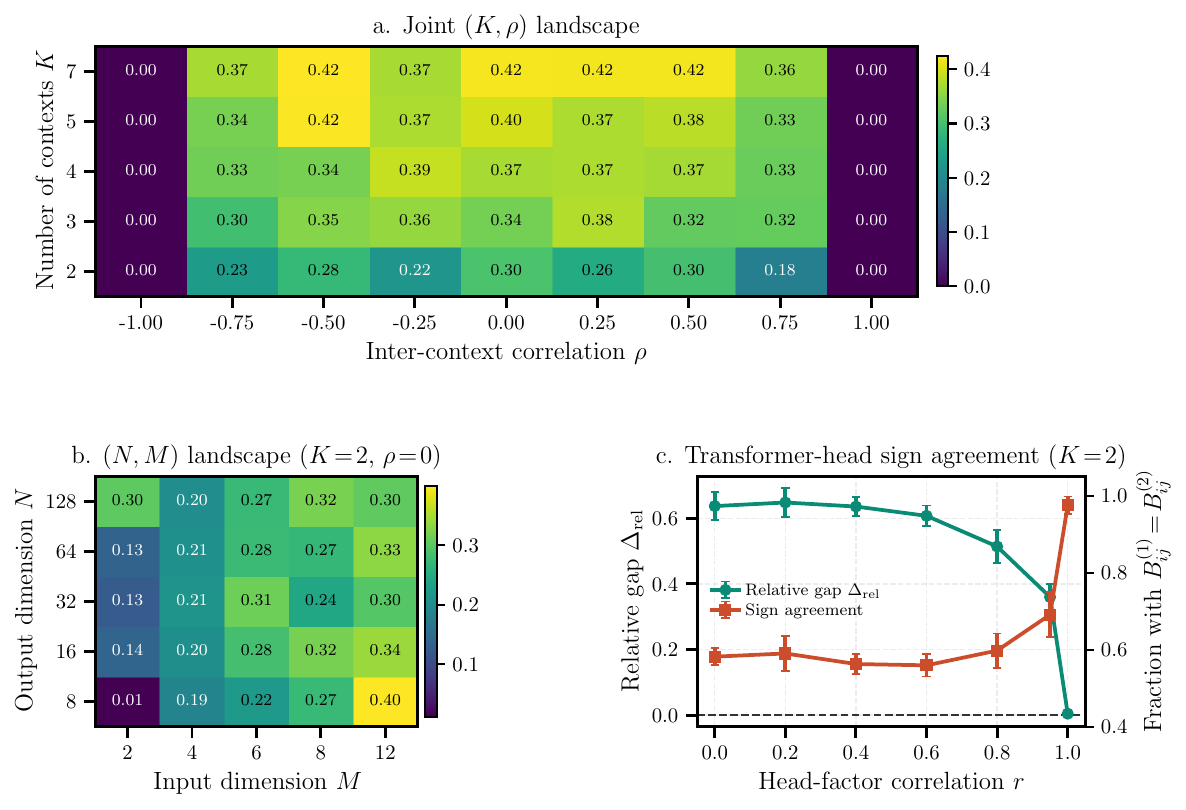}
    \caption{Additional scaling sweeps. Panels show the $(K,\rho)$ landscape, the $(N,M)$ landscape, and transformer-head sign agreement versus relative gap.}
    \label{fig:scaling}
\end{figure}

\paragraph{Results.}
Figure~\ref{fig:scaling} extends the main experiments. The $(K,\rho)$ sweep shows that the relative population gap vanishes at $|\rho|=1$ and grows with $K$ when contexts remain distinct; the $K=7$ row reaches about $42\%$ at $\rho=0$. The $(N,M)$ sweep is essentially flat along $N$ and increases with $M$, matching row additivity and the larger space of possible sign disagreements. The transformer-head diagnostic shows the gap falling as sign agreement rises, with Spearman rank correlation $r_s=-0.99$ in the reported sweep. The closed-form sanity test instantiates five hand-computed triples $(w_1,w_2,r)$ and matches Eq.~\eqref{eq:closedform} to tolerance $10^{-12}$.

\subsection{Additional robustness experiments}\label{subapp:additional-experiments}

The following audits change one assumption at a time while keeping the signed per-row baseline and reconstruction metric fixed.

\paragraph{Universality across weight distributions.}
Theorem~\ref{thm:main-formal} treats $W$ as deterministic, so Gaussian weights are not required. We replay the canonical $K=4$, $\rho=0$, $(N,M)=(24,6)$ configuration with unit-variance Gaussian, Laplace, uniform, and rescaled Student-$t_3$ weights. The relative population gap remains positive for all draws, with mean gaps $38.8\%$, $44.6\%$, $32.5\%$, and $43.8\%$ over $10$ seeds. Finite-shot gaps at $S=512$ track these population values within sampling error. The interpretation is simple: heavier-tailed rows tend to amplify the context-specific alignment term $w^\top\Sigma_\tau b$, while the strict separation itself does not rely on a weight-distribution assumption.

\paragraph{Per-context channel anisotropy.}
Theorem~\ref{thm:pauli-noise} replaces the common depolarizing coefficient with per-context coefficients $\nu_\tau=K/\eta_\tau^2-1$. To emphasize this calibration, we fix $K=4$, $S=512$, and the mean fidelity $\bar\eta=0.65$, then sweep a linear spread such that the largest setting is $(\eta_\tau)=(0.35,0.55,0.75,0.95)$. The asymmetric protocol agrees with the symmetric reference at zero spread and remains positive over the full sweep, with at most a four-percentage-point loss. This matches the convex dependence of $\nu_\tau$ on $\eta_\tau$: spreading fidelities at fixed arithmetic mean increase average shot noise, but the tested margins remain above zero.

\begin{figure}[H]
    \centering
    \includegraphics[width=0.78\linewidth]{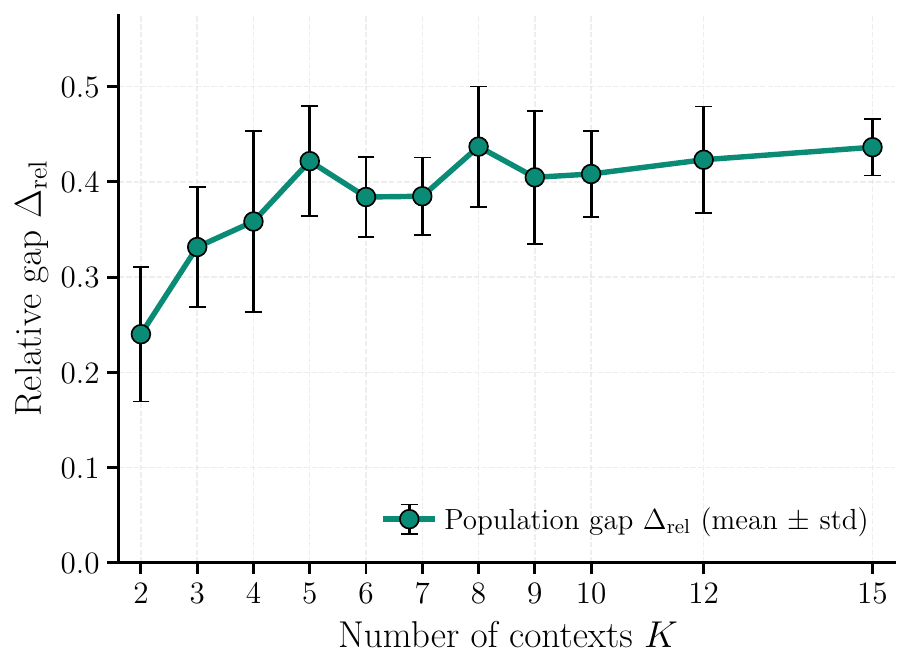}
    \caption{Large-$K$ scaling on the shared-factor Wishart model. The relative population gap grows from $K=2$ to $K=15$ and then begins to saturate in the tested regime.}
    \label{fig:appendix-robustness}
\end{figure}

\paragraph{Large-$K$ scaling.}
Theorem~\ref{thm:multi-qubit} guarantees representability for $K\le2n+1$, but it does not imply that the gap must increase with $K$ on every instance. In the shared-factor Wishart model, Figure~\ref{fig:appendix-robustness} shows monotone growth and saturation: for $\rho=0$ and $(N,M)=(24,4)$, the mean relative gap rises from $24.0\%\pm7.1\%$ at $K=2$ to $43.6\%\pm3.0\%$ at $K=15$. The standard deviation shrinks as $K$ grows, consistent with averaging over more independent context covariances. At $K=15$, the logical QRAC register uses $7$ qubits per weight rather than $15$ separate context-specific sign bits, excluding the physical preparations required for finite-shot readout.

\end{document}